\documentclass[aps,prx,twocolumn,superscriptaddress,amsmath,amssymb,longbibliography,nofootinbib]{revtex4-2}

\usepackage{amsmath,amsfonts,amssymb,amsthm,bbm,physics}
\usepackage{graphicx}   % need for figures
\usepackage{bbold}
\usepackage{mathtools} % math equal, inclusion arrows

\usepackage{dsfont}
\usepackage[dvipsnames]{xcolor}
\usepackage{colortbl}
\usepackage{cancel}

\usepackage{booktabs}
\usepackage{multirow}
\usepackage{algorithm}
\usepackage{algpseudocode}

\newcommand\dgg{^{\dagger}}
\newcommand\T{\mathsf{T}}
\newcommand\Pf{\mathrm{Pf}}
\renewcommand\det{\mathrm{Det}}
\newcommand\id{\mathbbm{1}}

\newcommand\hilb{\mathcal{H}}

\newcommand{\vol}{\textrm{vol}}

\DeclarePairedDelimiterX{\infdivx}[2]{(}{)}{%
  #1\;\delimsize\|\;#2%
}
\DeclareSymbolFont{Letters} {U}{zeur}{m}{n}% Euler
\newcommand{\KL}{H\infdivx}
\newcommand\av[1]{\left\langle #1 \right\rangle}

\newcommand{\inv}{^{\!\text{-}\hspace{-0.5pt}1}}
\newcommand{\numinv}[1]{^{\text{-}\hspace{-0.5pt}#1}}

\newcommand{\U}{\mathrm{U}}

\newcommand{\SO}{\mathrm{SO}}
\newcommand{\Sp}{\mathrm{Sp}}

\renewcommand{\T}{\textsf{\scriptsize T}}
\newcommand{\R}{\mathbb R}
\newcommand{\C}{\mathbb C}

\newcommand{\x}{\vb x}
\newcommand{\y}{\vb y}
\newcommand{\z}{\vb z}
\newcommand{\s}{\vb s}
\renewcommand{\S}{\mathcal{S}}
\newcommand{\E}{\mathbb{E}}
\newcommand{\Var}{\mathrm{Var}}
\newcommand{\dirich}[1]{\mathcal{E}(#1)}

\newcommand{\pas}{_{\textrm{pas}}}

\definecolor{strcyan}{RGB}{33,140,141}
\definecolor{bluegrey}{RGB}{96,125,139}
\newcommand{\tc}[1]{{\color{magenta} TC[#1]}}

\usepackage{thmtools, thm-restate}
\newtheorem{definition}{Definition}
\newtheorem{corollary}{Corollary}
\newtheorem{theorem}{Theorem}
\newtheorem{lemma}{Lemma}
\newtheorem{proposition}{Proposition}
\newtheorem{remark}{Remark}

\usepackage[breaklinks=true]{hyperref}
\hypersetup{
	colorlinks   = true, %Colours links instead of ugly boxes
	urlcolor     = blue, %Colour for external hyperlinks
	linkcolor    = blue, %Colour of internal links
	citecolor   = magenta %Colour of citations
}

\makeatother

\begin{document}

\renewcommand{\a}{\alpha}
\renewcommand{\b}{\beta}
\newcommand{\I}{\mathcal I} 
\newcommand{\Itil}{\widetilde{\mathcal{I}}}

\newcommand{\vsto}[1][3pt]{\hspace{-1.5pt}\mathrel{%
   \vcenter{\hbox{\rule[-.2pt]{#1}{.4pt}}}%
   \mkern-4mu\hbox{\usefont{U}{lasy}{m}{n}\symbol{41}}}\hspace{-2pt}} % very short \to arrow

% ---------------------------------------------------------------- edit flags
\newif\ifedits
\editstrue                 % <-- \editsfalse hides every annotation
\definecolor{ccblue}{RGB}{0,90,160}
\definecolor{ccmag}{RGB}{190,0,140}
% REVTeX uppercases section titles, which uppercases the colour name too;
% defining the alias lets \tc{} be used inside \section{}.
\definecolor{CCMAG}{RGB}{190,0,140}
\definecolor{ccorr}{RGB}{0,60,210}
\definecolor{CCORR}{RGB}{0,60,210}
\ifedits
  \newcommand{\CC}[1]{{\color{ccblue}\footnotesize\sffamily
    [\textbf{ed:}~#1]}}
  % \tc{...} = content added in the magic/non-Gaussianity round (magenta);
  % \tcn{...} = the accompanying edit note.
  %\newcommand{\tc}[1]{{\color{ccmag}#1}}
  \newcommand{\tcn}[1]{{\color{ccmag}\footnotesize\sffamily
    [\textbf{new:}~#1]}}
  % \bc{...} = content corrected in the referee-comment round (blue);
  % \bcn{...} = the accompanying correction note.
  \newcommand{\bc}[1]{{\color{ccorr}#1}}
  \newcommand{\bcn}[1]{{\color{ccorr}\footnotesize\sffamily
    [\textbf{corr:}~#1]}}
\else
  \newcommand{\CC}[1]{}
  \newcommand{\tc}[1]{#1}
  \newcommand{\tcn}[1]{}
  \newcommand{\bc}[1]{#1}
  \newcommand{\bcn}[1]{}
\fi

% APS double-struck identity (neither bbm, bbold nor dsfont is guaranteed;
% \openone comes with revsymb and is the APS-sanctioned symbol).
%\newcommand{\id}{\openone}

\title{Certifying fermionic Gaussian states (and a little more) with optimal precision dependence}

\author{Ninnat Dangniam}
\email{ninnatdn@gmail.com}
\author{Laphas Premcharoen}
\email{laphas173@gmail.com}
\author{Metrasit Sripech}
%\email{}
\affiliation{
The Institute for Fundamental Study (IF),
Naresuan University,
Phitsanulok 65000, Thailand}

\author{Thiparat Chotibut}
\email{thiparatc@gmail.com}
\affiliation{
Chula Intelligent and Complex Systems Center of Excellence, Department of Physics,
Faculty of Science, Chulalongkorn University, Bangkok 10330, Thailand}
\affiliation{
School of Physical and Mathematical Sciences, Nanyang Technological University, Singapore}

\begin{abstract}
Fermionic Gaussian states are the workhorse reference states for qubit-based quantum simulation of fermionic systems, yet existing certification protocols either proceed via fidelity estimation, leading to suboptimal sample complexity in the target precision $\epsilon$, only apply to Haar-typical states, or require adaptive measurements on a large fraction of qubits.
We give an adaptive protocol that certifies any $d$-mode pure fermionic Gaussian state using $O(d^2\epsilon^{-1})$ copies, single-qubit measurements with only one qubit measured adaptively per copy, and $O(d^3)$ classical processing time per copy.
The $\epsilon^{-1}$ dependence is optimal for the certification problem, 
even among strategies using entangled measurements.
The sample complexity is controlled by the spectral gap of a $d\leftrightarrow d-2$ down-up walk---a Markov chain studied in the theory of high-dimensional expanders, equivalent to a two-site Glauber dynamics in statistical physics---relating certification efficiency to relaxation time to equilibrium.
The worst-case bound is tight for this protocol and is attained by physically relevant states, including ground states of the fully dimerized Su--Schrieffer--Heeger (SSH) chain and BCS pair states.
In contrast, numerics up to $d=14$ suggest that $O(d \epsilon^{-1})$ copies suffice to certify Haar-typical Gaussian states,  a factor of $d$ improvement over the worst-case bound.
%We numerically observe up to $d=14$ that Haar-typical Gaussian states can be certified using $\Theta(d \epsilon^{-1})$ copies, a factor of $d$ improvement over the worst-case bound. 
Finally, because the bound depends only on the target's computational-basis distribution, the protocol extends to \emph{efficiently phase-dressed} states---states obtained by injecting an efficiently computable diagonal phase to Gaussian states---with the same sample complexity. The class includes a continuous family of four-mode non-Gaussian magic states for matchgate computation.
\end{abstract}

\maketitle

\tableofcontents 

%-------------------------------------
\section{Introduction}\label{sec:intro}
%-------------------------------------
 
Every quantum device that claims to have prepared a particular state invites the question of how one would know. The task of \emph{quantum state certification}, testing whether a prepared state is close to a given target state, is fundamental to benchmarking quantum devices, validating the outcomes of experiments, and verifying that the correct quantum computation has been performed \cite{eisert2020quantum,kliesch2021theory}. Dedicated certification tests have been developed across physical platforms, from photonic and bosonic Gaussian states \cite{aolita2015photonic} to fermionic simulations \cite{gluza2018fidelity}.
Unlike full quantum state tomography, whose sample complexity grows exponentially with the number of qubits $d$ \cite{haah2016sample}, certification can be dramatically more resource-efficient. For certain structured families such as stabilizer states, which can be highly entangled, certification can be carried out using only single-qubit measurements and a number of samples independent of $d$ at fixed precision and confidence \cite{flammia2011direct,dasilva2011practical,pallister2018optimal}.

Existing approaches can be organized into three families, distinguished by their underlying goals and assumptions. \emph{Fidelity estimation} \cite{flammia2011direct,dasilva2011practical}, including shadow estimation \cite{huang2020predicting}, quantifies how far a prepared state is from the target state. It is naturally robust to imperfect preparations, but fundamentally it solves a parameter estimation problem rather than certification, requiring $N=\Theta(\epsilon^{-2})$
samples to achieve infidelity $\epsilon$.
\emph{Quantum state verification} (QSV) \cite{pallister2018optimal,zhu2019framework} instead applies a binary test that the target passes with certainty and that any state at infidelity $\epsilon$ fails with non-negligible probability; accepting only on $N$ consecutive passes certifies infidelity $\epsilon$ with $N=O(\epsilon^{-1})$ copies.
The $\epsilon^{-1}$ dependence is optimal for the verification problem, as it matches the scaling attained by simply projecting onto the target, which is optimal even among protocols using entangled measurements.
A third and more recent family, \emph{oracle-based single-qubit certification}, certifies Haar-typical or arbitrary pure states from single-qubit measurements, given classical query access to amplitudes of the target \cite{huang2025certifying,gupta2026few,coladangelo2026robust}. 

Pure fermionic Gaussian states (fGSs)---the ground states of quadratic fermionic Hamiltonians, equivalently the outputs of matchgate or fermionic linear-optical circuits \cite{valiant2002quantum,knill2001fermionic,terhal2002fermion}---are a natural target class for an efficient certificate, being both physically central and classically tractable. 
Slater determinants, the number-conserving subclass of fGSs, are the reference states of variational quantum chemistry and quantum algorithms for the Fermi--Hubbard model \cite{wecker2015solving}, while general fGSs including Bardeen--Cooper--Schrieffer (BCS) states serve the same role for pairing Hamiltonians \cite{jiang2018quantum}. 
Under the Jordan--Wigner transformation, ground states of the transverse-field Ising and XY chains are fGSs, making them standard benchmark states for spin-based simulators. 
fGSs admit preparation by linear-depth Givens-rotation circuits \cite{kivlichan2018quantum,jiang2018quantum} that are now routinely implemented on physical hardware. For example, Google's superconducting-qubit studies of the Fermi--Hubbard model and quantum chemistry operate at the scale of tens of qubits \cite{arute2020observation,google2020hartree},
and cold-atom simulators realize the free-fermion limit of the Hubbard model \cite{tarruell2018quantum}.
The fact that fGSs can be efficiently prepared does not render certification superfluous, but rather more pressing. For instance, the Google Hartree--Fock experiment \cite{google2020hartree} requires substantial error-mitigation in the form of post-hoc correction. Without state certification, an undetected error in the reference state propagates downstream, potentially invalidating the entire simulation run.

Existing approaches to certifying fGSs trade off along multiple axes: sample complexity, measurement adaptivity, classical processing cost, and the class of states to which they apply.
Fidelity witnesses for fermionic simulations \cite{gluza2018fidelity} are robust and use only non-adaptive single-qubit measurements, but are estimation-based and require $O(d^3 \epsilon^{-2})$ samples. 
The shadow-overlap protocol of Huang, Preskill, and Soleimanifar \cite{huang2025certifying} certifies all but an exponentially small fraction of states with $O(d^2\epsilon^{-1})$ samples and single-qubit measurements, but its typicality guarantee does not extend to the structured family of Gaussian states.
The decision-tree protocol of Gupta, He, and O'Donnell \cite{gupta2026few} removes the typicality assumption, being able to certify every pure state with $O(d\epsilon^{-1})$ copies, but requires oracle access to outcome probabilities in arbitrary product bases and measures every qubit adaptively. 
We also note that the task we are considering is different from that of \emph{Gaussianity testing}---deciding whether a prepared state is close to some fGS or far from \emph{every} fGS, a membership testing problem---which has seen significant progress in recent years \cite{lyu2024convolution,coffman2025measuring,bittel2025trace-distance,mele2025few_fermionic,haug2026practical,leone2026fermionic} but concerns a different question.

We ask whether, for \emph{all} fermionic Gaussian states, the optimal $\epsilon^{-1}$ precision scaling, single-qubit measurements, and efficient classical processing can be achieved simultaneously, and answer in the affirmative.

\paragraph*{Results.} 
We give a protocol (Algorithm~\ref{algo}) 
that certifies any $d$-mode pure fermionic Gaussian state of definite parity to infidelity $\epsilon$ with confidence $1-\delta$ using (Theorem~\ref{thm:main})
\begin{align}\label{eq:intro-samples}
    N(\epsilon,\delta,d) \le \left\lceil \frac{d(d-1)}{\epsilon}\ln\frac{1}{\delta} \right\rceil 
\end{align}
copies, single-qubit measurements only, and $O(d^{3})$ classical processing per copy. 
The protocol randomizes between all-Pauli-$Z$ measurements, which certify that the state lies in a sector of fixed fermionic parity, and an adaptive test measuring $d-2$ qubits in the computational basis, one qubit in the $X$ basis, and the final qubit in a basis determined by the preceding outcomes. 

The $d$ dependence solely comes from the spectral gap of the so-called verification operator, for which we prove a lower bound and exhibit an explicit family of Gaussian states that saturates it with matching constant.
Exact numerics up to $d=14$ (Fig.~\ref{fig:gap}) corroborate the bound and indicate that it is pessimistic by a factor of $d$, i.e. $N\approx O(d)$, for Haar-random Gaussian states.

The bound depends on the target state only through its computational-basis distribution. Hence, our protocol in fact certifies with the same efficiency a strictly larger class than the Gaussian states (Corollary~\ref{cor:dressed}) that we call \emph{efficient phase-dressed states} (Definition~\ref{def:dressed}), obtained by adding efficiently computable relative phases to Gaussian states. The class includes, for instance, the continuous family of non-Gaussian states $(\ket{0000}+\ket{0011}+\ket{1100}+e^{i\theta}\ket{1111})/2$, $\theta \neq 0 \pmod{2\pi}$, which are magic states for matchgate computation \cite{hebenstreit2019all}.

The remainder of the paper is organized as follows. 
Sec.~\ref{sec:related} compares our result with prior work. Section~\ref{sec:background} fixes notation and recalls the verification framework and the Gaussian-state formalism for fermions. Sec.~\ref{sec:results} presents the protocol and main results, and Sec.~\ref{sec:conclusion} provides a summary and discusses open problems. Lengthy proofs are deferred to the appendices.

\begin{table*}[t]
	\centering
\setlength{\tabcolsep}{0pt}
\renewcommand{\arraystretch}{1.3}
\begin{tabular}{|>{\hspace{6pt}}c<{\hspace{6pt}}|
    >{\hspace{6pt}}c<{\hspace{6pt}}|
    >{\hspace{6pt}}c<{\hspace{6pt}}|
    >{\hspace{6pt}}c<{\hspace{6pt}}|
    >{\hspace{6pt}}c<{\hspace{6pt}}|}
\hline
\rowcolor{gray!30} 
\textbf{Approach} 
& \textbf{Applied to which (pure) states?}
& \textbf{Adaptive?} 
& \textbf{Oracle?}
& \textbf{Sample complexity} \\
\hline

%===============================
\multirow{2}{*}{Fidelity witness \cite{gluza2018fidelity}} 
& Gaussian 
& \multirow{2}{*}{No} 
& \multirow{2}{*}{No} 
& $O(d^3\epsilon^{-2})${\color{blue}${}^*$} \\

& \cellcolor{gray!30}  Gaussian (gapped Hamiltonian) 
&&
& \cellcolor{gray!30}  $O(d^2 \log^2 (d)\,\epsilon^{-2})$ \\
\hline

%===============================
\multirow{2}{*}{Shadow overlap \cite{huang2025certifying}}
& \multirow{2}{*}{Almost all} 
& No 
& \multirow{2}{*}{Computational basis}  
& $O(d^4\epsilon^{-2})$ \\

&& \cellcolor{gray!30} Single qubit
&
& \cellcolor{gray!30} $O(d^2 {\color{red}\epsilon^{-1}})$ \\
\hline

Decision-tree basis \cite{gupta2026few}
& All 
& $d-1$ qubits 
& All product bases 
& $O(d \,{\color{red}\epsilon^{-1}})$ \\
\hline

\rowcolor{gray!30}
\textbf{QSV (This work)} 
& Gaussian + Phase-dressed 
& Single qubit 
& No
& $O(d^2 {\color{red}\epsilon^{-1}})$ \\
\hline

\end{tabular}
        \caption{Comparison of representative sample-efficient certification protocols using only single-qubit measurements. The table compares protocols based on fermionic fidelity witness \cite{gluza2018fidelity}, shadow overlap \cite{huang2025certifying}, decision-tree basis \cite{gupta2026few}, and our work. 
        The second column specifies the class of pure states for which the protocol provides certification: ``Gaussian" means fermionic Gaussian states, ``gapped Hamiltonian'' indicates Gaussian states that are unique ground states of gapped local Hamiltonians, and ``almost all'' means that the protocol only fails on an $O(2^{-d})$ fraction of pure states. 
        The third column indicates whether the protocol is adaptive and, if so, the number of qubits requiring adaptive measurements. 
        The fourth column indicates whether the protocol requires an amplitude oracle, and if so, in which basis the amplitudes are required.
        The final column summarizes the sample complexity dependence on the number of qubits $d$ and the target infidelity $\epsilon$. All protocols are probabilistic and require $O(\ln\delta^{-1})$ copies to certify the state to fidelity at least $1-\epsilon$ except with failure probability at most $\delta$.
        Red text indicates optimal $\epsilon^{-1}$ scaling, matching that of the best entangled strategy, namely, performing the measurement that directly projects onto the target state. 
        The two shadow-overlap rows are the two variants of Theorem~1 of Ref.~\cite{huang2025certifying}: non-adaptive single-qubit Pauli measurements at $O(\tau^{2}\epsilon^{-2})$, and general (adaptive) single-qubit measurements at $O(\tau\epsilon^{-1})$, with relaxation time $\tau=O(d^{2})$ for almost all pure states.
        }
        ${}^*${This entry is our improved analysis of the witness of Ref.~\cite{gluza2018fidelity}, obtained in Sec.~\ref{sec:related}; the bound stated in Ref.~\cite{gluza2018fidelity} is $O(d^{4}\epsilon^{-2})$.}
		\label{table:comparison}
\end{table*} 

%-------------------------------------
\section{Related works}\label{sec:related}
%-------------------------------------

Our protocol relates to three lines of work: fidelity witnesses for fermionic simulations, the quantum state verification (QSV) framework, and the recent oracle-based certification protocols based on single-qubit measurements. Table~\ref{table:comparison} summarizes the quantitative comparison targeting the class of fermionic Gaussian states. Below we discuss related work in the three families.

\paragraph*{Fidelity witnesses.} The fidelity witness of Gluza \emph{et al.}\ \cite{gluza2018fidelity} estimates the fidelity to a $d$-mode target fGS $\ket{\psi}$ with covariance matrix $\Gamma(\psi)$ (defined in Sec.~\ref{sec:gauss}) to precision $\epsilon$ and confidence $1-\delta$ using
\begin{align}
    N \le \left\lceil \frac{\norm{\Gamma(\psi)}_1^2}{2\epsilon^{2}} \ln\frac{2}{\delta} \right\rceil
\end{align}
samples, where
\begin{align}
    \norm{\Gamma(\psi)}_1 \coloneqq \sum_{1\le j<k\le 2d} \abs{\Gamma_{jk}(\psi)} 
\end{align}
is the entry-wise $\ell_1$ norm of the vector consisting of the strictly upper triangular elements of $\Gamma$. Ref.~\cite{gluza2018fidelity} states their worst-case bound as $N = O(d^4\epsilon^{-2}\ln \delta^{-1})$ by bounding each entry individually: $\abs{\Gamma_{jk}} \le 1$. However, the bound can be improved to $N = O(d^3\epsilon^{-2}\ln \delta^{-1})$ by
applying a standard norm inequality, $\norm{u}_1 \le \sqrt{M} \norm{u}_2$,
for vectors of length $M$. 
Here,
\begin{align}
    \norm{\Gamma}_1^2 \le \binom{2d}{2} \norm{\Gamma}_2^2,
\end{align}
and, since $\Gamma$ is antisymmetric with $\Gamma^{\T}\Gamma = \id_{2d}$ for a pure fGS,
\begin{align}
    \norm{\Gamma}_2^2 = \frac{1}{2} \Tr(\Gamma^{\T}\Gamma) = d.
\end{align}
Consequently, $\norm{\Gamma}_1^2 \le \binom{2d}{2} d = O(d^3)$.
In the most favorable regime of exponentially decaying correlations, as in unique ground states of gapped local quadratic Hamiltonians, their bound improves to $O(d^{2}\log^{2}d\epsilon^{-2})$ \cite{gluza2018fidelity}. Our protocol therefore improves the dimension dependence by a factor of $\log^{2}d$ against their most favorable case, and by $d$ in the worst case. 
Being estimation-based, the fidelity witness is robust to imperfect preparations, but incurs the $\Theta(\epsilon^{-2})$ cost of estimating an expectation value, against our $O(\epsilon^{-1})$. We discuss robustness of general QSV protocols in Sec.~\ref{sec:conclusion}.

\paragraph*{Verification of phased Dicke states.} Within QSV, the work that superficially overlaps with ours is the verification of phased Dicke states \cite{zihao2021PhasedDicke}, which certifies antisymmetric basis states of the form
\begin{align}
    \tfrac{1}{\sqrt{n!}}\sum_{j_1,\dots,j_n}\tilde\epsilon_{j_1,\dots,j_n}\ket{j_1}\otimes\cdots\otimes\ket{j_n},
\end{align}
$\tilde\epsilon$ being the Levi--Civita symbol. These are precisely Slater determinants, Eq.~\eqref{eq:DeterminantProb}, the number-conserving pure fGS. Their protocol shares our adaptive structure, measuring $n-2$ subsystems in a fixed basis to determine the measurement on the remaining two. The measurement models, however, differ fundamentally. Their protocol operates in first quantization, with one verifier per particle acting on a $d$-dimensional single-particle Hilbert space and free to apply any $\mathrm{SU}(d)$ rotation to the measurement basis, whereas ours operates in second quantization, with one verifier per mode restricted to single-qubit basis measurements.
That is, their measurements are particle-local, while our measurements are mode-local, since indistinguishability of particles forbids particle-local measurements. 
Translated into our setting, their particle-local measurements would be entangled across all $d$ modes. The two results therefore certify the same states (in the number-conserving case) under incomparable measurement constraints.

\paragraph*{Oracle-based certification.} The two representative protocols differ in the strength of the oracle they require: Huang, Preskill, and Soleimanifar's shadow overlap protocol \cite{huang2025certifying} queries amplitudes in the computational basis only, whereas Gupta, He, and O'Donnell's decision-tree-basis protocol \cite{gupta2026few} requires probabilities in arbitrary product bases. Our protocol is closest to the shadow overlap protocol, as we leverage the fact that the conditional computational-basis amplitudes are efficiently computable for fGS \cite{valiant2002quantum,knill2001fermionic,terhal2002fermion}. 
In particular, our protocol can be thought of as a variant of the level $m=2$ shadow-overlap protocol performed within a fixed-parity subspace. We adopt the technique of Ref.~\cite{huang2025certifying} that transforms the verification operator into a Markov transition matrix, which we call the \emph{certification walk}, Eq.~\eqref{eq:certification_walk}, with differences designed around the fermionic parity superselection rule: the walk contains no single-site transitions, which would violate parity conservation, and our protocol measures the second-to-last qubit in the $X$ basis rather than $Z$. This last choice is inspired by the decision-tree-basis protocol: because of parity superselection, the $X$ basis (or any equatorial basis on the Bloch sphere) is one in which all conditional Gaussian states are phase states in the sense of Ref.~\cite{gupta2026few}. We adopt this design principle without requiring their full product-basis oracle.

The main difference between our work and Ref.~\cite{huang2025certifying} is that Haar-average-case arguments of the type used in \cite{huang2025certifying} are unavailable for fermionic Gaussian states, since they do not even form a state $2$-design on a fixed-parity sector (a standard computation; see, e.g., \cite{leone2026fermionic}). We instead establish a spectral-gap bound directly via the Pfaffian structure in the computational-basis distribution of Gaussian states.

%-------------------------------------
\section{Notation and background}\label{sec:background}
%-------------------------------------

%-------------------------------------
\subsection{Quantum state verification based on single-qubit measurements}\label{sec:qsv}
%-------------------------------------

Suppose that a laboratory produces a sequence of $d$-qubit states, $\rho_1,\ldots,\rho_N$
that are not necessarily identical, but are assumed to be uncorrelated across preparation rounds so that the joint state is
\begin{align}
\rho_1\otimes\cdots\otimes\rho_N.
\end{align}
We consider the task of certifying whether the lab state is a known pure target state $\ket{\psi}$. Given a user-defined infidelity threshold $\epsilon>0$, the certification protocol distinguishes between the null hypothesis
\begin{align}
H_0:\quad \rho_j=\dyad{\psi},
\end{align}
and the alternative that the prepared states are at least $\epsilon$ away on average from the target in terms of the fidelity. That is, by defining the infidelity of the $j$th copy as $\epsilon_j=1-\expval{\rho_j}{\psi}$, the alternative hypothesis can be expressed as
\begin{align}
H_1:\quad
\bar{\epsilon}:=\frac{1}{N}\sum_{j=1}^N\epsilon_j\geq\epsilon.
\end{align}
At the end of a certification protocol, we either accept or reject $H_0$.
There are two types of possible errors,
\begin{align}
&\Pr[\mathrm{REJECT}\mid \rho_j=\dyad{\psi}\ \forall j] \le \delta',
&&\text{type-I error},\\
&\Pr[\mathrm{ACCEPT}\mid \bar{\epsilon}\geq\epsilon] \le \delta,
&&\text{type-II error}.
\end{align}
A certification protocol is said to be \emph{complete} if $\delta'$ can be made negligible, and likewise \emph{sound} if $\delta$ can be made negligible. 

\paragraph*{Perfect completeness.}
We follow the framework for quantum state verification (QSV) by Pallister \emph{et al.}~\cite{pallister2018optimal}, (see also prior works by Hayashi \emph{et al.}~\cite{hayashi2006LOCC,hayashi2009group}) with a subsequent refinement by Zhu and Hayashi~\cite{zhu2019framework}, in which we impose the stronger requirement of \emph{perfect completeness}: the target state is accepted with certainty so that $\delta'=0$.
Let $\Omega$ denote the positive operator corresponding to the $\mathrm{ACCEPT}$ outcome. Perfect completeness requires
\begin{align}
\Omega\ket{\psi}=\ket{\psi},
\label{eq:verification_operator}
\end{align}
i.e., $\ket{\psi}$ is an eigenstate of $\Omega$ with eigenvalue 1. Such an operator is referred to as a \emph{verification operator}~\cite{zhu2019framework}. If arbitrary collective measurements were available, the optimal choice would simply be the projection onto the target, $\Omega=\dyad{\psi}$. However, implementing such an entangled $d$-qubit measurement may itself demand resources comparable to preparing $\ket{\psi}$ itself. We instead seek to make as few assumptions as possible about the available measurement capabilities.

For several classes of multi-partite states such as stabilizer states \cite{pallister2018optimal,dangniam2020optimal}, hypergraph states \cite{morimae2017hypergraph,zhu2019hypergraph}, Dicke states \cite{liu2019dicke,zihao2021PhasedDicke}, and ground states of frustration-free Hamiltonians \cite{zhu2024frustration-free}, efficient certification can be achieved using only single-qubit (or single-qudit) measurements performed sequentially on individual copies. In this setting, one may regard the $d$ qubits as being accessed by $d$ spatially separated verifiers who are allowed to communicate classically after performing their respective measurements. For a given measurement setting $j$, verifier $k$ performs a binary measurement
\begin{align}
    \left\{M_{k,j}^{(0)},\; M_{k,j}^{(1)}\right\}
    \;\coloneqq\;
    \left\{\dyad{\varphi_{k,j}},\; \id - \dyad{\varphi_{k,j}}\right\}
\end{align}
on qubit $k$. 
The verifiers agree in advance on a set $\mathcal A_j \subseteq \{0,1\}^d$ of outcome strings that count as a pass. That is, the copy passes the test if the joint outcome $\z = (z_1,\dots,z_d)$ lies in $\mathcal A_j$, corresponding to the global pass operator
\begin{align}
    E_j \;=\; \sum_{\z\in \mathcal A_j}\; \bigotimes_{k=1}^{d} M_{k,j}^{(z_k)}.
\end{align}
Generally, $E_j$ is not rank one unless $\mathcal A_j$ is a singleton.
Choosing measurement setting $j$ with probability $p_j$, the resulting verification operator is a convex combination of the individual test operators,
\begin{align}
\Omega=\sum_{j=1}^m p_j E_j,
\qquad
p_j\geq0,\quad\sum_{j=1}^m p_j=1,
\label{eq:Omega_convex}
\end{align}
where $m$ is the number of measurement settings. The certification protocol outputs ACCEPT if and only if all $N$ copies pass their tests.

For an entangled target state, a single measurement setting ($m=1$) is  insufficient to satisfy perfect completeness because the corresponding rank-one product projector $E_1$ cannot have an entangled state as an eigenvector with eigenvalue 1. Multiple measurement settings are therefore required to construct a verification operator whose 1-eigenspace contains the target state.

\paragraph*{Soundness.}
Perfect completeness alone does not guarantee that the protocol can distinguish the target from other states. In particular, any state within the 1-eigenspace of $\Omega$ orthogonal to $\ket{\psi}$ would be able to spoof the certification protocol:
the state is accepted with certainty despite being perfectly distinguishable from the target. Soundness is therefore controlled by the spectral gap of the verification operator,
\begin{align}
\gamma(\Omega)=1-\lambda_2(\Omega),
\label{eq:spectral_gap}
\end{align}
where the eigenvalues are ordered in decreasing order, $1=\lambda_1(\Omega)\geq\lambda_2(\Omega)
\geq\cdots\geq0$.

A nonzero spectral gap guarantees that every state that is not the target has a nonzero probability of being rejected.
More precisely, for a state satisfying $\expval{\rho}{\psi}\leq1-\epsilon$,
the maximum single-copy acceptance probability is
\begin{align}
\max_{\expval{\rho}{\psi}\leq1-\epsilon}\!
\Tr(\Omega\rho)
= (1-\epsilon) + \lambda_2(\Omega) \,\epsilon
= 1-\gamma(\Omega)\epsilon.
\label{eq:plm}
\end{align}
Thus, each test detects an undesirable state with probability at least $\gamma(\Omega)\epsilon$. This detection probability can be amplified arbitrarily close to unity by repeating the test. Importantly, this conclusion does not require the states produced in different rounds to be identical~\cite{zhu2019framework}. 
Since the copies are uncorrelated, the probability that all $N$ tests pass factorizes. Using Eq.~\eqref{eq:plm},
\begin{align}
\Pr[\mathrm{ACCEPT}] &= \prod_{j=1}^N\Tr(\Omega\rho_j)
\leq \prod_{j=1}^N \bigl(1 - \gamma(\Omega) \epsilon_j\bigr) \nonumber\\
&\leq \bigl(1-\gamma(\Omega)\bar{\epsilon}\bigr)^N,
\label{eq:passprob}
\end{align}
where the inequality in the last line follows from the arithmetic-geometric mean inequality. Consequently, if $\bar{\epsilon}\geq\epsilon$, the type-II error is bounded by
\begin{align}
\delta
\leq
\bigl(1-\gamma(\Omega)\epsilon\bigr)^N.
\end{align}
Solving for the critical number of copies gives
\begin{align}
N(\epsilon,\delta,\Omega) &=
\left\lceil \frac{\ln\delta} {\ln\bigl( 1-\gamma(\Omega)\epsilon\bigr)} \right\rceil
\leq
\left\lceil \frac{1}{\gamma(\Omega)\epsilon} \ln\frac{1}{\delta} \right\rceil.
\label{eq:Ntests}
\end{align}
Thus, with probability at least $1-\delta$, the protocol certifies that the average infidelity satisfies $\bar{\epsilon}<\epsilon$.

With no restriction on the measurements, the optimal verification operator is the projective measurement $\Omega=\dyad{\psi}$, for which $\gamma(\Omega)=1$. The corresponding sample complexity,
\begin{align}
N_{\mathrm{opt}}
=
\left\lceil
\frac{\ln\delta}{\ln(1-\epsilon)}
\right\rceil
\leq
\left\lceil
\frac{1}{\epsilon}\ln\frac{1}{\delta}
\right\rceil,
\end{align}
therefore represents the fundamental limit of any state certification protocol. 

\paragraph*{Adaptive verification.}
For target states without a stabilizer-like structure, perfect completeness using fixed local tests can be difficult to achieve. Adaptive verification provides a general approach to overcome this limitation by allowing the measurement setting on each qubit to depend on the outcomes of previous measurements~\cite{zihao2021PhasedDicke,huang2025certifying,gupta2026few,yunting2026universal,zhang2025verification}. 

Here we show that any protocol of the following form has perfect completeness. Suppose that the protocol measures the qubits sequentially, with each single-qubit measurement chosen adaptively based on all previous outcomes. Then, for every measurement branch, the final qubit is measured in a basis containing the conditional state of the target $\ket{\psi}$ given the outcomes obtained on all preceding qubits. We claim that the resulting verification operator $\Omega$ satisfies
$\Omega\ket{\psi}=\ket{\psi}$.

To see this, label a complete sequence of outcomes on the first $d-1$ qubits by $\z=(z_1,\ldots,z_{d-1})$. 
The corresponding projectors may depend on the preceding outcomes, and we denote their product by
\begin{align}
M_{\z} = M_{z_1} \otimes M_{z_2|z_1} \otimes \cdots \otimes M_{z_{d-1}|z_1,\ldots,z_{d-2}},
\end{align}
where each $M_{z_j|z_1,\ldots,z_{k-1}}$ acts only on the $k$th qubit. 
The probability of obtaining the branch $\z$ on the target state is
\begin{align}
p_{\z} = \norm{M_{\z}\ket{\psi}}^2.
\end{align}
For every branch with $p_{\z}>0$, define the corresponding normalized conditional state of the final qubit by
\begin{align}
\ket{\psi_{\z}} = \frac{M_{\z} \ket{\psi}}{\sqrt{p_{\z}}}.
\end{align}
Equivalently, introducing the unnormalized conditional state
\begin{align}
\ket{\smash{\overline{\psi_{\z}}}} = M_{\z}\ket{\psi},
\end{align}
we have $\ket{\smash{\overline{\psi_{\z}}}} = \sqrt{p_{\z}} \ket{\psi_{\z}}$.

By construction, the final measurement associated with branch $\z$ contains the projector $\dyad{\psi_{\z}}$
Consequently, the ACCEPT operator corresponding to this branch is
\begin{align}
E_{\z} = M_{\z}\dgg \bigl( \id\otimes\dyad{\psi_{\z}} \bigr) M_{\z}.
\end{align}
where the identity acts on the first $d-1$ first qubits. 
The full verification operator is the average over the possible adaptive branches,
\begin{align}
\Omega = \sum_{\z} E_{\z},
\end{align}
with branches of zero probability under $\ket{\psi}$ contributing nothing.
Acting on the target state, each branch satisfies
\begin{align}
E_{\z}\ket{\psi} &= M_{\z}\dgg \bigl(
\id \otimes \dyad{\psi_{\z}} \bigr) M_{\z}\ket{\psi} \nonumber\\
&= M_{\z}\dgg \bigl( \id \otimes \dyad{\psi_{\z}} \bigr)
\ket{\smash{\overline{\psi_{\z}}}} \nonumber\\
&= M_{\z}\dgg \ket{\smash{\overline{\psi_{\z}}}} \nonumber\\
&= M_{\z}\dgg M_{\z}\ket{\psi}.
\end{align}
Since the first $d-1$ measurements form a complete orthonormal basis,
\begin{align}
\sum_{\z} M_{\z}\dgg M_{\z} =\id.
\end{align}
Hence, the adaptive protocol attains perfect completeness:
\begin{align}
    \Omega\ket{\psi} = \sum_{\z} E_{\z} \ket{\psi} = \ket{\psi}.
\end{align}

Despite the flexibility, adaptivity generally comes at a cost since implementing such protocols  requires the ability to efficiently compute the post-measurement state conditioned on arbitrary subsets of measurement outcomes in order to determine the next measurement setting. Moreover, spectral analysis of the associated verification operator can become substantially more involved for adaptive schemes.

%-------------------------------------
\subsection{Fermionic Gaussian states}\label{sec:gauss}
%-------------------------------------

We denote the set $\{1,\dots,d\}$ by $[d]$, and the collection of all $n$-element subsets of $[d]$ by $\binom{[d]}{n}$. A bitstring is written in bold as $\x = x_1 \cdots x_d \in \{0,1\}^d$, and can be equivalently identified with a subset $X \in \binom{[d]}{n}$ via $j \in X$ if and only if $x_j = 1$.

To set the stage, we introduce the fermionic Fock space, which is isomorphic to the Hilbert space of $d$ qubits via the Jordan--Wigner transformation. A system of $d$ fermionic modes is described by creation and annihilation operators $a_j\dgg$ and $a_j$, for $j \in [d]$, satisfying the canonical anticommutation relations
$\{a_j,a_k\dgg\} = \delta_{jk} \id$ and $\{a_j,a_k\}=0$.
The Fock basis, or (occupation) number basis, is constructed from the vacuum state $\ket{0_F}$ defined by $a_j \ket{0_F} = 0$ for all $j$, and are given by
\begin{align}
\ket{\x} = (a_1\dgg)^{x_1} \cdots (a_d\dgg)^{x_d} \ket{0_F},
\end{align}
where $x_j \in \{0,1\}$ are also the eigenvalues of the number operators $a_j^\dagger a_j$. 
In particular, states with exactly $n$ occupied modes form the standard basis of $\bigwedge^n \C^d \subset \bigotimes^n \C^d$, where $\C^d$ is the single-particle Hilbert space, and the full Fock space is the direct sum over all particle numbers,
\begin{align}
    \hilb_F (\C^d) = \bigoplus_{n=0}^d \bigwedge^n \C^d.
\end{align}

Two important decompositions of the Fock space arise from the spectra of the total number and the parity operators.  The $n$-particle sector is the eigenspace of the total number operator $\texttt{Num} = \sum_{j=1}^d a\dgg_j a_j$ with eigenvalue $n$. The even and odd parity sectors are respectively the $+1$ and $-1$ eigenspaces of the parity operator $\texttt{Par} = (-1)^{\texttt{Num}}$. By the fermionic parity superselection rule, physical states have support entirely within a single parity sector. Denoting the corresponding parity eigenvalue by $\sigma\in\{+1,-1\}$, we let $\Pi_{\sigma}$ denote the projector onto the $\sigma$-parity sector.

The Jordan--Wigner transformation identifies the Fock space with the Hilbert space $(\C^2)^{\otimes d}$ by representing the fermionic operators in terms of $d$-qubit operators:
\begin{align}
    a_j = \bigotimes_{k=1}^{j-1} Z_k \otimes \ket{0}_j\!\bra{1},
\end{align}
which satisfy the canonical anticommutation relations. Under this mapping, the Fock basis is naturally identified with the computational basis. In particular, measuring Pauli-$Z$ on the $j$th qubit reveals whether mode $j$ is occupied, and both the total number and the parity can be inferred from such measurements, since $\texttt{Num} = \sum_{j=1}^d (\id-Z_j)/2$ and $\texttt{Par} = \bigotimes_{j=1}^d Z_j$.
By contrast, other single-qubit measurements such as equatorial measurements on the Bloch sphere of the form $X \cos \varphi + Y \sin \varphi$ 
do not admit a direct fermionic analogue as they fail to conserve parity, but can become resources inherent to qubit-based protocols such as ours.

To introduce Gaussian states, it is convenient to work with the set of $2d$ Hermitian \emph{Majorana operators},
\begin{align}\label{def:majorana}
    c_j = a_j + a_j\dgg, && c_{d+j} = \frac{a_j-a_j\dgg}{i},
\end{align}
which satisfies the canonical anticommutation relations $\{c_j,c_k\} = 2\delta_{jk}\id$. If we consider the algebra generated by the Majorana operators, a distinguished class of transformations called \emph{Gaussian} or \emph{fermionic linear optical} (FLO) transformations is given by unitaries that preserve both the degree of Majorana monomials and fermionic parity \cite{knill2001fermionic}.
In the context of quantum computing, these transformations are also known as (unitary) \emph{matchgates} \cite{valiant2002quantum,knill2001fermionic,jozsa2008matchgates}.
Such transformations realize a (projective) representation of the rotation group $\SO(2d)$ on the Fock space:
\begin{align}
    \Phi: \SO(2d) &\to \U [\hilb_F (\C^d)] \\
    R &\mapsto e^{-iHt}, 
\end{align}
where $H$ is a quadratic Hamiltonian of the form
\begin{align}
    H = \frac{i}{4} \sum_{j,k=1}^{2d} h_{jk} c_j c_k,
\end{align}
with $h$ real antisymmetric, and $R = e^{h} \in \SO(2d)$. 
Since $H$ is quadratic, the unitary $e^{-iHt}$ is an even operator (containing only Majorana monomials of even degree) and commutes with $\texttt{Par} = \prod_{j=1}^d (-i c_j c_{d+j})$.
The conjugation action of $\Phi(R)$ on linear Majorana operators defines a genuine (non-projective) representation, given by
\begin{align}\label{eq:gaussian_unitary_action}
    \Phi(R)\dgg c_j \Phi(R) = \sum_{k=1}^{2d} R_{jk} c_k,
\end{align}
for $R \in \SO(2d)$. 

A pure \emph{fermionic Gaussian state} (fGS)---or Gaussian states in short---is defined as any state obtained from a number state by a Gaussian unitary, i.e., $\ket{\psi} = \Phi(R)\ket{\x}$. 
Such states admit a polynomial-size description in terms of their antisymmetric $2d \times 2d$ \emph{covariance matrix}
\begin{align}
\Gamma_{jk}(\psi) = -\frac{i}{2}\bra{\psi}[c_j, c_k]\ket{\psi},
\end{align}
with $\Gamma^\T \Gamma = \id$, i.e. $\Gamma \in \SO(2d)$, and $\Gamma^{\T}=-\Gamma$, if and only if $\ket{\psi}$ is pure Gaussian.
All two- and three-qubit pure states with definite parity are Gaussian \cite{bravyi2005classical}.
In contrast to bosonic states, every fermionic number state is Gaussian.
Given a bitstring $\x$, consider a $\pm1$ vector $\tilde{\vb x}$ with entries $\tilde{x}_j = (-1)^{x_j}$, and let $D_{\x} = \textrm{diag}(\tilde{\vb x}) \oplus \textrm{diag}(\tilde{\vb x})$. The covariance matrix of the corresponding  number state $\ket{\vb x}$ is
\begin{align}\label{eq:symplectic_form}
    \mqty(0 & \textrm{diag}(\tilde{\vb x}) \\ -\textrm{diag}(\tilde{\vb x}) & 0) = D_{\x}\, J, \qquad
    J = \mqty(0&\id_d\\ -\id_d &0),
\end{align}
where $J$ is the covariance matrix of the vacuum state $\ket{0_F}$ and also the standard symplectic form.
Using Eq.~\eqref{eq:gaussian_unitary_action}, the covariance matrix transforms by conjugation under a Gaussian unitary:
\begin{align}\label{eq:FLO-conjugation-action}
\Gamma \!\bigl( \Phi(R) \dyad{\psi} \Phi(R)\dgg\bigr)
= R\, \Gamma(\psi)\, R^{\mathsf T}.
\end{align}
Thus, the action of a Gaussian unitary can be simulated classically by updating the covariance matrix in $O(d^3)$ time~\cite{valiant2002quantum,terhal2002fermion,knill2001fermionic}.

Number-basis measurements can likewise be simulated efficiently. In particular, the probability of either outcome of a single-mode number measurement can be evaluated in $O(1)$ time, while the covariance matrix of the corresponding post-measurement Gaussian state can be updated in $O(d^2)$ time~\cite{bravyi2012disorder}. 
More generally, the full number-basis distribution of a pure fGS admits a closed-form expression,
\begin{align}\label{eq:PfaffianProb}
\pi(\vb x) \coloneqq \abs{\av{\vb x|\psi}}^2
= \frac{1}{2^d}\, \abs{\Pf \bigl(\Gamma + D_{\x}J\bigr)},
\end{align}
where $\Pf$ denotes the Pfaffian, which can be computed in $O(d^3)$ time. Eq.~\eqref{eq:PfaffianProb}, an important ingredient in the proof for our main theorem, follows from the Pfaffian inner-product formula for Gaussian states, which we  discuss further, along with other  properties of the Pfaffian, in Appendix~\ref{app:grassmann}.

An important subclass of FLO transformations is given by the \emph{passive} or \emph{number-conserving} FLO transformations, generated by quadratic Hamiltonians of the form
$\sum_{j,k=1}^d A_{jk} a\dgg_j a_k$ with $A$ Hermitian. 
They commute with the total number operator, preserving each fixed particle-number sector, and constitute the largest subgroup that leaves the vacuum $\ket{0_F}$ invariant (up to a phase).
In particular, the conjugation action of the corresponding orthogonal transformation $R$ preserves the vacuum covariance matrix: $R J R^{\T} = J$,
so they can be characterized as the subgroup $\SO(2d) \cap \Sp(2d,\R)$ naturally isomorphic to $\U(d)$ thanks to the two-out-of-three property. 
In the Majorana representation, this identification is made explicit by the embedding 
\begin{align}
    R = \mqty(\textrm{Re}\,U & -\textrm{Im}\,U \\ \textrm{Im}\,U & \textrm{Re}\,U) \quad \in \SO(2d).
\end{align}
Correspondingly, the conjugation action of $\Phi(R) = \Phi_{\pas}(U)$ does not mix creation and annihilation operators:
\begin{align}
    \Phi_{\pas}(U)\dgg a_j \Phi_{\pas}(U) = \sum_{k=1}^d U_{jk} a_k,
\end{align} 
making it clear that passive FLO transformations arise as the natural lifting of single-particle unitaries on $\C^d$ to the $n$-particle sector:
\begin{align}
    \Phi{\pas}: \U(d) &\to \U \left(\bigwedge^n \C^d\right) \\
    U &\mapsto U^{\otimes n}\bigr|_{\bigwedge^n \C^d}.
\end{align}

Gaussian states that are also eigenstates of the total number operator, $\texttt{Num}\ket{\psi}=n\ket{\psi}$, 
are known as \emph{Slater determinants}, and are obtained by applying passive FLO transformations to number states, $\ket{\psi} = \Phi_{\pas}(U)\ket{\x}$, which can equivalently be written as
\begin{align}\label{eq:slater}
    \ket{\psi} = \ket{e_1} \wedge \cdots \wedge \ket{e_n},
\end{align}
where $\{\ket{e}_j\}_{j=1}^n$ is an orthonormal set in $\C^d$. 
The overlap between two such states takes the determinantal form \cite{terhal2002fermion},
\begin{align}
    \mel{\y}{\Phi_{\pas}(U)}{\x} = \det(U_{X,Y}),
\end{align}
where $U_{X,Y}$ denotes the submatrix of $U$ obtained by keeping only the rows $j$ for which $x_j=1$ and the columns $k$ for which $y_k=1$. It follows from the Cauchy--Binet formula that the computational-basis probability also reduces to the determinantal form,
\begin{align}\label{eq:DeterminantProb}
    \pi(\x) = \abs{\av{\x|\psi}}^2 = \det(K_{X,X}),
\end{align}
where $K_{X,X}$ is the principal submatrix of $K \coloneqq U_{[d],X} (U_{[d],X})\dgg$, which is also the one-particle reduced density matrix (1-RDM) given by
\begin{align}
    K_{jk} = \expval{a_j\dgg a_k}{\psi}.
\end{align}
with $K$ being a rank-$n$ projection operator.
Since  $\expval{a_ja_k}{\psi}=0$ for any Slater determinant, $K$ contains the same information as the covariance matrix 
via (see, e.g., \cite{christensen2026FLO_learning})
\begin{align}
    \Gamma = \mqty(2\,\textrm{Im}(K)&\id_d - 2 \,\textrm{Re}(K) \\ -\id_d +2\,\textrm{Re}(K) & 2\,\textrm{Im}(K)).
\end{align}
The purity condition for fGS, $\Gamma^{\T}\Gamma=\id$, follows from the fact that $K$ is a projection operator.

%-------------------------------------
\section{Results}\label{sec:results}
%-------------------------------------

\subsection{Protocol}

The certification protocol for a pure fermionic Gaussian state $\ket{\psi}$ proceeds as follows. In each measurement round, consuming one copy of the laboratory state $\rho$, we uniformly choose between two tests: the parity test, implemented by measuring $Z^{\otimes d}=\texttt{Par}$, and the \emph{correlation test}. The full procedure is given in Algorithm~\ref{algo}.
If the outcome string $(\z^{(k)},x)$ has zero probability under the model state $\ket{\psi}$, the test immediately outputs $\mathrm{REJECT}$.

\begin{algorithm}[H]
\caption{Certify fGS $\ket{\psi}$}\label{algo}
\begin{algorithmic}[1]
\Require One copy of an unknown lab state $\rho$, and the covariance matrix $\Gamma$ of the target fGS $\ket{\psi}$
\Ensure $\mathrm{ACCEPT}$ or $\mathrm{REJECT}$
\State \textbf{Flip} a fair coin; if heads, run the \emph{parity test}, else the \emph{correlation test}.
\Statex \textit{Parity test:}
\State \textbf{Measure} all $d$ qubits of $\rho$ in the computational basis and \textbf{ACCEPT} iff the outcome parity matches that of $\ket{\psi}$.
\Statex \textit{Correlation test:}
\State \textbf{Sample} $k=\{k_1,k_2\}\subset[d]$ uniformly at random.
\State \textbf{Measure} the qubits in $[d]\setminus k$ of $\rho$ in the computational basis, obtaining $\z^{(k)}\in\{0,1\}^{d-2}$.
\State \textbf{Measure} qubit $k_1$ of $\rho$ in the $X$ basis, obtaining outcome $x\in\{+,-\}$.
\State Let $\ket{\psi'}$ be the post-measurement state of $\ket{\psi}$ on qubit $k_2$ conditioned on $(\z^{(k)},x)$, computed from $\Gamma$. \textbf{Measure} qubit $k_2$ of $\rho$ in a basis containing $\ket{\psi'}$ and \textbf{ACCEPT} iff the outcome is $\ket{\psi'}$.
\end{algorithmic}
\end{algorithm}

The conditional state required in the last step can be computed in polynomial time. After measuring the qubits in $[d]\setminus k$, the conditional state remains Gaussian and can be obtained efficiently. 
In particular, the resulting two-qubit state is restricted by the parity superselection rule to the form
\begin{align}\label{eq:conditional-two-qubit}
\ket{\psi_{\z^{(k)}}}
= \alpha\ket{b_1b_2}
+ \beta\ket{\smash{\overline b_1\overline b_2}},
\end{align}
where $b_1=0$, $b_2\in\{0,1\}$ and $\overline b_j=b_j\oplus1$. The coefficients $\alpha$ and $\beta$ can be read off from the $4\times4$ conditional covariance matrix. 
Throughout we fix the convention $k_1:=\min k$ and $k_2:=\max k$, so that the qubit measured in the $X$ basis always precedes the adaptively measured qubit in the Jordan--Wigner order.
In our ordering of the Majorana operators, Eq.~\eqref{def:majorana},
\begin{align}
|\alpha|^2-|\beta|^2=\Gamma_{13},
\qquad
\alpha\beta^*= s_{\textrm{JW}}\frac{(-1)^{b_2} \Gamma_{14}+i\Gamma_{12}}{2}.
\end{align}
where
\begin{align}\label{eq:jw-sign}
s_{\mathrm{JW}}=(-1)^{\sum_{k_1<j<k_2} z_j}
\end{align}
is the Jordan--Wigner string picked up by the products $c_{k_1}c_{k_2}$ and $c_{k_1}c_{d+k_2}$ on the measured branch $\z^{(k)}$. For adjacent pairs $k_2=k_1+1$ the product is empty and $s_{\mathrm{JW}}=1$.
Conditioning further on the outcome $x=\pm$ yields the single-qubit state
\begin{align}
\ket{\psi'}=\alpha\ket{b_2}\pm\beta\ket{\smash{\overline b_2}}.
\end{align}
Hence, the state $\ket{\psi'}$ and the corresponding measurement basis in the last step can be computed in $O(1)$ time once the conditional covariance matrix is available.

\subsection{Efficiency bound}

\paragraph*{The verification operator and completeness.} 
Denote by $\sigma$ the parity of the target fGS. The test operator corresponding to the parity test is given by the projection $\Pi_{\sigma}$ onto the $\sigma$-parity sector. Let $E_k$ be the test operator implemented by steps 4-6 of Algorithm~\ref{algo} for fixed subset $k$, 
\begin{align}
    E_k &= \sum_{\z^{(k)}} \dyad{\smash{\z^{(k)}}} \otimes \! \sum_{x\in\{+,-\}} \! \dyad{x} \otimes \dyad{\smash{\psi'_{(\z^{(k)},x)}}} 
\end{align}
and $\Omega_{0}=\binom{d}{2}^{-1}\sum_k E_k$ for its average. $\Omega_0$ is the test operator for the correlation test. The total verification operator corresponding to the protocol is given by
\begin{align}
    \Omega = \frac{\Pi_{\sigma} + \Omega_0}{2}.
\end{align}
Since $\Pi_{\sigma}\ket{\psi}=\ket{\psi}$, and the correlation test has the structure of an adaptive protocol described in Section~\ref{sec:qsv}, the overall protocol achieves perfect completeness.

Now we analyze properties of the protocol. First, one can verify that $\Omega_0$ is block diagonal with respect to the two parity sectors by direct computation. The operator $E_k$ for each $\z^{(k)}$ restricted to the last two qubits is block diagonal, 
\begin{align}\label{eq:condition-verify-matrix}
\begin{array}{c@{\;}c@{\;}c}
&
\displaystyle
b_1b_2\quad
\overline{b}_1\overline{b}_2 \quad
b_1\overline{b}_2\quad
\overline{b}_1b_2\quad
&
\\[-2pt]
&
\mqty(
|\alpha|^2 & \alpha\beta^* & 0 & 0 \\
\alpha^*\beta & |\beta|^2 & 0 & 0 \\
0 & 0 & |\beta|^2 & \alpha^*\beta \\
0 & 0 & \alpha\beta^* & |\alpha|^2
)
&
\begin{array}{c}
b_1b_2\\
\overline{b}_1\overline{b}_2\\
b_1\overline{b}_2\\
\overline{b}_1b_2\\
\end{array}
\end{array}
\end{align}
with the upper left block (the $\sigma$-parity sector) being precisely the one-dimensional projector onto the conditional two-qubit state $\ket{\psi_{\z^{(k)}}}$ in Eq.~\eqref{eq:conditional-two-qubit}.
That is, the two-dimensional accepted subspace of $E_{\z}$ is spanned by states of
\emph{different} parity.
Thus, inside the parity
sector the two single-qubit measurements of steps 5-6 realise exactly the
rank-one projector onto the two-qubit conditional target even though
$E_k$ itself has rank two on each two-qubit block:
\begin{align}\label{eq:rank1}
  \Pi_{\sigma} E_k \Pi_{\sigma} = \sum_{\z^{(k)}} \dyad{\smash{\z^{(k)}}} \otimes \dyad{\psi_{\z^{(k)}}}.
\end{align}
Define the parity-projected operator
\begin{align}\label{def:L-matrix}
  L:=\Pi_{\pm} \Omega_{\sigma} \Pi_{\sigma}
   =\binom{d}{2}^{\!\!-1} \mkern-9mu \sum_{k}\sum_{\z^{(k)}} \dyad{\smash{\z^{(k)}}} \otimes \dyad{\psi_{\z^{(k)}}}.
\end{align}
Its spectral gap $\gamma(L)$ is what the sample complexity depends on as the next lemma shows.

\begin{lemma}\label{lem:combined}
Let $\Omega = (\Pi_{\sigma} + \Omega_0)/2$ be the verification operator. Then 
\begin{equation}
  \gamma(\Omega) \ge \min \left\{ \frac{\gamma(L)}{2}, \frac{1}{2} \right\}.
  \label{eq:combined}
\end{equation}
\end{lemma}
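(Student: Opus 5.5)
The proof will use the block structure of $\Omega$ with respect to the parity decomposition $\hilb = \Pi_\sigma\hilb \oplus \Pi_{-\sigma}\hilb$. First we record that $\Omega_0$ commutes with $\Pi_\sigma$. This follows from the block-diagonal form~\eqref{eq:condition-verify-matrix}: for each fixed $k$ and each branch $\z^{(k)}$, the two-qubit operator has no matrix elements between $\{b_1b_2,\overline b_1\overline b_2\}$ and $\{b_1\overline b_2,\overline b_1 b_2\}$. Tensoring with $\dyad{\z^{(k)}}$, the global parity is the parity of $\z^{(k)}$ times the two-qubit parity, so each $E_k$, and hence $\Omega_0$, is block diagonal. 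Therefore
\begin{align}
\Omega = \tfrac12(\id + L)\big|_{\Pi_\sigma} \;\oplus\; \tfrac12\,\Pi_{-\sigma}\Omega_0\Pi_{-\sigma},
\end{align}
where on the $\sigma$ sector we used $\Pi_\sigma\Pi_\sigma\Pi_\sigma=\Pi_\sigma$ and $\Pi_\sigma\Omega_0\Pi_\sigma = L$ by Eq.~\eqref{eq:rank1}. Since the two blocks act on orthogonal subspaces, the spectrum of $\Omega$ is the union of the spectra of the blocks.

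Next we treat each block. On the $\sigma$ sector, $L$ is a positive operator bounded by $\id$, and it fixes $\ket{\psi}$. This holds because $\Pi_\sigma\ket\psi=\ket\psi$ and $\Omega_0\ket\psi=\ket\psi$ by the adaptive completeness argument of Sec.~\ref{sec:qsv}. The eigenvalues of $\tfrac12(\id+L)$ there are $1$ (from $\ket\psi$) and otherwise at most $\tfrac12(1+\lambda_2(L)) = 1-\gamma(L)/2$. On the $-\sigma$ sector, $0\le \Omega_0\le \id$ because it is an average of POVM elements. Hence every eigenvalue of $\tfrac12\Pi_{-\sigma}\Omega_0\Pi_{-\sigma}$ is at most $\tfrac12$.

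Combining, the top eigenvalue of $\Omega$ is $1$, attained at $\ket\psi$. The second eigenvalue satisfies $\lambda_2(\Omega)\le \max\{1-\gamma(L)/2,\ 1/2\}$, which gives $\gamma(\Omega)\ge\min\{\gamma(L)/2,1/2\}$. One subtlety needs care: the 1-eigenvalue of $\Omega$ must be counted exactly once. This is ensured because the $-\sigma$ block contributes at most $1/2<1$, and on the $\sigma$ block we count the remaining spectrum of $L$ via $\lambda_2(L)$, which here means the largest eigenvalue of $L$ on the $\sigma$ sector orthogonal to $\ket\psi$. I expect the only real obstacle to be the block-diagonality check, specifically confirming that the Jordan--Wigner sign $s_{\mathrm{JW}}$ and the $X$-basis measurement on $k_1$ do not couple the sectors. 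The explicit matrix~\eqref{eq:condition-verify-matrix} already settles this branch by branch, so the lemma reduces to this linear-algebra bookkeeping.
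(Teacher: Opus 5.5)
Your proposal is correct and follows the paper's proof. Both arguments use the parity block-diagonality of $\Pi_\sigma$ and $\Omega_0$. Both bound the $\sigma$ sector by $\tfrac12(\id+L)$ on the complement of $\ket{\psi}$, giving $1-\gamma(L)/2$, and the opposite sector by $\tfrac12\Omega_0\le\tfrac12$. Your branch-by-branch check of block-diagonality via Eq.~\eqref{eq:condition-verify-matrix}, and your care in counting the unit eigenvalue once, spell out details the paper leaves implicit.
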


\begin{proof}
Since both terms are block diagonal in parity, we may bound eigenvalues sector by sector. In the target's sector $\Pi_{\sigma}$ acts as the identity and $\Omega_{0}$ as $L$, so on the orthogonal complement of $\ket{\psi}$ within that sector the eigenvalues are at most $1-\gamma(L)/2$. In the opposite sector $\Pi_{\sigma}$ acts as zero and $\Omega_{0}\le\id$, so the eigenvalues are at most $1/2$. Hence,
$\gamma(\Omega)\ge\min\{\gamma(L)/2,1/2\}$.
\end{proof}

\paragraph*{Soundness.}
To establish soundness, it suffices to show that the spectral gap $\gamma(L)$ does not decay faster than inverse-polynomially in the number of qubits for every target fGS.
Denote by $V = \{\x \mid \pi(\x) >0\}$
the support of the computational-basis distribution $\pi$ of Eq.~\eqref{eq:PfaffianProb}, and consider a graph $G=(V,E)$ where $(\x,\y)\in E$ if and only if they have Hamming distance two, i.e., $|\x \oplus \y|=2$.
As we show in
Appendix~\ref{app:verify-op-is-markov}, $L$ has the same spectrum as the transition matrix $P$ of a Markov chain on $G$ that we call the \emph{certification walk},
\begin{align}\label{eq:certification_walk}
\mel{\x}{P}{\y} = 
    \begin{cases}
        {\displaystyle {d \choose 2}^{\!\!-1} \frac{ \pi(\y) }{\pi(\x)+\pi(\y)}}, 
        & (\x,\y)\in E,  \\
        {\displaystyle
        {d \choose 2}^{\!\!-1} \mkern-9mu \sum_{\substack{\x' \\ |\x \oplus \x'|=2}} \mkern-9mu \frac{\pi(\x)}{\pi(\x) + \pi(\x')} }, 
        & \x=\y, \\
        0, & \textrm{otherwise},
    \end{cases}
\end{align}
having $\pi$ as its stationary distribution. If $V$ has only one vertex, $L=\dyad{\psi}$ and $\gamma(L)=1$.
Our main technical result lower bounds the spectral gap of this walk.

\begin{restatable}[Spectral gap of the certification walk]{theorem}{SpectralGap}
\label{thm:main}
Let $\ket{\psi}$ be an fGS with definite parity and let $L$ be as in \eqref{def:L-matrix}. The spectral gap is lower bounded as
\begin{align}\label{eq:gapbound}
  \gamma(L) \ge \frac{2}{d(d-1)}.
\end{align}
Consequently, the sample complexity of the certification protocol satisfies
\begin{align}
    N(\epsilon,\delta,d) \le \left\lceil \frac{d(d-1)}{\epsilon} \ln \delta^{-1} \right\rceil.
\end{align}
\end{restatable}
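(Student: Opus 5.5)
Using Appendix~\ref{app:verify-op-is-markov}, $L$ is isospectral to the certification walk $P$ of Eq.~\eqref{eq:certification_walk}, so it suffices to show $\gamma(P)\ge 2/(d(d-1))$. The sample-complexity statement then follows from Lemma~\ref{lem:combined} and Eq.~\eqref{eq:Ntests}. Since $\gamma(L)\le 1$, the lemma gives $\gamma(\Omega)\ge\gamma(L)/2\ge 1/\binom{d}{2}$, and hence $N\le\lceil d(d-1)\epsilon^{-1}\ln\delta^{-1}\rceil$.

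The plan is to recognise $P$ as a $d\leftrightarrow d-2$ down-up walk. Write $P=\binom{d}{2}^{-1}\sum_k P_k$, where $P_k$ resamples the two bits at positions $k$ from $\pi$ conditioned on the other $d-2$ bits. Within the parity sector this conditional law lives on exactly two strings, which gives the heat-bath rates $\pi(\y)/(\pi(\x)+\pi(\y))$. Each $P_k$ is therefore an orthogonal projection in $L^2(\pi)$, namely the conditional expectation $\E[\,\cdot\mid x_{[d]\setminus k}]$.

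The Dirichlet form is then
\[
\mathcal E(f)=\binom{d}{2}^{-1}\sum_k \E\bigl[\Var(f\mid x_{[d]\setminus k})\bigr],
\]
and the target inequality becomes an approximate tensorisation (block-factorisation) of variance:
\[
\sum_{k}\E\bigl[\Var(f\mid x_{[d]\setminus k})\bigr]\ \ge\ \Var_\pi(f).
\]
In the language of high-dimensional expanders, this says the local walks at every link of the ``parity complex'' supporting $\pi$ have nonnegative second eigenvalue, so that the global down-up gap is at least $1/\binom{d}{2}$. That is exactly the bound the theorem claims.

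To prove the tensorisation, I would use the local-to-global (Oppenheim/trickle-down or Alev--Lau) approach. The key structural fact is that every pinning of a Gaussian $\pi$ is again a Pfaffian distribution of a smaller pure fGS. Pinning coordinates in the number basis is a number measurement, and Eq.~\eqref{eq:PfaffianProb} is preserved under it. Hence it suffices to verify a base-case inequality on every link, uniformly over all fGSs, and then induct on $d$.

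The base case concerns the four-mode, or more generally pairwise, local walk between pairs of coordinates. Here I would use the Pfaffian structure: the conditional amplitudes satisfy Grassmann--Plücker-type relations (for Slater determinants, the strong Rayleigh / negative-dependence property of determinantal measures). These relations should make the correlation matrix of the local walk negatively correlated, which gives its second eigenvalue $\le 0$, exactly the threshold needed.

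The main obstacle is this local spectral step for non-number-conserving Pfaffian measures. Slater determinants are determinantal and strong Rayleigh, so known spectral-independence results apply there. For general fGS I would first try a particle--hole transformation on half the modes to map the Pfaffian distribution into a determinantal-like form, or else directly use the Pfaffian Plücker identity $\sum \pm\Pf\,\Pf=0$ to show pairwise negative correlation of the pinned measures.

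Tightness (the fully dimerized SSH state) gives a sanity check. There the bound is saturated, so no slack is available, and the inequality must be sharp with the function $f$ being a single dimer-parity indicator.
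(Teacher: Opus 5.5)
Your reductions are correct: $L$ and $P$ are isospectral, $P=\binom{d}{2}^{-1}\sum_k P_k$ with $P_k=\E[\,\cdot\mid x_{[d]\setminus k}]$, and $\gamma(P)\ge\binom{d}{2}^{-1}$ is equivalent to $\sum_k\E[\Var(f\mid x_{[d]\setminus k})]\ge\Var_\pi(f)$. There is one small slip: Lemma~\ref{lem:combined} gives $\gamma(\Omega)\ge\gamma(L)/2\ge 1/(d(d-1))$, which is half of $\binom{d}{2}^{-1}$; your final bound on $N$ is still the right one.

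The genuine gap is that this inequality is never proved. Worse, the local criterion you plan to verify is false for Gaussian states, so no Pfaffian--Pl\"ucker identity can supply it. That criterion is: second eigenvalue $\le 0$ for the local walk at every link of the homogenized parity complex, fed into a trickle-down/Alev--Lau theorem. It is only sufficient for your inequality, not equivalent to it.
\begin{itemize}
\item At any codimension-two link where both completions are supported, parity leaves only the two disjoint edges $\{(k_1,b_1),(k_2,b_2)\}$ and $\{(k_1,\overline b_1),(k_2,\overline b_2)\}$. The local walk there has second eigenvalue $1$. Indeed, the $d\leftrightarrow d-1$ walk on this complex is the identity, and standard local-to-global theorems need these links as well.
\item The failure is not confined to the bottom level. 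Take two BCS pairs $\bigl((\ket{00}+\ket{11})/\sqrt2\bigr)^{\otimes 2}$, and set $f(j,1)=1$, $f(j,0)=-1$ for $j=1,2$ and $f=0$ on modes $3,4$. This $f$ is an eigenfunction of the local walk at the empty face with eigenvalue $1/3$.
\item Your fallbacks do not restore negative dependence. The state $\sqrt{1/2}\,\ket{000}+\sqrt{1/6}\,(\ket{110}+\ket{101}+\ket{011})$ is Gaussian, and all three occupation covariances equal $+1/18$. Any particle--hole flip of a subset of modes leaves some pair positively correlated, since a triangle is not bipartite.
\item A generic fGS has support on the whole parity sector. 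No relabeling can therefore turn it into a number-conserving determinantal measure.
\end{itemize}

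What is missing is a \emph{two-step} local notion that tolerates these parity-forced perfect two-mode correlations, plus a proof that every pure fGS satisfies it. The paper gets both from a zero-free region.
\begin{itemize}
\item The Grassmann computation of Proposition~\ref{prop:CayleyTransform} gives $Z_\pi(\z)=\prod_j(1+z_j)\,\Upsilon(\vb w)$ with $w_j=(1-z_j)/(1+z_j)$.
\item $\Upsilon(\vb w)\propto\Pf(\Gamma+D_{\vb w}J)$ cannot vanish for $|w_j|<1$. A kernel vector $u$ would satisfy $\|u\|=\|\Gamma u\|=\|D_{\vb w}Ju\|<\|u\|$; this is exactly where $\Gamma^{\T}\Gamma=\id$ enters.
\item The resulting $\mathcal S_1$-stability makes the homogenization $1/2$-fractionally log-concave. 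This property is inherited by every pinning, so it plays the role of your induction over links, and it still admits the correlations above.
\item $1/2$-fractional log-concavity yields entropy contraction of $D_{d\to d-2}$ with $\kappa=\binom{d}{2}^{-1}$. Its linearization (Proposition~\ref{prop:entropy_contraction_bounds_gap}) is precisely your variance inequality.
\end{itemize}
Your strong-Rayleigh idea does work for Slater determinants, but only without homogenizing. The $n$-DPP gives the $n\leftrightarrow n-1$ walk on $\binom{[d]}{n}$ a gap of at least $1/n$. An edge-by-edge Dirichlet-form comparison then gives $\gamma(P)\ge\frac{n}{\binom{d}{2}}\cdot\frac1n$, as in Appendix~\ref{app:sub-proof}.
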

\noindent The proof of Theorem~\ref{thm:main}, given in Appendix~\ref{app:main-proof}, proceeds by  identifying the certification walk with a two-site Glauber dynamics. Equivalently, in the language of theoretical computer science, this is a \emph{multi-level down-up walk}, a class of Markov chains that has been extensively studied in the theory of high-dimensional expanders \cite{kaufman2020high_order,alev2020improved,anari2021hardcore,dikstein2024boolean,anari2019logII,anari2022entropic}.
Since the spectral gap is the reciprocal of the relaxation time, the desired bound can be established by showing rapid relaxation of this walk.

What allows us to bound the relaxation time is a chain of existing results that relates the relaxation time to locations of complex zeros of the multivariate polynomial associated with the computational-basis distribution,
\begin{align}
    Z_{\pi}(z_1,\dots,z_d) = \sum_{\x} \pi(\x) \prod_j z_j^{x_j}.
\end{align}
 In particular, the fact that $\Gamma^{\T}\Gamma=\id$ for the covariance matrix of a pure fGS implies that $Z_{\pi}$ is zero-free in the open right-half plane $\{z_j \mid \textrm{Re}(z_j) > 0\}$ for all $j$ (Proposition~\ref{prop:CayleyTransform} and Lemma~\ref{lem:ZeroFreeness}).
 This zero-free property, as it turns out, implies rapid decay of the distinguishability between any distribution and the stationary distribution $\pi$ of the walk as measured by the relative entropy (Kullback--Leibler divergence). The rate of entropy contraction then lower bounds the spectral gap directly (Proposition~\ref{prop:entropy_contraction_bounds_gap}), ensuring fast relaxation. 
 The proof makes use of the Grassmann--Berezin integral representation of the Pfaffian, together with various standard and specialized techniques from the Markov-chain literature. We review the former in Appendix~\ref{app:grassmann} and the latter in Appendix~\ref{app:markov} and \ref{app:downUp}.

A simple family of $d$-mode Gaussian states saturates the spectral-gap bound of Theorem~\ref{thm:main}, showing that the bound is tight.

\begin{proposition}\label{prop:tight}
Consider $m\geq1$ separate pairs of modes, with the other
$d-2m\geq0$ modes fixed in a number state $\ket{\x}$:
\begin{align}\label{eq:pair-example}
\ket{\Psi}=\bigotimes_{r=1}^{m}\ket{\chi_r}\otimes\ket{\x},
\end{align}
where each pair is either
$\ket{\chi_r}=\alpha_r\ket{10}+\beta_r\ket{01}$ or
$\ket{\chi_r}=\alpha_r\ket{00}+\beta_r\ket{11}$, with
$|\alpha_r|^2+|\beta_r|^2=1$ and $\alpha_r\beta_r\neq0$.
The certification walk of $\ket{\Psi}$ has spectrum
\begin{equation}
\left\{ 1 - \frac{s}{c} \: \middle| \: s = 0, 1, \dots, m \right\}, \qquad c = \binom{d}{2},
\end{equation}
the eigenvalue $1 - s/c$ having multiplicity $\binom{m}{s}$, independent of $\alpha_r$ and $\beta_r$. In particular, the spectral gap is
\begin{equation}
\gamma(L) = \frac{1}{c} = \frac{2}{d(d-1)}.
\end{equation}
\end{proposition}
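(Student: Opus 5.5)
The plan is to identify the certification walk of $\ket{\Psi}$ explicitly as a product chain, and then read off its spectrum by a tensor-product diagonalization. By the equivalence established in Appendix~\ref{app:verify-op-is-markov}, $L$ has the same spectrum as the walk $P$ of Eq.~\eqref{eq:certification_walk}, so it suffices to work with $P$. I take that appendix statement as given, including its handling of basis states outside the support $V$.

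\emph{Step 1: the support and the edges.} Since the $d-2m$ unpaired modes are frozen in $\ket{\x}$ and each pair $r$ is supported on exactly two configurations (either $\{10,01\}$ or $\{00,11\}$), the support $V$ is in bijection with $\{0,1\}^m$. The target distribution factorizes as $\pi=\bigotimes_r \pi_r$, where $\pi_r$ is the two-point distribution with weights $|\alpha_r|^2$ and $|\beta_r|^2$, both nonzero. Switching pair $r$ between its two configurations flips both of its modes. Hence two elements of $V$ are at Hamming distance two if and only if they differ in exactly one pair; differing in $t$ pairs gives distance $2t$. Consequently the only subsets $k\in\binom{[d]}{2}$ that can produce a move are the $m$ subsets equal to the mode sets of the pairs. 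Every other choice of $k$ leaves the configuration unchanged.

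\emph{Step 2: rewrite $P$.} Write $P=c^{-1}\sum_k P_k$, where $P_k$ is the local move for subset $k$. Then $P_k=\id$ unless $k$ is pair $r$. In that case the transition probability to the flipped configuration $\y$ is $\pi(\y)/(\pi(\x)+\pi(\y))$. Because $\pi$ is a product, this ratio equals the conditional single-pair probability $\pi_r(y_r)$, and the same holds for the probability of staying put. So $P_k=Q_r:=\id\otimes\cdots\otimes \mathbf{1}\pi_r^{\T}\otimes\cdots\otimes\id$, the heat-bath resampling of coordinate $r$. This gives
\begin{align}
P=\Bigl(1-\frac{m}{c}\Bigr)\id+\frac{1}{c}\sum_{r=1}^m Q_r .
\end{align}
The care needed here is to check that the diagonal entry in Eq.~\eqref{eq:certification_walk} matches this decomposition. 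Each subset $k$ that does not correspond to a pair must contribute exactly a holding probability of $1/c$. I expect this bookkeeping, with every non-pair $k$ (including pairs mixing a paired mode with a frozen mode) yielding a trivial step, to be the main point to verify. It follows because any such $k$ has no neighbour in $V$.

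\emph{Step 3: diagonalize.} The $Q_r$ act on distinct tensor factors, so they commute, and each is a rank-one (non-orthogonal) projection on its factor. On factor $r$ choose the eigenbasis $\{\mathbf{1}, v_r\}$, where $v_r$ is the nonconstant vector with $\pi_r^{\T}v_r=0$; its existence uses $\alpha_r\beta_r\neq0$. On this basis $\mathbf{1}\pi_r^{\T}$ has eigenvalues $1$ and $0$ respectively. The product basis indexed by $S\subseteq[m]$ (placing $v_r$ exactly for $r\in S$) therefore diagonalizes $P$ with eigenvalue
\begin{align}
1-\frac{m}{c}+\frac{m-|S|}{c}=1-\frac{|S|}{c}.
\end{align}
This yields eigenvalue $1-s/c$ with multiplicity $\binom{m}{s}$, independent of $\alpha_r$ and $\beta_r$. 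Since $m\ge1$, the second-largest eigenvalue is $1-1/c$, so $\gamma(L)=1/c=2/(d(d-1))$, matching Theorem~\ref{thm:main}.
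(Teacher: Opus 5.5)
Your proposal is correct and takes essentially the same route as the paper. The paper also writes $P = c^{-1}\bigl(\sum_{r} T_r + (c-m)\id\bigr)$, with $T_r$ the heat-bath resampling of pair $r$, and reads off the eigenvalues $1-s/c$ (multiplicity $\binom{m}{s}$) from these commuting single-factor terms. Your explicit check that every non-pair subset $k$, including one mixing a paired and a frozen mode, adds a pure holding probability $1/c$ to the diagonal of Eq.~\eqref{eq:certification_walk} only spells out what the paper's Case II states briefly.
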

\begin{proof}
Let pair $r$ occupy modes $2r-1,2r$. Each pair state has fixed parity, so the product with the number state $\ket{\x}$ is again a pure Gaussian state of definite parity. Measuring occupations therefore yields, independently for each pair, one of two outcomes, and $2^m$ outcomes in total.

Each step of the certification walk selects two modes uniformly at random from the $c=\binom{d}{2}$ pairs of modes.

\textbf{Case I.} The two selected modes belong to the same pair $r$. The occupation pattern of pair $r$ is then resampled with probabilities $|\alpha_r|^2$ and $|\beta_r|^2$, giving the conditional update
\begin{align}
    T_r  
    &= \id_2^{\otimes(r-1)} \otimes 
    \mqty(
    |\alpha_r|^2 & |\beta_r|^2  \\
    |\alpha_r|^2  & |\beta_r|^2)
    \otimes \id_2^{\otimes(m-r)},
\end{align}
which has eigenvalues $1$ and $0$.

\textbf{Case II.} The two selected modes belong to different pairs. There are $c-m$ such choices, and in this case all occupations are left unchanged.

The transition matrix is thus
\begin{align}
P &= \frac{1}{c} \left( \sum_{r=1}^m T_r + (c-m) \id \right) \nonumber \\
&= \id - \frac{1}{c} \sum_{r=1}^m \bigl( \id - T_r \bigr).
\end{align}
Since each term $\id - T_r$ acts non-trivially only on the $r$th tensor factor, the eigenvalues of the sum are sums of the eigenvalues of the individual terms. Hence $P$ has eigenvalues $1-s/c$, each with multiplicity $\binom{m}{s}$, for $s=0,\ldots,m$, the number of tensor factors $\mqty(1&1)^{\!\T}$ in the right-eigenvector. 
\end{proof}

%------------------------------------
\subsection{Examples and numerical results}\label{sec:numerics}
%------------------------------------

\begin{figure}[t]
\includegraphics[width=\columnwidth]{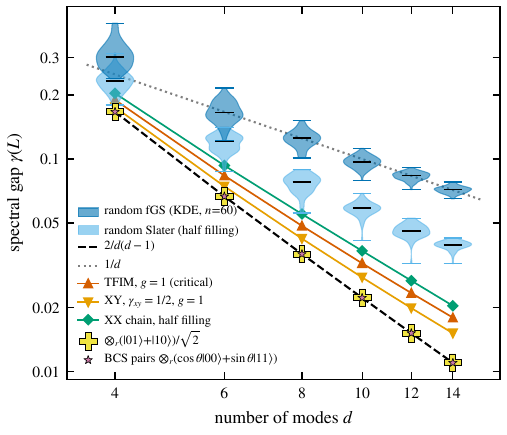}
\caption{Spectral gap $\gamma(L)$ of the certification walk versus the number
of modes $d$, on logarithmic axes. Violin plots show the distribution over
$n=60$ random instances per $d$---Haar-random pure fGS (dark) and random
half-filled Slater determinants (light)---with horizontal lines at the
medians and at the extreme values. Solid curves are the ground states of the
critical transverse-field Ising chain ($g=1$), the XY chain
($\gamma_{xy}=1/2$, $g=1$) and the half-filled XX chain. The two product
families $\bigotimes_r(\cos\theta\ket{00}+\sin\theta\ket{11})$ (stars) and
$\bigotimes_r(\ket{01}+\ket{10})/\sqrt2$ (plus signs) saturate the bound
\eqref{eq:gapbound} of Theorem~\ref{thm:main} (dashed) at every $d$; the
dotted line is $1/d$.}
\label{fig:gap}
\end{figure}

We discuss two worst-case families of Proposition~\ref{prop:tight} in detail since they are not merely extremal examples but correspond to states of direct physical relevance: ground states of the fully dimerized Su--Schrieffer--Heeger (SSH) chain, a canonical model of one-dimensional topological insulators, and BCS states, expressed in a paired-mode basis and used in mean-field descriptions of superconductivity.

\paragraph*{One fermion per pair.}
For $d=2m$, choose $\ket{\chi_r}=\alpha_r\ket{10}+\beta_r\ket{01}$ on every pair. Each pair contains one fermion in a superposition of
its two modes, giving the $m$-particle Slater determinant
\begin{align}\label{eq:slater-dimers}
    \ket{\Psi_{\mathrm{dim}}}
    =\prod_{r=1}^{m}(\alpha_r a_{2r-1}^{\dagger} +\beta_r a_{2r}^{\dagger})\ket{0_F}.
\end{align}
For $\alpha_r=\beta_r=-1/\sqrt2$, the fermion in each pair is equally shared between its two sites. This is the ground state of the SSH chain with one fermionic mode per site, zero hopping between different pairs, and negative hopping coefficients within each pair~\cite{su1979solitons,sirker2014boundary}. Both ends
of the chain belong to complete pairs, so no end site is left unpaired.

The same balanced state can also be the ground state of a connected system~\cite{crepin2011phase}. For $m\geq2$, label the two sites of pair $r$ by $A,B$. Connect the $A$ sites into one chain and the $B$ sites into another; each pair then forms one rung of a ladder. With open ends and real hopping amplitudes, the Hamiltonian is
\begin{align}\label{eq:ladder-parent}
H_{\mathrm{lad}}={}&-t_\perp\sum_{r=1}^{m}
 (a_{r,A}^{\dagger}a_{r,B}+\mathrm{h.c.})\nonumber\\
&-t_\parallel\sum_{r=1}^{m-1}\sum_{\ell=A,B}
 (a_{r,\ell}^{\dagger}a_{r+1,\ell}+\mathrm{h.c.}).
\end{align}
Here $t_\perp$ controls hopping within a pair and $t_\parallel$ controls hopping between neighboring pairs along either chain. The one-particle energies are
$\pm t_\perp-2t_\parallel\cos[j\pi/(m+1)]$, $j=1,\ldots,m$.
For $t_\parallel\neq0$ and $t_\perp>2|t_\parallel|\cos[\pi/(m+1)]$, the $m$ lowest one-particle states span the same space as the symmetric rung states $(\ket{r,A}+\ket{r,B})/\sqrt2$, where $\ket{r,\ell}$ denotes one fermion at site $(r,\ell)$. In particular,
$t_\perp>2|t_\parallel|>0$ is sufficient. With $m$ fermions, all of these lowest states are occupied. Changing the basis of this occupied space changes the Slater determinant only by an overall phase. The ground state therefore remains the balanced state~\eqref{eq:slater-dimers}, and the bound is attained exactly even when neighboring pairs are connected by nonzero hopping.

\paragraph*{Pairs with zero or two fermions (BCS states).}
For $d=2m$, choose $\ket{\chi_r}=u_r\ket{00}+v_r\ket{11}$ on every pair instead. The full state is  
\begin{align}\label{eq:bcs-dimers}
    \ket{\Psi_{\mathrm{BCS}}}
    =\prod_{r=1}^{m}(u_r+v_r a_{2r-1}^{\dagger}a_{2r}^{\dagger})\ket{0_F},
\end{align}
with $|u_r|^2+|v_r|^2=1$ and $u_rv_r\neq0$. Each pair is in a superposition of being empty and containing two fermions. This is the pair structure of the BCS wave function used to describe superconductivity~\cite{bardeen1957theory}. Such states can be prepared by gates acting separately on each pair of qubits and are used as starting states in fermionic simulation algorithms~\cite{jiang2018quantum}.

Figure~\ref{fig:gap} shows the spectral gap $\gamma(L)$ as a function of the number of modes $d$ for $d \le 14$, computed exactly for a range of fermionic Gaussian states. Haar-random fGSs exhibit a gap scaling as $1/d$, 
suggesting that generic Gaussian states are a factor of $d$ less costly to certify than the worst case would guarantee. 
Several physically motivated families besides the worst-case ones are also shown: the critical transverse-field Ising model (TFIM), the XY and XX chains under the Jordan--Wigner transformation.
Details of the numerical simulations are given in Appendix~\ref{app:numerics}.

%-------------------------------------
\subsection{Certifying phase-dressed  states}\label{sec:dressed}
%-------------------------------------

The sample complexity of Theorem~\ref{thm:main} is governed by the spectral gap of the certification walk, which depends on the target through its computational-basis distribution $\pi(\x)=\abs{\av{\x|\psi}}^2$ alone. The protocol therefore extends to all \emph{efficient phase-dressed states} defined below, most of which are non-Gaussian.

\begin{definition}[Phase-dressed state]
\label{def:dressed}
Let $\ket{\psi}$ be an fGS of definite parity with computational-basis distribution $\pi$, and let $\theta:\{0,1\}^{d}\to\R$. The associated \emph{phase-dressed state} is
\begin{align}\label{eq:dressed}
  \ket{\psi_{\theta}}
  = U_{\theta} \ket{\psi}, \qquad U_{\theta}=\sum_{\x}e^{i\theta(\x)}\dyad{\x},
\end{align}
defined up to a global phase. We call $\ket{\psi_\theta}$ \emph{efficiently phase-dressed} if there is a classical algorithm that, on input $\x\in\{0,1\}^{d}$, computes $\theta(\x)$ in time $\mathrm{poly}(d)$.
\end{definition}

\noindent Since $U_\theta$ preserves each parity sector and leaves the distribution $\pi$ invariant, the certification walk of Eq.~\eqref{eq:certification_walk} remains unchanged. Within a fixed-parity sector the conditional state of the final two modes remains two-dimensional, so Lemma~\ref{lem:combined} still applies; the dressing only changes the relative phase of that conditional state, and hence only the basis in which the final mode is measured.

\begin{corollary}[Certification of phase-dressed states]
\label{cor:dressed}
Let $\ket{\psi_\theta}$ be an efficiently phase-dressed state as in Definition~\ref{def:dressed}. Algorithm~\ref{algo}, with the adaptive measurement basis computed from $\ket{\psi_\theta}$, certifies $\ket{\psi_\theta}$ to infidelity $\epsilon$ with confidence $1-\delta$ using
\begin{align}\label{eq:dressed-samples}
    N(\epsilon,\delta,d) \le \left\lceil \frac{d(d-1)}{\epsilon}\ln\frac{1}{\delta} \right\rceil
\end{align}
copies, single-qubit measurements only, and $\mathrm{poly}(d)$ classical processing per copy.
\end{corollary}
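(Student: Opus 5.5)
The proof reduces the corollary to Theorem~\ref{thm:main} and Lemma~\ref{lem:combined} by checking three things: perfect completeness for $\ket{\psi_\theta}$, that the verification operator for the dressed target has the same spectral gap as for $\ket{\psi}$, and that each round remains efficient.

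\emph{Completeness.} Since $U_\theta$ is diagonal in the computational basis, it commutes with $\texttt{Par}$. Hence $\ket{\psi_\theta}$ has the same definite parity $\sigma$ as $\ket{\psi}$ and passes the parity test with certainty. The correlation test is an adaptive protocol of the type analysed in Sec.~\ref{sec:qsv}: the last qubit $k_2$ is measured in a basis containing the conditional state of $\ket{\psi_\theta}$. That section's argument then gives $\Omega_\theta\ket{\psi_\theta}=\ket{\psi_\theta}$, where $\Omega_\theta=(\Pi_\sigma+\Omega_{0,\theta})/2$.

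\emph{Soundness.} I would first show that $\Pi_\sigma E_{k,\theta}\Pi_\sigma = U_\theta(\Pi_\sigma E_k\Pi_\sigma)U_\theta\dgg$.
\begin{itemize}
\item Fix $k$ and a branch $\z^{(k)}$. The unnormalized conditional two-qubit state of $\ket{\psi_\theta}$ is $e^{i\theta(\z,b_1b_2)}\alpha\ket{b_1b_2}+e^{i\theta(\z,\overline b_1\overline b_2)}\beta\ket{\overline b_1\overline b_2}$, i.e.\ the restriction of $U_\theta$ to that branch applied to $\ket{\psi_{\z^{(k)}}}$.
\item The block structure of Eq.~\eqref{eq:condition-verify-matrix} uses only the two-term form $\alpha\ket{b_1b_2}+\beta\ket{\overline b_1\overline b_2}$, with $\alpha,\beta$ replaced by the phased coefficients. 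So Eq.~\eqref{eq:rank1} holds with $\ket{\psi_{\z^{(k)}}}$ replaced by its dressed version.
\item The projector $\dyad{\z^{(k)}}$ on the measured qubits commutes with $U_\theta$, so the blocks assemble into the claimed conjugation.
\end{itemize}
Consequently $L_\theta=U_\theta L U_\theta\dgg$, which is unitarily equivalent to $L$. In particular $\gamma(L_\theta)=\gamma(L)\ge 2/(d(d-1))$ by Theorem~\ref{thm:main}; equivalently, $L_\theta$ maps to the same certification walk $P$, which depends only on $\pi$. The proof of Lemma~\ref{lem:combined} uses only parity block-diagonality and $\Omega_{0,\theta}\le\id$, and both still hold. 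It therefore gives $\gamma(\Omega_\theta)\ge \min\{1/(d(d-1)),1/2\}=1/(d(d-1))$. Substituting into Eq.~\eqref{eq:Ntests} yields the stated $N$.

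\emph{Efficiency.} Per copy, the verifier first computes $\alpha,\beta$ for the undressed $\ket{\psi}$ from $\Gamma$ in $O(d^3)$ time, as in the protocol section. It then evaluates the two phases $\theta(\z,b_1b_2)$ and $\theta(\z,\overline b_1\overline b_2)$ in $\mathrm{poly}(d)$ time. It also needs the zero-probability rejection check, which is a Pfaffian evaluation via Eq.~\eqref{eq:PfaffianProb}.

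The main point to get right is the conditional-state step: the branch conditional state of the dressed target must be exactly the phase-rotated two-term state, including the Jordan--Wigner sign $s_{\mathrm{JW}}$. Since $U_\theta$ acts diagonally on full bitstrings, this should be a direct computation with no Gaussianity required. That is why the gap bound, which depends only on $\pi$, transfers.
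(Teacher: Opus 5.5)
Your proposal is correct and follows essentially the paper's argument: $U_\theta$ preserves the parity sectors and leaves $\pi$ unchanged, so the certification walk, and hence $\gamma(L)$ and Lemma~\ref{lem:combined}, carry over, and the only extra classical cost is evaluating $\theta$ on the observed branch. Your identity $L_\theta=U_\theta L U_\theta\dgg$ on the target's parity sector is a more explicit form of the paper's remark that the relevant spectrum depends on the target only through $\pi$.
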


\begin{proof}
The verification operator depends on the target only through $\pi$, which is invariant under $U_\theta$. Thus, the spectral gap bound of Theorem~\ref{thm:main} applies verbatim. The classical processing computes the Gaussian conditional amplitudes at cost $O(d^{3})$ as before, and additionally evaluates $\theta$ on the observed outcome string at cost $\mathrm{poly}(d)$ by Definition~\ref{def:dressed}.
\end{proof}

\paragraph*{Phase dressing and non-Gaussianity.} Phase-dressing does not always inject non-Gaussianity. For $\theta(\x)=\sum_{j}\varphi_{j}x_{j}$ one has $U_{\theta}=\exp(i\sum_{j}\varphi_{j}a_{j}^{\dagger}a_{j})$, a passive FLO transformation (single-qubit $Z$ rotations in the qubit picture), so $\ket{\psi_\theta}$ remains Gaussian. 
For $\theta(\x)=\sum_{j<k}\varphi_{jk}x_{j}x_{k}$ one obtains a product of controlled-phase operations,
\begin{align}\label{eq:cphase}
    U_{\theta}=\exp\Big(i\sum_{j<k}\varphi_{jk}\,n_{j}n_{k}\Big)
    = \prod_{j<k}\mathrm{CP}_{jk}(\varphi_{jk}),
\end{align}
which is quartic in the fermionic operators and therefore outside the group of FLO transformation.The four-mode family
\begin{align}\label{eq:hebenstreit}
    \ket{\Psi_\theta} = \frac{ \ket{0000} + \ket{0011} + \ket{1100} + e^{i\theta}\ket{1111} }{2},
\end{align}
is an example. $\ket{\Psi_{\theta}}$ is non-Gaussian for all $\theta\neq0 \pmod{2\pi}$.
Every pure fermionic non-Gaussian state is a magic state for matchgate computation \cite{hebenstreit2019all}, and on four modes every such state of even parity is equivalent, under an FLO transformation, to a member of~\eqref{eq:hebenstreit} \cite{hebenstreit2019all}.

\begin{figure}[tb]
    \centering
    \includegraphics{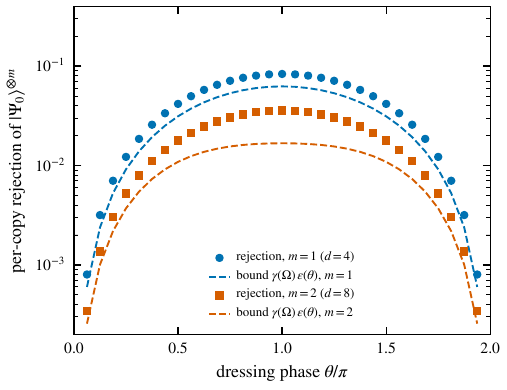}
    \caption{\label{fig:impostor}
    \textbf{Detection of a Gaussian impostor.} Per-copy probability that Algorithm~\ref{algo}, run with target $\ket{\Psi_\theta}^{\otimes m}$ of Eq.~\eqref{eq:hebenstreit}, rejects the Gaussian state
    $\ket{\Psi_{0}}^{\otimes m}$, for $m=1$ ($d=4$) and $m=2$ ($d=8$). Markers are exact values of $1-\bra{\Psi_0}^{\otimes m}\Omega\ket{\Psi_0}^{\otimes m}$.
    Dashed lines are the guaranteed lower bound $\gamma(\Omega)\,\epsilon(\theta)$ with $\gamma(\Omega)=1/d(d-1)$ and $\epsilon(\theta)$ from
    Eq.~\eqref{eq:impostor-infid}.  The rejection probability vanishes only at $\theta=0$ for which the state is Gaussian. Rejection is per copy and compounds to $1-(1-p)^{N}$ over $N$ copies.}
\end{figure}

The Gaussian member $\ket{\Psi_0}$ is a doublet of BCS states \eqref{eq:pair-example} with $u_r = v_r = 1/\sqrt{2}$, and the rest of the family is generated by a single controlled-phase gate acting across the two pairs,
\begin{align}
    \ket{\Psi_\theta} &= \mathrm{CP}_{jk}(\theta) \ket{\Psi_{0}},
    \qquad j\in\{1,2\},\ k\in\{3,4\}. 
\end{align}
$\ket{\Psi_{\theta}}$ is efficiently phase-dressed in the sense of Definition~\eqref{def:dressed} and its measurement distribution is uniform on $\{0000,0011,1100,1111\}$ for all $\theta$.
Corollary~\ref{cor:dressed} thus certifies the entire family \eqref{eq:hebenstreit}, and tensor products thereof, with the same sample complexity as in the Gaussian case. The protocol differs from the Gaussian case only in the final adaptive basis; conditioned on outcome $\ket{11}$ on the first two modes and $x\in\{+,-\}$ on the third, the final mode is measured in $(\ket{0}\pm e^{i\theta}\ket{1})/\sqrt{2}$ in place of $(\ket{0}\pm\ket{1})/\sqrt{2}$.

Figure~\ref{fig:impostor} shows the single-copy rejection probability of Algorithm~\ref{algo} when the target is $\ket{\Psi_\theta}^{\otimes m}$ and the lab state is the Gaussian impostor $\ket{\Psi_{0}}^{\otimes m}$. 
Since the two states are indistinguishable by any computational-basis measurement, the rejection is due entirely to the adaptive final measurement.
The rejection probability is bounded below by $\gamma(\Omega)\,\epsilon(\theta)$ (Eq.~\eqref{eq:plm}), 
where
\begin{align}\label{eq:impostor-infid}
    \epsilon(\theta) = 1 - \abs{\av{\Psi_{0}|\Psi_{\theta}}}^2
    = 1-\frac{10+6\cos\theta}{16}
    = \tfrac{3}{4}\sin^{2}(\theta/2),
\end{align}
and it exceeds this bound throughout, vanishing only at $\theta= 0$, where target and impostor coincide.

Not every non-Gaussian state falls within the phase-dressed class. The state $(\ket{0011} + e^{i\theta} \ket{1100})/\sqrt{2}$ utilized in a fermionic model of universal quantum computation and fermionic quantum advantage schemes \cite{bravyi_universal_2006,Ivanov2017,oszmaniec2022sampling}, for instance, cannot be written as a phase dressing of any Gaussian state.
A certification walk tailored to this state would require transitions between strings that differ in all four bits, hence a four-site Glauber dynamics rather than the two-site Glauber dynamics underlying our down-up walk. We leave open the question whether the zero-free argument extends to this setting.

%-------------------------------------
\section{Summary and Discussion}\label{sec:conclusion}
%-------------------------------------

We have given a protocol that certifies any $d$-mode pure fermionic Gaussian state that is both sample-efficient and computationally efficient, using only single-qubit measurements of which only one is adaptive. In particular, it requires
\begin{align}
    N(\epsilon,\delta,d) \le \left\lceil \frac{d(d-1)}{\epsilon}\ln\frac{1}{\delta} \right\rceil 
\end{align}
copies to certify to infidelity $\epsilon$ with confidence $1-\delta$. The scaling is optimal in $\epsilon$ and $\delta$. The $O(d^2)$ scaling and the single adaptive qubit places our protocol squarely in the middle between the decision-tree-basis protocol \cite{gupta2026few}, which consumes $O(d\epsilon^{-1})$ copies at the cost of measuring all $d$ qubits adaptively, and the estimation-based shadow-overlap protocol \cite{huang2025certifying}, which removes adaptivity entirely at the cost of returning to $O(d^4\epsilon^{-2})$.
Whether the three protocols occupy optimal points on the trade-off between $d$-dependence and adaptivity, or whether intermediate protocols can improve on both simultaneously, is posed as an open problem below.

The protocol can be thought of as a variant of the shadow-overlap protocol \cite{huang2025certifying} adapted to fermionic parity superselection. We mostly measure in the computational basis, with a parity test certifying that the lab state lies in the correct parity sector, and a single $X$ measurement on the penultimate qubit before the final adaptive step. The $X$ measurement is what makes the protocol work: within a fixed-parity sector, the conditional state of the final two modes is two-dimensional, and measuring one of them in the $X$ basis realizes exactly the rank-one projector onto the conditional two-mode target state using only single-qubit measurements. The resulting verification operator maps to the certification walk whose rapid mixing implies efficient certification, in parallel with the analysis of Ref.~\cite{huang2025certifying}.

The main departure from Ref.~\cite{huang2025certifying} is in how soundness is established. Their proof of rapid mixing invokes a Haar-typicality argument that does not apply to the structured class of Gaussian states. To obtain a guarantee for \emph{all} pure Gaussian states, we instead establish that the generating polynomial of the stationary distribution of the walk---the computational-basis distribution $\pi(\x) = \abs{\av{\x|\psi}}^2$ of the target---is zero-free in the open right half-plane. This zero-free region, which follows from the Pfaffian structure of $\pi$ and the orthogonality $\Gamma^{\T}\Gamma = \id$ that characterizes pure Gaussian states, implies the spectral-gap bound via results connecting zero-free regions to the relaxation of multi-level down-up walks \cite{anari2019logII,anari2021entropic,alimohammadi2021fractional_arxiv}, tools that emerged only recently from the theory of high-dimensional expanders and the study of counting and sampling problems in combinatorics. Their appearance here opens, in our view, a new point of contact between two previously separate programs: the framework of analyzing state certification via Markov chains \cite{huang2025certifying}, and the spectral and entropic independence techniques of \cite{anari2019logII,anari2021entropic,alimohammadi2021fractional_arxiv}. Whether this connection is an isolated accident or a more systematic bridge is, we think, the most interesting question this work raises.

We now list several open directions.

\paragraph*{Trade-off between $d$-dependence and adaptivity.}\label{sec:open:dtradeoff} As summarized above, our protocol sits between the shadow-overlap protocol \cite{huang2025certifying} and the decision-tree-basis protocol \cite{gupta2026few} on the trade-off between $d$-dependence and adaptivity, using a single adaptive qubit to attain $\epsilon^{-1}$ scaling at $O(d^2)$ copies. Whether the $d^2$ prefactor can be reduced without increasing the number of adaptive measurements, or conversely whether the single adaptive qubit can be eliminated at a smaller cost than the $d^2\epsilon^{-1}$  factor paid by the shadow-overlap protocol, whose non-adaptive variant costs $O(d^4\epsilon^{-2})$, are both open. 
What one would hope for is a unified lower bound that captures the trade-off between $d$-dependence and the number of adaptive measurements, clarifying whether the three protocols above occupy optimal points on the trade-off curve or whether intermediate protocols can do better on both axes simultaneously.

\paragraph*{Exploration of proof techniques} The proof of Theorem~\ref{thm:main} naturally suggests two complementary directions, seemingly in opposite spirits.
On the one hand, the zero-freeness argument and the broader toolkit of the spectral and entropic independence program \cite{anari2019logII,anari2021entropic,alimohammadi2021fractional_arxiv}, which offers several routes to establishing rapid relaxation beyond zero-freeness alone, may apply to other structured families of quantum states. Whether this is the case is wide open, and identifying which state families admit a zero-free generating polynomial seems to us a natural entry point. 
On the other hand, the current proof imports tools from combinatorics and theoretical computer science that, while powerful, are not part of the standard toolkit of quantum many-body physics. Developing a more physics-native proof of the same result that works directly with covariance matrices, Wick's theorem, and fermionic correlation functions
might give more intuition about which properties of more general quantum states allow efficient certification,
pointing toward certification protocols for other state families.

\paragraph*{Average-case spectral gap for Haar-random fGS.} Numerics up to $d = 14$ (Fig.~\ref{fig:gap}) show that Haar-random Gaussian states exhibit a spectral gap scaling as $\Theta(d^{-1})$, a factor of $d$ better than the worst-case bound. An analytic proof of this average-case scaling would imply that the sample complexity for certifying a typical fGS is $O(d\epsilon^{-1}\ln\delta^{-1})$, matching the worst-case sample complexity of the decision-tree-basis protocol of \cite{gupta2026few}, while using only one adaptive measurement rather than $d-1$. 
As the natural route to prove rapid relaxation on average of Ref.\cite{huang2025certifying} does not transfer here, as discussed in Sec.~\ref{sec:related}, a new proof technique seems necessary to establish the average-case spectral gap.

\paragraph*{Certifying mixed and thermal Gaussian states.} The class of fermionic Gaussian states is not confined to pure states \cite{bravyi2004lagrangian}, but our analysis relies on purity in an essential way. The spectral-gap bound rests on $\S_1$-sector stability of the generating polynomial $Z_\pi$, which in turn requires the covariance matrix to satisfy $\Gamma^{\T}\Gamma = \id$, a condition that holds if and only if the Gaussian state is pure. 
For a mixed Gaussian state the singular values of $\Gamma$ are strictly less than one, and the best sector stability one could hope for is $\mathcal{S}_\alpha$ for some $\alpha < 1$. Such a result would not connect to the $d\leftrightarrow d-2$ down-up walk; $\mathcal{S}_\alpha$-stability with $\alpha < 1$ implies rapid mixing only for $d\leftrightarrow d-l$ walks with step size $l > 2$, which would require measuring more than one qubit adaptively. Extending the protocol to mixed targets would therefore require either a new proof strategy or a substantially different measurement scheme.

\paragraph*{Certification in the presence of correlated preparations and adversaries.}
Our protocol assumes that the lab-prepared states are independent across rounds.
In practice, correlations can arise from drift, crosstalk, or a malicious device, and a stronger guarantee against an adversary who produces an arbitrary correlated or entangled state across all $N$ copies is desirable.
The adversarial framework of Zhu and Hayashi \cite{zhu2019framework} provides a general recipe called \emph{hedging} to convert a non-adversarial verification operator $\Omega$ to an adversarially-robust one, $\Omega_{p^*} = (1-p^*)\Omega + p^*\id$, with some suitable probability $p^*$; the hedged protocol $\Omega_{p^*}$ is realized simply by flipping a coin before each round and accepting unconditionally with probability $p^*$.
Hedging comes with an overhead in sample complexity that depends not only on the spectral gap, but also the smallest eigenvalue $\lambda_{\textrm{min}}(\Omega)$. We leave a detailed investigation into this construction as a future direction.

\paragraph*{Robustness.} This work considers the common QSV framework in which the null hypothesis corresponds to the lab state being \emph{exactly} the target: $\expval{\rho}{\psi}=1$. In practice, no preparation is perfect, and a more operationally meaningful test would accept any state within an $\epsilon'$-ball of the target.
Formally, a \emph{robust} certification protocol distinguishes  $\expval{\rho}{\psi}\ge 1-\epsilon'$ from $\expval{\rho}{\psi}\le 1-\epsilon$ for some constants $0 < \epsilon' < \epsilon$ with confidence $1-\delta$.
To our knowledge, the only certification protocol achieving this particular notion of robustness with single-qubit measurements is the recent work of Coladangelo, Li, and Slote \cite{coladangelo2026robust}. Like the shadow-overlap protocol \cite{huang2025certifying}, their guarantee is for Haar-typical states, but their oracle is stronger, requiring amplitude access in both the computational basis and the Hadamard (Pauli-$X$) basis.
A robust certification protocol with a worst-case guarantee for any pure fGS would be of practical interest.

For phase-dressed states,
approximating an output probability in the Hadamard basis to multiplicative error is at least as hard as approximating the same task for an IQP output probability, which is \#P-hard \cite{bremner2016average}.
To see this, let $\ket{\psi}$ be an equal superposition over a fixed-parity sector, which is Gaussian (see, e.g., \cite[Lemma 16]{dias2026gate-decomposition}). 
Applying a diagonal phase-dressing $U_{\theta}$ to this state produces the output state of a general IQP circuit except without the final Hadamard layer. 
Measuring $U_{\theta}\ket{\psi}$ in the Hadamard basis is therefore equivalent to measuring the corresponding IQP circuit in the computational basis, so estimating the outcome probability is \#P-hard. Whether this hardness result genuinely obstructs oracle-based certification in the style of \cite{coladangelo2026robust} for phase-dressed states is left open.

Another route to robustness is to convert the protocol into a fidelity estimator from the pass/fail frequency \cite{zhu2019framework}. (We note that robustness in this sense is incomparable to that of \cite{coladangelo2026robust} discussed above as the two notions address fundamentally different tasks. In particular, estimation provides no guarantee that states within a prescribed ball are accepted with certainty.) 
However, this route requires controlling the smallest eigenvalue $\lambda_{\textrm{min}}(\Omega)$ in addition to the spectral gap and the estimation is tight only when $\lambda_{\textrm{min}} (\Omega) = \lambda_2 (\Omega)$, i.e., when the protocol is \emph{homogeneous}. Ours is not homogeneous, but even setting that aside, this estimation route cannot be competitive on sample complexity. For homogeneous protocols, the fidelity estimator requires \cite{zhu2019framework}
\begin{align}
    N \;\le\; \frac{1}{4\,\gamma(\Omega)^2 \epsilon^2}
\end{align}
samples to estimating the fidelity to additive error $\epsilon$, which gives $N=O(d^4\epsilon^{-2})$ for our protocol, matching that of the estimation-based variant of the shadow-overlap protocol \cite{huang2025certifying}, and a factor of $d$ worse than $O(d^3\epsilon^{-2})$ of the fidelity witness \cite{gluza2018fidelity} from the improved analysis in Sec.~\ref{sec:related}.

\paragraph*{Extension to other non-Gaussian states.} The phase-dressed states of Sec.~\ref{sec:dressed} already extend our protocol beyond Gaussian states. However, a more substantial extension would be to states produced by a bounded number of non-Gaussian gates of the kind studied in fermionic learning algorithms \cite{mele2025few_fermionic,reardon2024improved}.
Another interesting direction concerns certification of outputs of FLO circuits.
Given classical descriptions of an FLO unitary $U$ and a non-Gaussian magic input state $\ket{\phi}$, the task is to certify that the device produced $U\ket{\phi}$. This is directly relevant to validating the output of the fermionic quantum advantage scheme \cite{oszmaniec2022sampling}.

\paragraph*{Genuinely fermionic certification protocol.} Recently, Ref.~\cite{koizumi2026provably} posed the question of whether certification of fermionic states can inherit the advantage of number-conserving shadow tomography. For $d$-mode state with fixed particle number $n$, all $k$-RDMs can be estimated using  $O(n^k \epsilon^{-2})$ samples independent of $d$. (The case $k=1$ is relevant for Slater determinants.) 
Our qubit-based protocol does not directly address this question, since it breaks parity superselection by measuring in the $X$-basis. The complexity of a genuinely fermionic certification protocol, i.e., one that employs only parity-conserving operations, is therefore still an open problem.

%-------------------------------------
\section*{Acknowledgment}
%-------------------------------------

ND thanks Chulalongkorn university and Siam Quantum Square for their hospitality during his visits, and thanks Marek Gluza for encouraging discussion. The authors performed AI-assisted numerical experiments to determine the scaling of the spectral gap of $P$ in Eq.~\eqref{eq:certification_walk}, including the absolute constant, which was then fed to Claude Fable 5 to produce a one-shot proof of Theorem~\ref{thm:main} in the number-conserving case. Later, Claude was able to extend the proof to the general Gaussian case.
The authors verified both proofs, revised, and substantially expanded the exposition into Appendices~\ref{app:markov} to \ref{app:sub-proof}. Interestingly, Claude gave the key identity~\eqref{eq:CayleyIdentity} without a derivation; the proof of Proposition~\ref{prop:CayleyTransform} using Grassmann integrals is our own. In addition, we used GPT-6 Astra to improve the fidelity witness bound in Sec.~\ref{sec:related}. All other parts of the work are done by us (human).

This research has received funding support from the NSRF via the Program Management Unit for Human Resources \& Institutional Development, Research and Innovation [grant number B39G690073]. 

\bibliography{ref}

\clearpage
\appendix

%==========================
\section{Grassmann--Berezin integral representation of the Pfaffian}\label{app:grassmann}
%==========================
Pfaffians give compact expressions for state overlaps and computational-basis probabilities of fermionic Gaussian states in terms of their covariance matrices $\Gamma$. Here we review the relevant identities, then introduce the Grassmann--Berezin integral representation of the Pfaffian and its application to overlaps of fermionic Gaussian states. The same representation is used in the spectral-gap proof to analyze the generating polynomial of the computational-basis distribution.

Let $N=2d$ be even and $\{e_j\}_{j=1}^N$ be a canonical basis for $\C^N$.
Given a complex antisymmetric matrix
$A\in\mathbb{C}^{N\times N}$ and its associated 2-form 
\begin{align}
    \omega_A = \frac{1}{2} \sum_{j,k} A_{jk} e_j \wedge e_k,
\end{align}
the Pfaffian is defined as the top coefficient
\begin{align}
    \frac{1}{d!} \omega_A^{\wedge d} = \Pf(A) \,\vol,
\end{align}
once one has chosen a volume form.
The Pfaffian is defined to be zero for all odd-dimensional matrices.

Two well-known identities are
\begin{align}
    \Pf(A)^2 &= \det(A), \label{eq:pfaffian-det}\\
    \Pf(RAR^{\T}) &= \det(R)\,\Pf(A), \label{eq:pfaffian-basis-change}
\end{align}
for any $R\in\mathbb{C}^{N\times N}$.
Eq.~\eqref{eq:pfaffian-det} allows the Pfaffian to be computed up to a sign in $O(d^3)$ time. This is sufficient for computing the computational-basis probabilities in Eq.~\eqref{eq:PfaffianProb},
\begin{align}
    \pi(\vb{x})
    = \abs{\av{\vb{x}|\psi}}^2
    = \frac{1}{2^d}
    \abs{\Pf\bigl(\Gamma+D_{\x}J\bigr)},
\end{align}
which depend only on the absolute value of the Pfaffian.
In contrast to the determinant, the Pfaffian is basis-\emph{dependent} as evident from Eq.~\eqref{eq:pfaffian-basis-change}. Under an orthogonal change of basis $R$, the Pfaffian acquires a factor of $\det(R)=\pm1$ depending on whether $R$ preserves or reverses the orientation. In particular, if $R$ is a permutation of basis vectors, $\det(R)$ is the sign of the permutation. 

Changing the orientation of the basis corresponds to changing the volume form by a sign, which changes the Pfaffian by the same sign.
In the fermionic setting, this choice of sign is equivalent to a choice of ordering for the Majorana operators. In the ``split ordering" that we use, in which all odd (resp. even) Majorana operators are grouped together, the covariance matrix of the Fock vacuum $\ket{0_F}$ is
\begin{align}
    J = \mqty(0&\id_d\\-\id_d&0),
\end{align}
whereas in the more common ``interleaved ordering", in which $c_j$ and $c_{d+j}$ are placed next to each other, the covariance matrix of the vacuum is 
\begin{align}
    J' = \mqty(0&1\\-1&0)^{\!\oplus d}.
\end{align}
The choice amounts to requiring either $J$ or $J'$ to have unit Pfaffian \cite{caracciolo2013pfaffians}.
The two matrices are related by a permutation of the basis vectors with $\sum_{j=1}^d (j-1) = d(d-1)/2$ transpositions. Hence, 
\begin{align}
    \Pf(J)
    =(-1)^{d(d-1)/2} \, \Pf(J').
\end{align}

In the interleaved ordering for which $\Pf(J')=1$, the Pfaffian is given explicitly by the polynomial
\begin{align}\label{def:pfaffian}
    \Pf(A)
    = \frac{1}{2^d d!}
    \sum_{\tau\in S_N}
    \textrm{sgn}(\tau)\,
    A_{\tau(1),\tau(2)}
    \cdots
    A_{\tau(N-1),\tau(N)},
\end{align}
where $S_N$ denotes the permutation group on $N$ elements and $\textrm{sgn}(\tau)$ is the sign of a permutation $\tau$.
For example, for $N=4$, Eq.~\eqref{def:pfaffian} reduces to
\begin{align}
    \Pf(A)
    = A_{12}A_{34}-A_{13}A_{24}+A_{14}A_{23},
\end{align}
from which one readily verifies that
\begin{align}
    \Pf(J)=-1, \qquad
    \Pf(J')=1.
\end{align}
With this convention, the Pfaffian of the covariance matrix is precisely the parity of an fGS \cite{bravyi2017impurity}: denoting the eigenvalues of the parity operator $\texttt{Par} = \prod_{j=1}^d (-ic_j c_{d+j})$ by $\sigma \in \{\pm1\}$, we have
\begin{align}
    \Pf (\Gamma) = \sigma.
\end{align}

The Pfaffian admits a useful representation as a Gaussian integral of  anticommuting Grassmann variables \cite{bravyi2004lagrangian,bravyi2017impurity,caracciolo2013pfaffians}. For an introduction to integration over anticommuting variables, see Ref.~\cite{efetov1997supersymmetry}; a pedagogical discussion of the Berezin rules and Gaussian determinant representations can be seen in Ref.~\cite[Sec.~3.2]{chotibut2026random}. 

A complex Grassmann algebra $\mathcal G_N$ is generated by $\{\theta_j\}_{j=1}^N$ satisfying
\begin{align}
    \theta_j \theta_k = -\theta_k \theta_j.
\end{align}
Due to the anticommutativity, $\theta_j^2=0$ for all $j$, which makes $\mathcal G_N$ extremely simple: an arbitrary element $f\in\mathcal G_N$ can be expressed as a complex linear combination of Majorana monomials (or Majorana configurations),
\begin{align}
    f(\theta) = \sum_{X \subseteq [N]} \alpha_X \,\theta^X, \qquad \theta^X \coloneqq \theta_1^{x_1} \cdots \theta_N^{x_N}. 
\end{align}
Analytic functions of Grassmann variables are defined through their Taylor series, which truncates after the linear term. For example, the exponential function reduces to 
\begin{align}\label{eq:GrassmannExp}
\exp(\theta^X)=1+\theta^X.
\end{align}

A Grassmann-Berezin integral is defined as a linear functional that extracts the coefficient of the variable being integrated over. In particular, the integral is completely specified by its action on a linear Grassmann polynomial,
\begin{align}\label{eq:grassmann-integral}
    \int d\theta \, (a + b\theta) = b.
\end{align}
Thus, Grassmann integration is algebraically the same as differentiation,
\begin{align}
    \int d\theta_j 
    \equiv
    \frac{\partial}{\partial\theta_j}: \mathcal G_N \to \mathcal G_{N-1}, 
\end{align}
and should not be thought of as a Riemann integration. 
The vanishing of the constant term can be understood from the requirement of translational invariance of the integration measure,
\begin{align}
    \int d\theta f(\theta + \eta) = \int d\theta f(\theta),
\end{align}
where $\eta$ is an independent Grassmann variable. Taking $f(\theta)=\theta$ gives
\begin{align}
    \int d\theta \, \theta &= \int d\theta (\theta + \eta) \\
    &= \int d\theta \, \theta + \eta \int d\theta \, 1
\end{align}
Therefore, $\int d\theta 1 =0$. Fixing the normalization, $\int d\theta \, \theta=1$, then yields the rule \eqref{eq:grassmann-integral}.
With the convention
\begin{align}
    \int D\theta \: \theta^X =
    \begin{cases}
        1, & X=[2d],\\
        0, & \text{otherwise},
    \end{cases}
\end{align}
equivalent to choosing the integration measure 
\begin{align}
D\theta \coloneqq d\theta_{N}\cdots d\theta_1, 
\end{align}
the fundamental Gaussian integrals are \cite{caracciolo2013pfaffians}
\begin{align}
    \int D\theta\, \exp\left( \frac{1}{2}\theta^{\T} M \theta \right) &= \Pf(M), \\
    \int D\eta\, \exp\left( \theta^{\T}\eta +  \frac{1}{2}\eta^{\T} M \eta \right) &= \Pf(M) \exp\left(\frac{1}{2} \theta^{\T} M\,\inv \theta \right),
\end{align}
where we consider $\theta=(\theta_1, \dots, \theta_N)$ as a column vector and $\theta^{\T}$ its transpose.

The Grassmann representation provides a convenient framework to compute quantities associated to Gaussian states.
Any linear operator $B$ on the Fock space can be expanded in the basis of Majorana monomials,
\begin{align}
    B = \sum_{X \subseteq [N]} B_X c^X, \qquad
    c^X \coloneqq c_1^{x_1} \cdots c_{N}^{x_{N}}.
\end{align}
The expansion naturally associate with $B$ the Grassmann generating polynomial
\begin{align}
    B(\theta) = \sum_X B_X \theta^X.
\end{align}
The trace of a product of operators can then be expressed as a Grassmann integral
\begin{align}\label{eq:grassmann-trace-inner-product}
    \Tr(AB) = \frac{1}{2^d} \int D(\theta) D(\eta) A(\theta) B(\eta) \exp({\theta^{\T}\eta}).
\end{align}
The Grassmann generating polynomial of an fGS $\Psi=\dyad{\psi}$ with covariance matrix $\Gamma$ takes the Gaussian form
\begin{align}
    \Psi(\theta) =
    \frac{1}{2^d}
    \exp\left(
        -\frac{\mathrm{i}}{2}\theta^{\T}\Gamma\theta
    \right).
\end{align}
By using Eq.~\eqref{eq:grassmann-trace-inner-product}, the overlap between two pure Gaussian states $\Psi$ and $\Phi=\dyad{\phi}$ can be expressed as \cite{bravyi2017impurity}
\begin{align}
    \abs{\av{\psi|\phi}}^2 = \Tr(\Psi \Phi) 
    &= \frac{1}{2^d} \,\Pf(\Gamma_{\phi})\, \Pf(\Gamma_{\psi} + \Gamma_{\phi}) \nonumber\\ 
    &= \frac{\sigma}{2^d}\, \Pf(\Gamma_{\psi} + \Gamma_{\phi}), 
\end{align}
where in the last line we have specialized to the interleaved ordering of Majorana operators for which $\Pf(\Gamma)=\sigma$. Note that $\sigma$ here is the parity of either state, as states of opposite parity are orthogonal.

%==========================
\section{Verification operator as the certification walk}\label{app:verify-op-is-markov}
%==========================

In this appendix, we show that the verification operator
\begin{align}
  L = \binom{d}{2}^{\!\!-1} \sum_{k}\sum_{\z^{(k)}} \dyad{\smash{\z^{(k)}}} \otimes \dyad{\psi_{\z^{(k)}}}
\end{align}
has the same spectrum as the transition matrix of a Markov chain we refer to as the certification walk.

Suppose we write the target Gaussian state in the computational basis as
\begin{align}
    \ket{\psi} = \sum_{\x} \sqrt{\pi(\x)} \: e^{i \phi(\x)} \ket{\x}.
\end{align}
For each branch $\z^{(k)}$ of $E_k$,
if $\x$ and $\y$ are compatible with the same branch, then they agree on all qubits in $[d]\backslash k$. Since the target has definite parity, the two states must either coincide or differ on both qubits in $k$, cf. Eq.~\eqref{eq:conditional-two-qubit}. That is, either $\x=\y$ or $\abs{\x \oplus \y}=2$.

It is convenient to consider the support 
\begin{align}
    V = \{\x \mid \pi(\x) >0\}
\end{align}
as the vertex set of a graph in which two vertices $\x,\y \in V$ are connected, $(\x,\y)\in E$, if and only if they have Hamming distance two, i.e., $|\x \oplus \y|=2$. 
When the distribution has full support on the relevant parity or particle-number sector, this graph is the corresponding full induced subgraph of the hypercube. In particular, for a general pure fGS with fixed parity, $G=(V,E)$ is the halved-cube graph $\frac{1}{2}Q_d$, while for an fGS with fixed particle number $n$, it is the Johnson graph $J(d,n)$. If the distribution does not have full support on the relevant sector, $G$ is instead an induced subgraph of the corresponding halved-cube or Johnson graph.
Alternatively, identifying each bitstring $\x$ with a subset $X \subseteq [d]$ via $j\in X$ if and only if $x_j=1$, the condition $|\x \oplus \y|=2$ corresponds to moving between subsets whose symmetric difference has size two: 
\begin{align}
    \abs{X\oplus Y} \coloneqq  \abs{(X\backslash Y) \cup (Y\backslash X) } =2.
\end{align}
Both viewpoints will be useful when interpreting the transitions as a so-called multi-level down-up walks.

We begin with the following lemma.

\begin{lemma}
The matrix elements of $L$ are given by
    \begin{align}
\mel{\x}{L}{\y} = 
    \begin{cases}
        {\displaystyle {d \choose 2}^{\!\!-1} \frac{ \sqrt{\pi(\x) \pi(\y)} }{\pi(\x)+\pi(\y)} \: e^{i(\phi(\x) - \phi(\y))} }, 
        &(\x,\y)\in E,  \\
        {\displaystyle
        {d \choose 2}^{\!\!-1} \mkern-9mu \sum_{\substack{\x' \\ |\x \oplus \x'|=2}} \mkern-9mu \frac{\pi(\x)}{\pi(\x) + \pi(\x')} }, 
        &\x=\y, \\
        0, &\textrm{otherwise.}
    \end{cases}
\end{align}
\end{lemma}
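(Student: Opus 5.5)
The plan is a direct computation of the matrix elements of each summand $\dyad{\smash{\z^{(k)}}}\otimes\dyad{\psi_{\z^{(k)}}}$, followed by summation over the pairs $k$. Fix $k=\{k_1,k_2\}$ and a branch $\z^{(k)}$. Writing $\ket{\psi}=\sum_{\x}\sqrt{\pi(\x)}e^{i\phi(\x)}\ket{\x}$, the unnormalized conditional state $(\bra{\smash{\z^{(k)}}}\otimes\id)\ket{\psi}$ is supported on the bitstrings agreeing with $\z^{(k)}$ off $k$. By the definite-parity argument already given (cf. Eq.~\eqref{eq:conditional-two-qubit}), at most two such strings lie in $V$, namely $\x$ and $\x\oplus e_{k_1}\oplus e_{k_2}$. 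Hence
\begin{align}
\ket{\psi_{\z^{(k)}}}=\frac{\sqrt{\pi(\x)}e^{i\phi(\x)}\ket{x_{k_1}x_{k_2}}+\sqrt{\pi(\y)}e^{i\phi(\y)}\ket{y_{k_1}y_{k_2}}}{\sqrt{\pi(\x)+\pi(\y)}},
\end{align}
where $\y$ is the partner string. This holds provided $\pi(\x)+\pi(\y)>0$. Branches with zero total weight contribute nothing, in accordance with the convention for zero-probability branches.

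Next, I will read off $\mel{\x}{\dyad{\smash{\z^{(k)}}}\otimes\dyad{\psi_{\z^{(k)}}}}{\y}$. This is nonzero only if both $\x$ and $\y$ agree with $\z^{(k)}$ outside $k$, in which case it equals $\sqrt{\pi(\x)\pi(\y)}e^{i(\phi(\x)-\phi(\y))}/(\pi(\x)+\pi(\y))$. The off-diagonal case $\x\ne\y$ therefore requires $\x\oplus\y$ to be supported exactly on $k$, so $|\x\oplus\y|=2$ and $k=\x\oplus\y$. Exactly one pair $k$ and one branch contribute, giving the $(\x,\y)\in E$ entry after multiplying by $\binom{d}{2}^{-1}$. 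Every other off-diagonal entry vanishes, including pairs $\x,\y$ at Hamming distance two where one string lies outside $V$, since then $\sqrt{\pi(\x)\pi(\y)}=0$.

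For the diagonal entry, I fix $\x$. For every pair $k$, exactly one branch, namely $\x$ restricted to $[d]\setminus k$, contributes $\pi(\x)/(\pi(\x)+\pi(\x'))$ with $\x'=\x\oplus e_{k_1}\oplus e_{k_2}$. As $k$ ranges over all $\binom{d}{2}$ pairs, $\x'$ ranges bijectively over all strings with $|\x\oplus\x'|=2$, which yields the stated sum.

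The main obstacle is the bookkeeping of phases. One has to check that the conditional amplitudes are literally the original amplitudes $\langle \x|\psi\rangle$, so that no Jordan--Wigner sign enters. This is the case because projecting onto computational-basis outcomes of qubits does not reorder anything: the qubit amplitudes are used directly, and the sign $s_{\rm JW}$ appears only when one expresses $\alpha\beta^*$ through $\Gamma$. A second point to handle is the degenerate case in which a term has $\pi(\x)=\pi(\x')=0$; there I take the ratio to be zero, or equivalently restrict the diagonal formula to $\x\in V$. The remaining steps are routine.
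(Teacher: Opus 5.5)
Your proposal is correct and follows essentially the same direct computation as the paper. Like the paper, you write the conditional two-qubit state as the normalized superposition of the two parity-compatible strings with amplitudes $\sqrt{\pi}\,e^{i\phi}$, read off the $2\times 2$ block with normalization $(\pi(\x)+\pi(\y))^{-1}$, and sum over the pairs $k$. Your extra bookkeeping makes explicit what the paper leaves implicit: the uniqueness of $k=\mathrm{supp}(\x\oplus\y)$ for the off-diagonal entries, the bijection $k\leftrightarrow\x'$ in the diagonal sum, and the $0/0$ convention for zero-weight branches.
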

\noindent The matrix elements have the same form as those in Eq.~(13) of Ref.~\cite{huang2025certifying}, with the only difference being the graph structure. In the level-2 protocol of Ref.~\cite{huang2025certifying}, there would also be transitions between vertices of Hamming distance one. Here, such transitions are forbidden by the fermionic parity superselection rule.

\begin{proof}
\textbf{Case I: $(\x,\y)\in E$.}
Writing
\begin{align}
\x=(\z^{(k)},b_1b_2),
\qquad
\y=(\z^{(k)},\overline b_1\overline b_2),
\end{align}
the conditional two-qubit state, Eq.~\eqref{eq:conditional-two-qubit}, takes the form
\begin{align}
\ket{\psi_{\z^{(k)}}}
= \frac{\sqrt{\pi(\x)}\: e^{i \phi(\x)}  \ket{b_1b_2}
+ \sqrt{\pi(\y)}\: e^{i \phi(\y)} \ket{\smash{\overline b_1\overline b_2}} 
}{\sqrt{\pi(\x) + \pi(\y)}}.
\end{align}
Consequently, in the two-dimensional subspace spanned by $\ket{\x}$ and $\ket{\y}$, the corresponding branch of $E_k$ is proportional to
\begin{align}
\mqty(
\pi(\x) & \sqrt{\pi(\x)\pi(\y)} e^{i(\phi(\x)-\phi(\y))} \\
\sqrt{\pi(\x)\pi(\y)} e^{-i(\phi(\x)-\phi(\y))}  & \pi(\y))
\end{align}
with the normalization factor $(\pi(\x)+\pi(\y))^{-1}$.
It follows that
\begin{align}
    \mel{\x}{L}{\y} = {d \choose 2}^{\!\!-1} \frac{ \sqrt{\pi(\x) \pi(\y)} }{\pi(\x)+\pi(\y)} \: e^{i(\phi(\x) - \phi(\y))} 
\end{align}
whenever $(\x,\y)\in E$.

\noindent\textbf{Case II: $\x=\y$.} The diagonal element  
$\mel{\x}{L}{\x}$ receives a contribution proportional to $\pi(\x)$ from every branch $\x'$ such that $|\x \oplus \x'|=2$ (in particular, $\x'$ needs not be in the support $V$) with a distinct normalization factor $(\pi(\x) + \pi(\x'))^{-1}$ for each branch.

\noindent\textbf{Case III.} In the remaining case when $\x$ or $\y$ does not agree with $\z_k$ on $[d]/k$ in the first place, the matrix element is zero.

\end{proof}

Following Ref.~\cite{huang2025certifying}, we can now remove the phases and rescale the basis to obtain a Markov transition matrix. 
By defining the diagonal transformations
\begin{align}
    S = \sum_{\x \in V} \pi(\x) \dyad{\x}, \qquad
    F = \sum_{\x \in V}  e^{i\phi(\x)} \dyad{\x},
\end{align}
the target state can be mapped to the all-ones vector via
\begin{align}
F\dgg S^{\numinv{\frac{1}{2}}}\!\ket{\psi} = \mqty(1&\cdots &1)^{\T}.
\end{align}
We therefore define
\begin{align}
    P =  F\dgg S^{\,\text{-}\hspace{-1pt}\frac{1}{2}} L S^{\,\hspace{-1pt}\frac{1}{2}} F. 
\end{align}
Here, $S\numinv{\frac{1}{2}}$ is understood as the inverse of $S^{\frac{1}{2}}$ on the support $V$, and as zero on its orthogonal complement.
Its matrix elements are
\begin{align}
\mel{\x}{P}{\y} = 
    \begin{cases}
        {\displaystyle {d \choose 2}^{\!\!-1} \frac{ \pi(\y) }{\pi(\x)+\pi(\y)}}, 
        & (\x,\y)\in E,  \\
        {\displaystyle
        {d \choose 2}^{\!\!-1} \mkern-9mu \sum_{\substack{\x' \\ |\x \oplus \x'|=2}} \mkern-9mu \frac{\pi(\x)}{\pi(\x) + \pi(\x')} }, 
        & \x=\y, \\
        0, & \textrm{otherwise.}
    \end{cases}
\end{align}
Notably, the phases $\phi(\x)$ cancel under the similarity transformation, so that $P$ depends only on the computational-basis probabilities $\pi$.
Since $\ket{\psi}$ is a 1-eigenvector of $L$, the all-ones vector is a right eigenvector of $P$ with eigenvalue 1, showing that $P$ is row-stochastic and can be interpreted as a transition matrix of a classical Markov chain we call the \emph{certification walk}.

%========================
\section{Essentials of Markov chains}\label{app:markov}
%------------------------

A Markov chain, or simply a (random) walk, on a state space $\mathcal X$ is a discrete-time stochastic process in which the next state depends only on the current state. It is fully specified by a row-stochastic transition matrix $P \in \R^{\mathcal X \times \mathcal X}$.
We will often use the terms ``walk" and ``transition matrix" interchangeably, referring to $P$ itself as the walk.
The evolution of a distribution $\nu_t: \mathcal X\to \R_{\ge0}$ over states is given by right multiplication, $\nu_{t+1}=\nu_tP$, or in coordinates,
\begin{align}
    \nu_{t+1}(x) = \sum_{y} \nu_t(y) P(y,x).
\end{align}
A distribution $\mu$ is called \emph{stationary} if $\mu P=\mu$. To verify that $\mu$ is stationary, it suffices to check the detailed balance condition:
\begin{align}\label{eq:detailed_balance}
    \mu(x) P(x,y) = \mu(y) P(y,x) \quad \text{for all } x,y \in \mathcal X.
\end{align}
If Eq.~\eqref{eq:detailed_balance} holds, we say that the walk is \emph{reversible} with respect to $\mu$, or simply that $P$ or $\mu$ is reversible when the meaning is clear from the context.

The transition matrix also acts on functions $f\in \R^{\mathcal X}$ via left multiplication:
\begin{align}
    (Pf)(x) = \sum_y P(x,y) f(y).
\end{align}
Applying $P$ to $f$ corresponds to taking the average of $f$ with respect to the distribution of the next step:
\begin{align}
\nu_t (P f) = \sum_x \nu_{t+1}(x) f(x) = \E_{\nu_{t+1}}[f].
\end{align}
In particular, if $\mu$ is stationary, then $\mu(Pf)=\E_{\mu}[f]$. 
Since $P$ is row-stochastic, the constant function $\mqty(1&\cdots&1)^{\T}$ is a right eigenfunction with eigenvalue $1$. Every eigenfunction with eigenvalue different from $1$ has zero mean with respect to the stationary distribution: $\E_{\mu}[f_j]=0$.

Reversible Markov chains have real eigenvalues lying in $[-1,1]$ by the Perron-Frobenius theorem (see, e.g., \cite[Lemma 12.1]{levin2017markov}). It is also common to make a walk \emph{lazy} (by adding self-loops with probability $1/2$) to eliminate negative eigenvalues.
In our case, the certification walk is reversible with respect to the distribution $\pi$, and its spectrum already lies in $[0,1]$ without laziness, as the transition matrix is related to the verification operator---a POVM element, hence positive semidefinite---by an similarity transformation that preserves the spectrum (Appendix~\ref{app:verify-op-is-markov}).

It is convenient when analyzing eigenfunctions, to introduce an inner product under which $P$ becomes self-adjoint. Suppose that $P$ is reversible with respect to $\mu$, and define
\begin{align}
\av{f, g}_{\mu} \coloneqq \sum_x \mu(x) f(x) g(x).
\end{align}
Then one verifies that $P$ is self-adjoint under this inner product:
$\av{f,Pg}_{\mu} = \av{Pf,g}_{\mu}$.
Equivalently, by defining the diagonal matrix $S = \mathrm{diag}(\mu)$, reversibility is the statement $SP = P^\T S$, which implies that
\begin{align}
    A \coloneqq S^{\frac{1}{2}} P S\numinv{\frac{1}{2}} 
\end{align}
is symmetric.
In particular, $A$ admits a complete orthonormal basis of eigenfunctions $\{\varphi_j\}_{j=1}^{\abs{\mathcal X}}$, and each $f_j \coloneqq S\numinv{\frac{1}{2}} \varphi_j$ is an eigenfunction of $P$ with the same eigenvalue.

The eigenfunctions $f_j$ form an orthonormal basis with respect to $\av{\cdot,\cdot}_{\mu}$, and we choose $f_1=1$.
For an initial distribution $\nu_0$ and its evolution $\nu_t=\nu_0P^t$, reversibility gives the spectral expansion
\begin{align}
\frac{\nu_t(x)}{\mu(x)}-1
&=\sum_{j=2}^{|\mathcal X|}a_j\lambda_j^t f_j(x),
\label{eq:spectral_expansion}\\
a_j&=\sum_x\nu_0(x)f_j(x).\nonumber
\end{align}
Assume that the spectrum is nonnegative and ordered as $1=\lambda_1>\lambda_2\ge\cdots\ge\lambda_{|\mathcal X|}\ge0$. Writing $\gamma=1-\lambda_2$ for the spectral gap and $\|g\|_\mu^2=\av{g,g}_\mu$, we obtain
\begin{align}
\left\|\frac{\nu_t}{\mu}-1\right\|_\mu
&\le (1-\gamma)^t
\left\|\frac{\nu_0}{\mu}-1\right\|_\mu\nonumber\\
&\le e^{-\gamma t}
\left\|\frac{\nu_0}{\mu}-1\right\|_\mu.
\label{eq:spectral_relaxation}
\end{align}
Thus the spectral gap controls relaxation to the stationary distribution $\mu$: the slowest nonstationary mode decays as $\lambda_2^t$, and the relaxation time is defined by
\begin{align}
\tau_{\mathrm{relax}}\coloneqq\frac{1}{\gamma}.
\label{eq:relaxation_time}
\end{align}

The \emph{Dirichlet form} of a walk $P$ reversible with respect to $\mu$ is defined as
\begin{align}\label{def:dirichlet}
\dirich{f,g} &\coloneqq \av{(\id - P)f,g}_{\mu} = \av{f,(\id-P)g}_{\mu},
\end{align}
with a useful identity \cite[Lemma 13.6]{levin2017markov},
\begin{align}\label{eq:dirichletDiag}
    \dirich{f,f} = \frac{1}{2} \sum_{x,y} (f(x)-f(y))^2 \mu(x) P(x,y).
\end{align}
The smallest eigenvalue of $\id-P$ on the subspace of mean-zero functions is the spectral gap $\gamma$. Consequently, for every nonzero $f$ with $\E_\mu[f]=0$, we have the bound
\begin{align}
    \gamma \le \frac{\dirich{f,f}}{\av{f,f}_{\mu}}
    = \frac{\dirich{f,f}}{\E_{\mu}[f^2]},
\end{align}
which leads to the variational characterization
\begin{align}
    \gamma = \inf_{\substack{f\\ \E_{\mu}[f]=0 \\ \Var_{\mu}[f] \neq 0}} \frac{\dirich{f,f}}{\Var_{\mu}[f]}.
\end{align}
The Dirichlet form and the variational characterization of the spectral gap are indispensable to the functional-analytic approach to bounding mixing times. The approach views the Markov chain as an operator on real-valued functions over the state space, and quantifies rates of convergence to equilibrium in terms of how much a certain functional contracts in each step of the walk \cite{bobkov2006modified,saloff2006lectures}. 

The following standard comparison theorem allows one to bound the spectral gap of one walk in terms of another. We will use it in Appendix~\ref{app:sub-proof} to give an alternative proof of the spectral-gap bound for the certification walk for Slater determinant states.

\begin{theorem}[{\cite[Lemma 13.18]{levin2017markov}}]\label{thm:chain_comparison}
Let $P$ and $P'$ be reversible Markov chains on the same finite state space $\mathcal X$, with strictly positive stationary distributions $\mu$ and $\mu'$, respectively. If there exists a constant $a > 0$ such that
\begin{align}
\mathcal{E}_{P'}(f,f) \le a \,\mathcal{E}_P (f,f)
\quad \text{for all } f,
\end{align}
then
\begin{align}
\gamma(P') \le \left(\max_{x \in \mathcal X} \frac{\mu(x)}{\mu'(x)}\right) a \gamma(P).
\end{align}
\end{theorem}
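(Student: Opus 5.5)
The plan is to prove the comparison via the variational characterization of the spectral gap recalled just above. First I will rewrite that characterization in a form that does not depend on the stationary distribution through the mean-zero constraint. By the identity \eqref{eq:dirichletDiag}, $\dirich{f+c,f+c}=\dirich{f,f}$ for any constant $c$, and $\Var_\mu[f]$ is likewise shift invariant. Hence
\begin{align}
\gamma(P)=\inf_{f \text{ nonconstant}}\frac{\mathcal{E}_P(f,f)}{\Var_\mu[f]},
\end{align}
and similarly for $P'$ with $\mu'$. Because $\mu$ and $\mu'$ are strictly positive on all of $\mathcal X$, a function is nonconstant if and only if $\Var_\mu[f]>0$, if and only if $\Var_{\mu'}[f]>0$. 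So both infima range over the same set of test functions.

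The second step is a comparison of variances. I will use that the variance is a minimal mean-square deviation, $\Var_\mu[f]=\min_{c\in\R}\E_\mu[(f-c)^2]$. Taking $c=\E_{\mu'}[f]$ gives
\begin{align}
\Var_\mu[f]\le \sum_x \mu(x)\bigl(f(x)-\E_{\mu'}[f]\bigr)^2 \le \Bigl(\max_x\frac{\mu(x)}{\mu'(x)}\Bigr)\Var_{\mu'}[f].
\end{align}

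Finally I combine the two. Fix any nonconstant $f$ and write $M=\max_x \mu(x)/\mu'(x)$. Applying the hypothesis $\mathcal{E}_{P'}(f,f)\le a\,\mathcal{E}_P(f,f)$ together with the variance comparison yields
\begin{align}
\gamma(P')\le\frac{\mathcal{E}_{P'}(f,f)}{\Var_{\mu'}[f]}\le \frac{a\,\mathcal{E}_P(f,f)}{\Var_{\mu}[f]/M}= M a\,\frac{\mathcal{E}_P(f,f)}{\Var_\mu[f]}.
\end{align}
Taking the infimum over nonconstant $f$ on the right gives $\gamma(P')\le M a\,\gamma(P)$, which is the claimed bound.

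There is no real obstacle in this argument. The only points needing care are the direction of the density ratio in the variance comparison and the justification for dropping the mean-zero constraint; the latter rests on the shift invariance of the Dirichlet form and on the positivity of both stationary distributions.
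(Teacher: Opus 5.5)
Your proof is correct, and it is essentially the standard proof of the cited Levin--Peres lemma; the paper gives no proof of its own, only that reference. The steps match: pass to the variational formula over nonconstant $f$, bound $\Var_\mu[f]\le \E_\mu\bigl[(f-\E_{\mu'}[f])^2\bigr]\le \bigl(\max_x \mu(x)/\mu'(x)\bigr)\Var_{\mu'}[f]$, and combine with the Dirichlet-form hypothesis.
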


Finally, let us introduce a multiaffine \emph{generating polynomial} of a distribution $\mu$ on $\{0,1\}^n$ as
\begin{align}
    Z_{\mu} (z_1,\cdots,z_n) \coloneqq \sum_{\x \in \{0,1\}^n} \mu(\x) \prod_{j=1}^n z_j^{x_j}. 
\end{align}
Because its coefficients are nonnegative, $Z_{\mu}$ has no zero when all $z_j$ are on the positive real axis.
Relaxation of Markov chains whose stationary distribution is $\mu$, however, is governed by zero-freeness of $Z$ in a complex region around the positive real axis \cite{alimohammadi2021fractional_arxiv,chen2024holant}.
This use of complex zeros is reminiscent of Lee-Yang theory, which relates partition function zeros to the analyticity of the free energy in equilibrium statistical mechanics~\cite{yang-lee1952I}. Although the generating polynomial may appear like a partition function, the relaxation bound here follows from results connecting zero-free regions to the specified down-up walks~\cite{alimohammadi2021fractional_arxiv,anari2022entropic}, rather than directly from Lee-Yang theory. The relevant zero-free property is \emph{sector stability} defined below.

\begin{definition}[Sector stability]
\label{def:sector_stable}
For $\alpha\in(0,1]$, let
\begin{align}
    \S_{\alpha} \coloneqq \{re^{i\theta} \mid r> 0, \,\theta \in (-\alpha \pi/2, \alpha \pi/2)\}.
\end{align}
be the open sector with aperture $\alpha\pi$ centered on the positive real axis. A polynomial $Z\in\C [z_1,\ldots,z_n]$ is $\S_{\alpha}$-stable if \begin{align}
    Z(z_1,\ldots,z_n)\neq 0
    \qquad
    \text{for all }
    z_1,\ldots,z_n\in\mathcal{S}_{\alpha}.
\end{align}
\end{definition}

%========================
\section{Down-up walks}\label{app:downUp}
%------------------------

%----------------------------------
\subsection{Definition and basic properties}\label{app:downUp_defn}
%----------------------------------

Down-up walks are Markov chains on fixed-size subsets $\binom{[n]}{k}$, i.e., the set of all $k$-element subsets of $[n]$.
Here we establish its basic operator properties, illustrate the connection to the well-known Glauber dynamics in statistical physics through homogenization, and show how contraction of relative entropy under the down step yields a spectral-gap bound for the full walk. These results provide the framework used in Appendix~\ref{app:main-proof} to analyze the certification walk.

\begin{definition}[Down-up walk]
Fix integers $0 \le l \le k \le n$. A $k \leftrightarrow k-l$ \emph{down-up walk} is a Markov chain $P^{\vee} : \binom{[n]}{k} \to \binom{[n]}{k}$
defined as the composition $P^{\vee} = U \circ D=DU$ of the following two steps:
\begin{itemize}
\item \textbf{Down step} $D : \binom{[n]}{k} \to \binom{[n]}{k-l}$:
given $X \in \binom{[n]}{k}$, choose a subset $Y \subset X$ of size $k-l$ uniformly at random.
\item \textbf{Up step} $U : \binom{[n]}{k-l} \to \binom{[n]}{k}$:  
given $Y$, sample $X' \supset Y$ with probability 
$\Pr(X'|Y) \propto \mu(X'),$
where $\mu$ is a target distribution supported on $\binom{[n]}{k}$.
\end{itemize}
\end{definition}
\noindent Intuitively, the down step removes $\ell$ elements uniformly at random, while the up step re-samples them according to the target distribution conditioned on the remaining subset. Explicitly, the transition matrices of the down and up steps are given by
\begin{align}
    D(X,Y) &= {k \choose l}^{\!\!-1} \text1_{Y \subset X}, \qquad
    \text1_{Y \subset X} = \begin{cases}
        1, & Y \subset X, \\
        0, & \textrm{otherwise},
    \end{cases} \\
    U(Y,X) &= \frac{\mu(X)}{\sum_{X' \supset Y} \mu(X')} \; \text1_{Y \subset X}.
\end{align}
It can be shown that the spectrum of down-up walk lies in $[0,1]$ \cite{kaufman2020high_order,alev2020improved,anari2021hardcore}, and a straightforward computation verifies that $P^{\vee}$ is reversible with respect to $\mu$. Although the down and up steps are not individually reversible with respect to $\mu$, they nonetheless satisfy a form of detailed balance due to the fact that $D$ and $U$ are adjoints of each other 
\cite{kaufman2020high_order,dikstein2024boolean}. This relationship is made precise below.

\begin{lemma}
    Suppose that $\mu$ is the stationary distribution of the down-up walk $P^{\vee}=DU$. Then, for any function $f$ on ${[n] \choose k}$ and $g$ on ${[n] \choose k-l}$, 
    \begin{align}\label{eq:downup_adjoint}
        \av{Dg,f}_{\mu} = \av{g,Uf}_{\mu D}.
    \end{align}
    In particular, it implies that $\mu$ and $\nu \coloneqq \mu D$ satisfy the following form of detailed balance:
\begin{align}\label{eq:downup_detailed_balance}
    \mu(x) D(x,y) = \nu(y) U(y,x).
    \end{align}
\end{lemma}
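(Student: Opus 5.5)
The plan is to prove the pointwise detailed-balance identity \eqref{eq:downup_detailed_balance} first, by direct computation from the explicit transition matrices. The adjoint relation \eqref{eq:downup_adjoint} then follows by summing that identity against $f$ and $g$. Note that neither step actually uses stationarity of $\mu$ under $P^{\vee}$; only the definitions of $D$ and $U$ with respect to $\mu$ enter.

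\textbf{Step 1: compute $\nu$.} Compute $\nu=\mu D$ explicitly. For $Y\in\binom{[n]}{k-l}$,
\begin{align}
\nu(Y)=\sum_{X}\mu(X)D(X,Y)={k\choose l}^{\!\!-1}\sum_{X'\supset Y}\mu(X').
\end{align}
So $\nu(Y)$ is, up to the constant ${k\choose l}^{-1}$, exactly the normalization appearing in the denominator of $U(Y,\cdot)$.

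\textbf{Step 2: detailed balance.} Substitute $\nu(Y)$ into $U$ to get
\begin{align}
\nu(Y)U(Y,X)={k\choose l}^{\!\!-1}\mu(X)\,\text{1}_{Y\subset X}=\mu(X)D(X,Y),
\end{align}
which is \eqref{eq:downup_detailed_balance}. The only delicate point is the degenerate case $\nu(Y)=0$, where $U(Y,\cdot)$ is ill-defined. There every $X\supset Y$ has $\mu(X)=0$, so both sides vanish under any convention for $U(Y,\cdot)$. I would state this convention explicitly, or restrict to $Y$ in the support of $\nu$.

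\textbf{Step 3: adjoint relation.} Write the left side of \eqref{eq:downup_adjoint} as
\begin{align}
\av{Dg,f}_{\mu}=\sum_X\mu(X)f(X)\sum_Y D(X,Y)g(Y).
\end{align}
Replace $\mu(X)D(X,Y)$ by $\nu(Y)U(Y,X)$ using Step 2, and exchange the finite sums. This gives
\begin{align}
\sum_Y\nu(Y)g(Y)\sum_X U(Y,X)f(X)=\sum_Y\nu(Y)\,g(Y)\,(Uf)(Y)=\av{g,Uf}_{\nu},
\end{align}
which is \eqref{eq:downup_adjoint}.

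There is no real obstacle here: it is a bookkeeping argument. The only care needed is in orienting operators consistently: $D$ and $U$ act on functions by left multiplication, so $Dg$ lives on level $k$ and $Uf$ on level $k-l$. The other point of care is the zero-measure convention in Step 2.
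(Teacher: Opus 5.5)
Your proof is correct and uses the same key observation as the paper: $\nu(Y)=\binom{k}{l}^{-1}\sum_{X'\supset Y}\mu(X')$ is, up to that constant, exactly the normalizer of $U(Y,\cdot)$, so the two cancel. The paper simply runs the argument in the opposite order, expanding both sides of the adjoint identity and then reading off pointwise detailed balance from the arbitrariness of $f$ and $g$, whereas you prove the pointwise identity first. Your version also handles the degenerate case $\nu(Y)=0$ explicitly, which the paper's cancellation of $\sum_{X'}\mu(X')$ passes over silently.
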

\noindent Before proving the lemma, let us note that the notation is consistent. Since $D$ maps distributions over ${[n] \choose k}$ to distributions over ${[n] \choose k-l}$, the composition of $D$ with $g$ must be a function on ${[n] \choose k}$:
\begin{align}
    Dg = g \circ D: {[n] \choose k} \to \R.
\end{align}
\begin{proof}
    To prove Eq.~\eqref{eq:downup_adjoint}, we simply compute and compare the left and the right sides.
    Let $X \in {[n] \choose k}$ and $Y,Y' \in {[n] \choose k-l}$
    \begin{align}
        \av{Dg,f}_{\mu} &= \sum_X \mu(X) f(X) \sum_Y D(X,Y) g(Y) \nonumber \\
        &= {k \choose l}\inv \sum_X \mu(X) f(X) \sum_{Y \subset X} g(Y)
    \end{align}
    \begin{align}
        \av{g,\!U\!f}_{\mu D} \!&= \sum_Y \left[ \sum_{X'} \mu(X') D(X',Y) \right] g(Y) \sum_X U(Y,X) f(X) \nonumber \\
        &= {k \choose l}\inv \sum_Y
        \cancel{\sum_{X'} \mu(X')} g(Y) \sum_X \left[ \frac{\mu(X)}{\cancel{\sum_{X'} \mu(X')}} \right] f(X) \nonumber \\
        &= {k \choose l}\inv \sum_X \mu(X) f(X) \sum_{Y \subset X} g(Y) = \av{Dg,f}_{\mu}
    \end{align}
In coordinates, the adjoint equation reads
\begin{align}
    \sum_{x,y} f(x) g(y) \mu(x) D(x,y)  
    &= \sum_{x,y} f(x) g(y) \nu(y) U(y,x).
\end{align}
Since this is true for any functions $f,g$, it follows that
\begin{align}
    \mu(x) D(x,y) = \nu(y) U(y,x).
\end{align}
\end{proof}

%----------------------------------
\subsection{Homogenization of Glauber dynamics as down-up walks}
%----------------------------------
It is instructive to illustrate the connection between down-up walks and statistical physics using single-site heat-bath Glauber dynamics (the Gibbs sampler), a classic method for sampling from the Boltzmann-Gibbs equilibrium distribution of classical spins $\mu(S)\propto e^{-E(S)/(k_B T)}$. Each step chooses a site uniformly at random and resamples its spin conditionally on the remaining spins. After the homogenization described below, this dynamics is a $d\leftrightarrow d-1$ down-up walk.
\begin{algorithm}[H]
\caption{Glauber dynamics (single-site update)}
\begin{algorithmic}[1]
\State Initialize a spin configuration $S \in \{\pm 1\}^d$
\Repeat
\State Choose a site $j \in [d]$ uniformly at random
\State Let $S'$ be the configuration obtained by flipping spin $j$
\State Compute $\Delta E = E(S') - E(S)$
\State Update $S \to S'$ with probability
$\frac{1}{1 + e^{\Delta E / k_B T}}$
\Until{convergence} (in distribution to equilibrium; the required runtime is model-dependent)
\end{algorithmic}
\end{algorithm}
\noindent A spin configuration $S\in\{\pm1\}^d$ can be mapped to a $d$-element subset of $[2d]$ via
\begin{align}
    S
    \;\longmapsto\;
    S^{\textrm{hom}}
    =
    \{(j,-1)\mid j\notin S\}
    \;\cup\;
    \{(j,+1)\mid j\in S\}.
\end{align}
There are two equivalent ways to view the underlying state space. One may regard each of the $d$ original sites as carrying one of two species, a ``particle'' or a ``hole'', or equivalently split each site $j$ into two sites $(j,+1)$ and $(j,-1)$, with exactly one of the two occupied. In the latter picture, the constraint that exactly one site in each pair is occupied can be thought of as a \emph{particle-hole constraint} similar to the Bogoliubov-de Gennes (BdG) construction widely used in the theory of superconductivity to describe Cooper pairing \cite{deGennes2018superconductivity,altland2010condensed}, where one doubles the number of modes by introducing particle and hole modes and thereby represents a general fermionic Gaussian state in terms of a number-conserving Gaussian state on the enlarged space (whereby the number conservation is with respect to the total number operator that includes both particles and holes).
In both cases, the doubling converts particle-number-nonconserving degrees of freedom into a number-conserving description subject to a particle-hole constraint.
In this way, Glauber dynamics can be viewed as a $d \leftrightarrow d-1$ down-up walk with the down and up steps defined as follows.
\begin{itemize}
    \item \textbf{Down step}: choose a site $j$ and remove its label (particle or hole), leaving a configuration of size $d-1$.
    \item \textbf{Up step}: reassign the label at site $j$ by choosing between the two species (equivalently, between $(j,+1)$ and $(j,-1)$), with probability proportional to the Gibbs weight of the resulting configuration.
\end{itemize}
Since only two configurations are compatible with the intermediate state, the normalization reduces to
\begin{align}
    \frac{e^{-E(S')/k_B T}}{e^{-E(S)/k_B T} + e^{-E(S')/k_B T}}
= \frac{1}{1 + e^{\Delta E / k_B T}},
\end{align}
recovering the standard Glauber update rule.

This reformulation is an instance of a standard procedure in the study of down-up walks known as \emph{homogenization} \cite{anari2019logII}, where a distribution over unconstrained configurations is embedded into a distribution over fixed-size subsets.
This technique will be crucial for identifying the certification walk with a $d \leftrightarrow d-2$ down-up walk.

Concretely, let $\mu$ be a distribution on $\{0,1\}^n$ with the generating polynomial
\begin{align}
    Z_{\mu} (z_1,\cdots,z_n) = \sum_{\x \in \{0,1\}^n} \mu(\x) \prod_{j=1}^n z_j^{x_j},
\end{align}
then its homogenization is a homogeneous  polynomial of degree $n$ in $2n$ variables,
\begin{align}
    Z_{\mu}^{hom} (z_1,\cdots,z_n,w_1,\dots,w_n) = \sum_{\x \in \{0,1\}^n} \mu(\x) \prod_{j=1}^n z_j^{x_j} w_j^{\bar x_j},
\end{align}
or in the language of subsets,
\begin{align}
    Z_{\mu}^{hom} &= \sum_{X \subseteq [n]} \mu(X) \prod_{j \in X} z_j \prod_{j \not\in X} w_j \nonumber \\
    &\eqqcolon \sum_{X \subseteq [n]}\mu(X) \, z^X w^{[n]\backslash X}.
\end{align}

%----------------------------------
\subsection{Spectral-gap bound from entropy contraction}\label{app:markov_gap_bound}
%----------------------------------

In a recent line of work \cite{anari2022entropic}, a sophisticate program in the name of \emph{entropic independence} has been put forward as a framework to obtain tight mixing time bounds for multi-level down-up walks beyond the $k\leftrightarrow k-1$ setting via establishing an entropy inequality of the form,
\begin{align}\label{eq:entropy_contraction}
    \KL{\nu D}{\mu D} \le (1-\kappa) \KL{\nu}{\mu},   
\end{align}
called entropy contraction, where 
\begin{align}
    \KL{\nu}{\mu} = \sum_{x\in\mathcal X} \nu(x) \log \left( \frac{\nu(x)}{\mu(x)} \right),
\end{align}
is the relative entropy between an arbitrary distribution $\nu$ relative to the stationary distribution $\mu$.
The spectral gap is directly lower bounded by the constant $\kappa$ in \eqref{eq:entropy_contraction} as shown in the next proposition.
\begin{proposition}[spectral-gap bound from entropy contraction]\label{prop:entropy_contraction_bounds_gap}
    Suppose there is a constant $0< \kappa <1$ such that $D$, the down operator and $\mu$ a stationary distribution for the full down-up walk $P^{\vee}=DU$,
    \begin{align}
        \KL{\nu D}{\mu D} \le (1-\kappa) \KL{\nu}{\mu}.
    \end{align}
    Then the spectral gap of the down-up walk is lower bounded by $\kappa$: 
    \begin{align}
        \gamma(P^{\vee}) \ge \kappa.
    \end{align}
\end{proposition}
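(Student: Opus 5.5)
We prove the bound through the variational characterization of the spectral gap from Appendix~\ref{app:markov}. The idea is to compare the Dirichlet form of $P^{\vee}=DU$ with the variance under $\mu$, and to use the detailed balance relation \eqref{eq:downup_detailed_balance} to pass information between the two levels $\binom{[n]}{k}$ and $\binom{[n]}{k-l}$. The standard route is to linearize the entropy contraction hypothesis into a variance contraction.

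\textbf{Variance contraction.} Fix a mean-zero test function $f$ on $\binom{[n]}{k}$ with $\E_\mu[f]=0$ and $\Var_\mu[f]\neq0$. Take the perturbed distribution $\nu_t=\mu(1+tf)$, which is a valid distribution for small $|t|$. Then $\nu_t D=(\mu D)(1+tUf)$. This follows from \eqref{eq:downup_detailed_balance}, because
\begin{align}
\textstyle\sum_x\mu(x)(1+tf(x))D(x,y)=\nu(y)\bigl(1+t\sum_xU(y,x)f(x)\bigr),
\end{align}
where $\nu=\mu D$. Expanding $\KL{\mu(1+tg)}{\mu}=\tfrac{t^2}{2}\E_\mu[g^2]+O(t^3)$ for mean-zero $g$ on both sides, and noting that $Uf$ has $\mu D$-mean $\E_\mu[f]=0$, the $t^2$ coefficient of the hypothesis gives
\begin{align}
\E_{\mu D}[(Uf)^2]\le(1-\kappa)\,\E_\mu[f^2].
\end{align}

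\textbf{Linking to the Dirichlet form.} By the adjoint relation \eqref{eq:downup_adjoint} with $g=Uf$,
\begin{align}
\E_{\mu D}[(Uf)^2]=\av{Uf,Uf}_{\mu D}=\av{DUf,f}_\mu=\av{P^{\vee}f,f}_\mu.
\end{align}
Therefore
\begin{align}
\dirich{f,f}=\av{(\id-P^{\vee})f,f}_\mu=\E_\mu[f^2]-\E_{\mu D}[(Uf)^2]\ge\kappa\,\E_\mu[f^2]=\kappa\Var_\mu[f].
\end{align}
Taking the infimum over $f$ in the variational formula yields $\gamma(P^{\vee})\ge\kappa$.

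\textbf{Main obstacle.} The delicate step is the second-order expansion. For the expansion to be valid, $\nu_t$ must be a genuine probability distribution, so $\mu$ must be positive on the relevant support and $t$ small. We also have to check that the first-order terms vanish on both sides: this holds because $\E_\mu[f]=0$ and because the mean of $Uf$ under $\mu D$ is also zero. Finally, the $O(t^3)$ remainders on the two sides must not interfere with comparing the $t^2$ coefficients; dividing by $t^2$ and letting $t\to0$ handles this cleanly. We restrict the state space to $\mathrm{supp}\,\mu$ so that every quantity is well defined.
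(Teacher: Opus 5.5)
Your proposal is correct and follows the paper's proof essentially step for step. You linearize the entropy contraction at $\nu=\mu(1+tf)$ via the adjoint detailed balance $\nu D=(\mu D)(1+tUf)$ to get variance contraction, use $\av{Uf,Uf}_{\mu D}=\av{DUf,f}_\mu$ to turn this into $\dirich{f,f}\ge\kappa\Var_\mu[f]$, and conclude by the variational characterization; your explicit division by $t^2$ and limit $t\to0$ makes the paper's comparison of second-order coefficients a little more precise.
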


\begin{remark}
    For a general walk $P$, the constant $\kappa$ also indirectly lower bounds the spectral gap through the well known modified log-Sobolev constant,
    \begin{align}
        \rho_0 = \inf_{\substack{f \\ \mathrm{Ent}_{\mu}[f] \neq 0}} \frac{\dirich{f,\log f}}{2\,\mathrm{Ent}_{\mu}[f]},
    \end{align}
    where $\mathrm{Ent}_{\mu}[f] \coloneqq \E_{\mu}[f \log f] - \E_{\mu}[f] \log (\E_{\mu} f)$. In particular \cite{bobkov2006modified,anari2021entropic},
\begin{align}\label{eq:general_entropy_contraction_bound}
        \gamma(P) \ge \rho_0(P) \ge \frac{\kappa(P)}{2}.
    \end{align}
    The last inequality is given as Lemma 16 in the preprint \cite{anari2021entropic} of \cite{anari2022entropic} after one corrects a typo; the lemma states that $\rho_0(P) \ge 2\kappa(P)$, but their proof in fact shows that $\rho_0(P) \ge \kappa(P)/2$. As a counterexample,
    \begin{align}
        P = \mqty(3/4 & 1/4 \\ 1/4 & 3/4)
    \end{align}
    has the uniform stationary distribution, $\gamma(P)=1/2$, and $\kappa(P)=3/4$.  To see the latter, define a parametrized distribution $\nu=\mqty(1/2+x & 1/2-x)$. The relative entropy as a function of $x$ is $f(x) = (1/2+x)\log(1+2x) + (1/2-x)\log(1-2x)$. After one step of $P$, the relative entropy becomes $f(x/2)$, and we can show, e.g. by plotting graphs, that $f(x/2) \le f(x)/4$, which means that
    \begin{align}
        \KL{\nu P}{\mu} \le \frac{1}{4} \KL{\nu}{\mu}.
    \end{align}
    Comparing this to $(1-\kappa(P))$, we obtain $\kappa(P)=3/4$. Since $3/4>\gamma = 1/2$, $\kappa(P)$ clearly does not lower bound the spectral gap $\gamma(P)$.

    If we naively apply the data processing inequality 
    \begin{align}
        \KL{\nu P}{\mu P} \le  \KL{\nu D}{\mu D} \le (1-\kappa)\KL{\nu}{\mu}
    \end{align}
    in combination with the general inequality \eqref{eq:general_entropy_contraction_bound} for $\kappa(P^{\vee})$, we would obtain a looser bound by a factor of 2 compared to the bound given in Proposition~\ref{prop:entropy_contraction_bounds_gap}.
\end{remark}

\begin{proof}[Proof of Proposition~\ref{prop:entropy_contraction_bounds_gap}]
The first step is to relate the relative entropy to the variance; this is possible by the well known fact that, to the lowest order, the relative entropy between two infinitesimally close distributions is quadratic in the variation. 

Consider a distribution $\nu$ that is perturbed slightly from $\mu$, i.e., $\nu = (1+\epsilon f)\mu$ where $f$ is a function with zero mean. The relative entropy reads
\begin{align}
    \KL{\nu}{\mu} &= \sum_x \mu(x) (1+\epsilon f(x)) \log (1+\epsilon f(x)) \nonumber \\
    &= \sum_x \mu(x) (1+\epsilon f(x)) \left( \epsilon f(x) - \frac{\epsilon^2 f(x)^2}{2} + O(\epsilon^3) \right) \nonumber \\
    &= \epsilon \: \cancelto{0}{\E_{\mu} [f]}
    \;\; + \frac{\epsilon^2}{2} \E_{\mu} [f^2] + O(\epsilon^3) \nonumber \\
    &= \frac{\epsilon^2}{2} \Var_{\mu}[f] + O(\epsilon^3).
\end{align}
The distribution after one down step is
\begin{align}
    (\nu D)(x) &= [(1+\epsilon f)\mu D](x) \nonumber \\
    &= (\mu D)(x) + \epsilon \sum_y f(y) \mu(y) D(y,x) \nonumber \\
    &= (\mu D)(x) + \epsilon \,[\mu D](x) \sum_y U(x,y)f(y) \nonumber \\
    &= (\mu D)(x) \left[1+\epsilon (Uf)(x) \right],
\end{align}
where we have used adjoint detailed balance~\eqref{eq:downup_detailed_balance}. Consequently, the relative entropy after one down step can be expanded in the same way provided that $\E_{\mu D}[Uf] = \E_{\mu}[f]=0$.
\begin{align}
\KL{\nu D}{\mu D} &= 
\sum_x (\mu D)(x) \left[1+\epsilon (Uf)(x) \right] \log \left[1+\epsilon (Uf) (x)\right] \nonumber \\
&= \frac{\epsilon^2}{2} \Var_{\mu D}[Uf] + O(\epsilon^3)
\end{align}
Thus, we see that entropy contraction~\eqref{eq:entropy_contraction} implies variance contraction by the same amount:
\begin{align}\label{eq:variance_contraction}
    \Var_{\mu D}[Uf] \le (1-\kappa) \Var_{\mu}[f].
\end{align}
Now, the adjoint property of $D$ and $U$ lets us write
\begin{align}
    \Var_{\mu D} (Uf) 
    &= \sum_x \, [\mu D](x) [(Uf)(x)]^2 \nonumber \\
    &= \av{Uf,Uf}_{\mu D} = \av{f,DUf}_{\mu} = \av{f,Pf}_{\mu} \nonumber \\
    &= \Var_{\mu}[f] - \dirich{f,f}.
\end{align}
Rearranging the terms and applying \eqref{eq:variance_contraction},
    \begin{align}
        \dirich{f,f} 
        &= \Var_{\mu}[f] - \Var_{\mu D}[Uf] \nonumber \\
        &\ge \Var_{\mu}[f] - (1-\kappa) \Var_{\mu}[f] \nonumber \\
        &= \kappa \Var_{\mu}[f]
    \end{align}
Since this is true for any $f$, it must be true for $f$ that minimizes the ratio $\mathcal E(f,f)/\Var_{\mu}[f]$. In other words,
\begin{align}
    \gamma = \inf_{\substack{f\\ \E_{\mu}[f]=0 \\ \Var_{\mu}[f] \neq 0}} \frac{\dirich{f,f}}{\Var_{\mu}[f]} \ge \kappa.
\end{align}
\end{proof}

%==========================
\section{Proof of Theorem~\ref{thm:main}}\label{app:main-proof}
%==========================

In this appendix, we provide a proof of the main theorem, restated here for the reader's convenience.

\SpectralGap*

There are mainly two stages to the proof. We first show that the generating polynomial associated with the computational-basis distribution in Eq.~\eqref{eq:PfaffianProb},
\begin{align}
    \pi(\x)
    =
    \abs{\av{\x|\psi}}^2
    =
    \frac{1}{2^d}
    \abs{\Pf\bigl(\Gamma+D_{\x}J\bigr)},
\end{align}
is $\mathcal{S}_1$-stable (Definition~\ref{def:sector_stable}), and hence zero-free when all variables lie in the open right half-plane. We then show that, after homogenization, the certification walk $P$ of Eq.~\eqref{eq:certification_walk} is precisely a $d\leftrightarrow d-2$ down-up walk with stationary distribution $\pi$. This allows us to apply the machinery of Refs.~\cite{alimohammadi2021fractional_arxiv,anari2022entropic}: sector stability implies entropy contraction of the associated down-up walk, which in turn yields the desired spectral-gap bound via Proposition~\ref{prop:entropy_contraction_bounds_gap}, proved in Appendix~\ref{app:downUp}.

For convenience, let us absorb the sign of the Pfaffian in the measurement probability, which depends on both the parity of the Gaussian state and the Majorana ordering, as discussed in Appendix~\ref{app:grassmann}, into a factor $\zeta\in\{\pm1\}$, so that
\begin{align}
    \pi(\x)
    =
    \frac{1}{2^d}
    \abs{\Pf\bigl(\Gamma+D_{\x}J\bigr)}
    = \frac{\zeta}{2^d}
    \Pf\bigl(\Gamma+D_{\x}J\bigr).
\end{align} 
Once the convention is fixed, $\zeta$ is fixed globally across all outcomes $\x$.
The next proposition shows that the generating polynomial $Z_{\pi}$ is effectively a Pfaffian in disguise.

\begin{proposition}\label{prop:CayleyTransform}
Let $\Upsilon(\vb s)$ be an extension of the distribution $\pi(\s) = (\zeta/2^d) \,\Pf (\Gamma + D_{\s} J)$, thought of as a function of $\s$, to the complex domain. That is, $s_j$ can now take an arbitrary complex value. The generating polynomial of the certification walk can be written in the form
\begin{align}\label{eq:CayleyIdentity}
     Z_{\pi}(\z) =  \Upsilon(\vb w) \cdot \prod_{j=1}^d (1+z_j), \qquad w_j = \frac{1-z_j}{1+z_j}. 
\end{align}
Consequently, $Z_{\pi}$ is zero-free in the open right half-plane for all $z_j$ if and only if $\Upsilon$ is zero-free in the open unit disk for all $w_j$.
\end{proposition}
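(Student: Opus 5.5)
The plan is to exploit the fact that the map $\s\mapsto\Pf(\Gamma+D_{\s}J)$ is \emph{multiaffine} in the complex variables $s_1,\dots,s_d$. Then $\Upsilon$ is the unique multiaffine interpolation of its values on the hypercube $\{\pm1\}^d$, and those values are exactly the probabilities $\pi(\x)$. Once this is in place, the identity~\eqref{eq:CayleyIdentity} follows by evaluating the Lagrange-type interpolation formula at $s_j=w_j$. Throughout, $\s$ with $s_j=(-1)^{x_j}=\tilde x_j$ corresponds to the bitstring $\x$, so that $D_{\s}=D_{\x}$ on the hypercube.

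\textbf{Step 1 (multiaffinity).} The matrix $D_{\s}J$ has entry $s_j$ at position $(j,d+j)$, entry $-s_j$ at $(d+j,j)$, and zeros elsewhere. Hence $s_j$ enters $\Gamma+D_{\s}J$ only through the single antisymmetric pair of entries indexed by $\{j,d+j\}$. Now expand the Pfaffian as a signed sum over perfect matchings of $[2d]$, i.e.\ Eq.~\eqref{def:pfaffian} up to the fixed ordering sign. Each matching uses the index $j$ exactly once, so each term contains at most one factor $(\Gamma+D_{\s}J)_{j,d+j}=\Gamma_{j,d+j}+s_j$. Therefore $\Upsilon(\s)=(\zeta/2^d)\Pf(\Gamma+D_{\s}J)$ has degree at most one in each $s_j$.

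Equivalently, one can argue in the Grassmann representation of Appendix~\ref{app:grassmann}. There $s_j$ appears in the exponent only as $s_j\theta_j\theta_{d+j}$, and since $(\theta_j\theta_{d+j})^2=0$, expanding the exponential truncates at linear order in $s_j$. I expect this step to be the only point requiring care, and it is easy once phrased this way.

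\textbf{Step 2 (interpolation).} A multiaffine polynomial is determined by its values on $\{\pm1\}^d$, and its value there is $\Upsilon(\tilde{\x})=\pi(\x)$. It follows that
\begin{align}
\Upsilon(\s)=\sum_{\x\in\{0,1\}^d}\pi(\x)\prod_{j=1}^d\frac{1+\tilde x_j s_j}{2}.
\end{align}
The right-hand side is multiaffine and agrees with $\Upsilon$ at every vertex, because each factor equals $1$ when $s_j=\tilde x_j$ and $0$ when $s_j=-\tilde x_j$.

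\textbf{Step 3 (substitution).} Set $s_j=w_j=(1-z_j)/(1+z_j)$. Then
\begin{align}
\frac{1+w_j}{2}=\frac{1}{1+z_j},\qquad \frac{1-w_j}{2}=\frac{z_j}{1+z_j}.
\end{align}
For $x_j=0$ (so $\tilde x_j=1$) the factor $(1+\tilde x_j w_j)/2$ equals $1/(1+z_j)$, and for $x_j=1$ it equals $z_j/(1+z_j)$; in both cases it is $z_j^{x_j}/(1+z_j)$. Multiplying through by $\prod_j(1+z_j)$ gives $\Upsilon(\vb w)\prod_j(1+z_j)=\sum_{\x}\pi(\x)\prod_j z_j^{x_j}=Z_\pi(\z)$.

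\textbf{Step 4 (zero-freeness equivalence).} The Cayley map $z\mapsto(1-z)/(1+z)$ is an involutive Möbius transformation that sends the open right half-plane bijectively onto the open unit disk, since $|1-z|<|1+z|$ if and only if $\mathrm{Re}\,z>0$. Moreover, $1+z_j\neq0$ on the right half-plane. Consequently $Z_\pi$ vanishes at some $\z$ with all $\mathrm{Re}\,z_j>0$ exactly when $\Upsilon$ vanishes at the corresponding $\vb w$ with all $|w_j|<1$, and by bijectivity every such $\vb w$ arises from some $\z$ in the half-plane.
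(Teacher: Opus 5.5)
Your proof is correct. It rests on the same fact as the paper's proof: the dependence on each $s_j$ is affine, so summing over $x_j$ against $z_j^{x_j}$ collapses to $(1+z_j)$ times an evaluation at $s_j=w_j$. You organize the argument differently, however.

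The paper never separates multiaffinity from the identity. It writes each $\pi(\x)$ as a Berezin integral and uses Lemma~\ref{lem:DOmGrassmann} to factor $e^{\frac12\theta^{\T}D_{\s}J\theta}=\prod_j(1+s_j\theta_j\theta_{d+j})$. It then performs the sum over $\x$ inside the integral via $\sum_{x_j}z_j^{x_j}(1+(-1)^{x_j}\theta_j\theta_{d+j})=(1+z_j)(1+w_j\theta_j\theta_{d+j})$, and finally rolls the product back into a Pfaffian. You instead prove multiaffinity directly from the matching expansion and then apply the multilinear interpolation formula on $\{\pm1\}^d$.

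Your route needs no Berezin calculus. It also makes visible that the Cayley identity holds for the unique multiaffine extension of \emph{any} function on the hypercube. The only Gaussian-specific input is therefore that the complexified Pfaffian formula \emph{is} that extension, and this Pfaffian form is exactly what Lemma~\ref{lem:ZeroFreeness} then uses in its kernel argument. The paper's route is more compact and stays within the Grassmann formalism it already uses for the overlap formula, obtaining multiaffinity and the identity in one computation.
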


Before we give a proof of Proposition~\ref{prop:CayleyTransform}, let us state a simple lemma for ease of invoking it later.
\begin{lemma}\label{lem:DOmGrassmann}
For $\s \in \C^d$, $D_{\s} = \mathrm{diag}(\s) \oplus \mathrm{diag}(\s)$, and $\theta=(\theta_1, \dots, \theta_{2d})^{\mathsf{T}}$ a column vector of $2d$ Grassmann variables,
\begin{align}
    e^{\frac{1}{2} \theta^{\mathsf{T}} D_{\s} J \,\theta} = \prod_{j=1}^d (1 + s_j \theta_j \theta_{d+j}).
\end{align}
\end{lemma}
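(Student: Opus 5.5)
The identity $e^{\frac{1}{2}\theta^{\mathsf{T}} D_{\s} J \theta} = \prod_{j=1}^d (1 + s_j \theta_j \theta_{d+j})$ follows from two facts: even Grassmann monomials commute, and each quadratic factor squares to zero.

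\textbf{Step 1: compute the quadratic form.} The matrix $D_{\s} J$ has a simple block structure:
\begin{align}
D_{\s} J = \mqty(0 & \mathrm{diag}(\s) \\ -\mathrm{diag}(\s) & 0).
\end{align}
Hence its only nonzero entries are $(D_{\s}J)_{j,d+j} = s_j$ and $(D_{\s}J)_{d+j,j} = -s_j$ for $j\in[d]$. It follows that
\begin{align}
\tfrac{1}{2}\theta^{\mathsf{T}} D_{\s} J \theta = \tfrac{1}{2}\sum_{j=1}^d \bigl(s_j \theta_j\theta_{d+j} - s_j \theta_{d+j}\theta_j\bigr) = \sum_{j=1}^d s_j \theta_j \theta_{d+j},
\end{align}
where the last equality uses anticommutation, $\theta_{d+j}\theta_j = -\theta_j\theta_{d+j}$.

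\textbf{Step 2: factorize the exponential.} Set $q_j := s_j\theta_j\theta_{d+j}$. Each $q_j$ is an even element of $\mathcal G_{2d}$, so the $q_j$ commute pairwise. For commuting elements, the exponential of a sum is the product of the exponentials, because the exponential is defined by its (finite) power series and the binomial expansion holds for commuting elements. Therefore
\begin{align}
e^{\sum_j q_j} = \prod_{j=1}^d e^{q_j}.
\end{align}

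\textbf{Step 3: truncate each factor.} Since $\theta_j^2 = 0$, we have $q_j^2 = s_j^2\,\theta_j\theta_{d+j}\theta_j\theta_{d+j} = 0$. The series for each factor therefore stops after the linear term, giving $e^{q_j} = 1 + q_j$, as in Eq.~\eqref{eq:GrassmannExp}. Combining this with Step 2 yields the claimed identity.

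Nothing in the argument requires $s_j$ to be real, so the identity holds for all $\s\in\C^d$. None of the three steps is difficult. The only point that needs care is Step 2: exponentiating a sum as a product of exponentials is legitimate only because the quadratic monomials are even and hence commute.
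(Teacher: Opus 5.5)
Your proof is correct and follows the paper's argument essentially step for step. You reduce the quadratic form to $\sum_j s_j\theta_j\theta_{d+j}$, factorize the exponential because the even Grassmann terms commute, and truncate each factor using $(\theta_j\theta_{d+j})^2=0$, which is Eq.~\eqref{eq:GrassmannExp}.
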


\begin{proof}[Proof of Lemma~\ref{lem:DOmGrassmann}]
Since $D_{\s} J = \mqty(0&\textrm{diag}(\s) \\ -\textrm{diag}(\s)&0)$,
\begin{align}
    \theta^\T D_{\s} J \, \theta 
    = \sum_{j=1}^d s_j \theta_j \theta_{d+j} -  \sum_{j=1}^d s_j \theta_{d+j} \theta_j 
    = 2\sum_{j=1}^d s_j \theta_j \theta_{d+j}.
\end{align}
All terms in the sum are even Grassmann elements, so they are mutually commuting and the exponential factorizes. Each exponential truncates after the linear term due to Eq.~\eqref{eq:GrassmannExp}, proving the lemma.
\end{proof}

\begin{proof}[Proof of Proposition~\ref{prop:CayleyTransform}]
Using Lemma~\ref{lem:DOmGrassmann} and the fact that $s_j = (-1)^{x_j}$, 
\begin{align}
Z_{\pi}(\z) 
    &= \frac{\zeta}{2^d} \sum_{\s\in \{\pm 1\}^d} \int D\theta \; e^{\frac{1}{2} \theta^\T (\Gamma + D_{\s} J) \theta}  
    \prod_{j=1}^d z_j^{x_j} \nonumber \\
    &= \frac{\zeta}{2^d} \sum_{\s} \int D\theta \; e^{\frac{1}{2} \theta^\T \Gamma \theta}  
    \prod_{j=1}^d z_j^{x_j} (1 + (-1)^{x_j} \theta_j \theta_{d+j}).
\end{align}
Exchanging the sum and the product,
\begin{align}
Z_{\pi}(\z) 
    &= \frac{\zeta}{2^d} \int D\theta \; e^{\frac{1}{2} \theta^\T \Gamma \theta}  
    \prod_{j=1}^d \sum_{x_j=0}^1 z_j^{x_j}(1 + (-1)^{x_j} \theta_j \theta_{d+j}) \nonumber\\
    &= \frac{\zeta}{2^d}  \int D\theta \; e^{\frac{1}{2} \theta^\T \Gamma \theta}  
    \prod_{j=1}^d (1 + z_j + (1-z_j) \theta_j \theta_{d+j})  \nonumber\\
    &= \frac{\zeta}{2^d} \int D\theta \; e^{\frac{1}{2} \theta^\T \Gamma \theta} 
    \prod_{j=1}^d  \left(1 + w_j \theta_j \theta_{d+j}\right) 
    \cdot \prod_{j=1}^d \left(1 + z_j\right).
\end{align}
Finally, using Lemma~\ref{lem:DOmGrassmann} again to roll the product back into an exponential $\exp\left(\theta^{\T} D_{\vb w} J \, \theta/2 \right)$ gives the result,
\begin{align}
Z_{\pi}(\z) =  \Upsilon(\vb w) \cdot \prod_{j=1}^d (1+z_j).
\end{align}
The Cayley transform $z\mapsto(1-z)/(1+z)$ bijectively maps the open right half-plane to the open unit disk, inside which the denominator $1+z_{j}$ is never zero, yielding the final claim.
\end{proof}

\begin{lemma}{\label{lem:ZeroFreeness}}
    The polynomial $\Upsilon(\s)$ is zero-free in the open unit polydisk, i.e., whenever $|s_j|<1$ for all $j$.
\end{lemma}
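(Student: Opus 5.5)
The plan is to reduce zero-freeness of the Pfaffian to invertibility of a matrix, and then get invertibility from a norm bound that uses only $\Gamma^{\T}\Gamma=\id$. By definition $\Upsilon(\s) = (\zeta/2^d)\,\Pf(\Gamma + D_{\s}J)$, and the prefactor $\zeta/2^d$ is nonzero. By Eq.~\eqref{eq:pfaffian-det}, $\Pf(A)^2=\det(A)$, so it suffices to show that the complex antisymmetric matrix $\Gamma + D_{\s}J$ is nonsingular whenever $|s_j|<1$ for all $j$. I will not need the Grassmann representation here; the factorization below does all the work.

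The first step uses purity, $\Gamma^{\T}\Gamma=\id$, so $\Gamma$ is a real orthogonal matrix. We factor
\begin{align}
  \Gamma + D_{\s}J = \Gamma\bigl(\id + \Gamma^{\T} D_{\s} J\bigr),
\end{align}
so that $\det(\Gamma + D_{\s}J) = \det(\Gamma)\,\det(\id + \Gamma^{\T}D_{\s}J)$. Since $\det\Gamma=\pm1$, it remains to show that $-1$ is not an eigenvalue of $M \coloneqq \Gamma^{\T}D_{\s}J$.

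The second step is an operator-norm estimate. Both $\Gamma^{\T}$ and $J$ are real orthogonal, hence unitary, so $\norm{M}_{\mathrm{op}} = \norm{D_{\s}}_{\mathrm{op}} = \max_j |s_j| < 1$, because $D_{\s}=\mathrm{diag}(\s)\oplus\mathrm{diag}(\s)$ is diagonal. Every eigenvalue of $M$ therefore has modulus strictly less than one. In particular $\id + M$ is invertible, for instance via the convergent Neumann series. This gives $\det(\Gamma + D_{\s}J)\neq 0$, hence $\Pf(\Gamma+D_{\s}J)\neq0$ and $\Upsilon(\s)\neq 0$ on the open unit polydisk.

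I expect no serious obstacle. The only point to check carefully is that the extension $\Upsilon$ of Proposition~\ref{prop:CayleyTransform} is exactly $(\zeta/2^d)\Pf(\Gamma + D_{\s}J)$ with complex $\s$, which holds because the Pfaffian is a polynomial in the entries and the substitution is entrywise. Combined with Proposition~\ref{prop:CayleyTransform}, this yields $\S_1$-stability of $Z_\pi$. The argument also makes the role of purity visible: if $\Gamma$ were only a contraction, as for mixed states, the factorization would fail because $\Gamma$ need not be invertible. For an invertible contraction one could instead write $\Gamma+D_{\s}J = (\id + D_{\s}J\Gamma^{-1})\Gamma$, but then $\norm{\Gamma^{-1}}_{\mathrm{op}}>1$ destroys the unit-disk conclusion. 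This is consistent with the remark in Sec.~\ref{sec:conclusion} that mixed states give only weaker sector stability.
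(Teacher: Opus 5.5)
Your proof is correct and is essentially the paper's argument. The paper assumes a nonzero $u\in\ker(\Gamma+D_{\s}J)$ and takes norms of $\Gamma u=-D_{\s}Ju$, using orthogonality of $\Gamma$ and $J$, to reach the contradiction $\norm{u}\le\max_j|s_j|\,\norm{u}<\norm{u}$. This is the same bound $\norm{\Gamma^{\T}D_{\s}J}_{\mathrm{op}}<1$ that you use, phrased as a kernel contradiction rather than through the factorization $\Gamma(\id+\Gamma^{\T}D_{\s}J)$.
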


\begin{proof}
    $\Pf \,A = 0 \iff \det \,A =0 \iff A $ has a nontrivial kernel. So let us proceed to prove the lemma by assuming there is a nonzero vector $u \in \ker (\Gamma + D_{\s} J)$. In other words,
    \begin{align}
        \Gamma u = - D_{\s} J \, u.
    \end{align}
    Now take the 2-norm of both sides. For pure Gaussian states, $\Gamma \in \SO(2d)$, i.e., $\Gamma^\T \Gamma = -\Gamma^2 = \id$. Thus, the left hand side is just $\norm{u}$. Similarly, $J \in \SO(2d)$ and $\norm{J u} = \norm{u}$. We therefore have that
    \begin{align}
        \norm{u} \le \max_j{\abs{s_j}} \norm{u}.
    \end{align}
    However, since every $s_j$ lies in the open unit disk, the inequality $\abs{s_j}<1$ is strict and we arrive at the contradiction that $\norm{u} < \norm{u}$.
\end{proof}

Proposition~\ref{prop:CayleyTransform} and Lemma~\ref{lem:ZeroFreeness} together show that the generating polynomial of the certification walk is $\S_1$-stable. Next, we identify the homogenization of the certification walk with a down-up walk.

\begin{lemma}
    The homogenization of the certification walk $P$ in
    Eq.~\eqref{eq:certification_walk} is a
    $d\leftrightarrow d-2$ down-up walk.
\end{lemma}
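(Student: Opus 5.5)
The plan is to write the $d\leftrightarrow d-2$ down-up walk for the homogenized distribution explicitly and check entry by entry that it equals $P$ of Eq.~\eqref{eq:certification_walk}. Via the homogenization map, each $\x\in\{0,1\}^d$ becomes the $d$-element subset $X^{\mathrm{hom}}=\{(j,+1)\mid x_j=1\}\cup\{(j,-1)\mid x_j=0\}$ of the $2d$ ground set. Define $\mu(X^{\mathrm{hom}})=\pi(\x)$, so that $\mu$ is supported on $\binom{[2d]}{d}$ and its generating polynomial is $Z_\pi^{hom}$. Every subset in the support contains exactly one of $(j,\pm1)$ for each site $j$.

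First I analyze the down step. It deletes $l=2$ elements of $X^{\mathrm{hom}}$ uniformly, so each unordered pair is chosen with probability $\binom{d}{2}^{-1}$. Since each element of $X^{\mathrm{hom}}$ carries a distinct site label, deleting two elements is the same as choosing a pair of sites $k=\{k_1,k_2\}\subset[d]$ uniformly and forgetting their labels. The resulting $(d-2)$-set $Y$ records precisely the outcome string $\z^{(k)}$ on $[d]\setminus k$.

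Next I analyze the up step. Given $Y$, the supersets $X'\supset Y$ of size $d$ with $\mu(X')>0$ must contain one label for each of $k_1,k_2$; supersets that reuse a site label have $\mu=0$. So the candidates are the four completions $b_1b_2\in\{0,1\}^2$. Definite parity kills the two completions of the wrong parity, since $\pi$ vanishes there. If $\x$ is the current state, the survivors are $\x$ itself and the string $\x'$ obtained by flipping both bits in $k$. The up step therefore moves to $\x'$ with probability $\pi(\x')/(\pi(\x)+\pi(\x'))$ and stays with probability $\pi(\x)/(\pi(\x)+\pi(\x'))$.

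Summing over $k$, I get the following entries. For $|\x\oplus\y|=2$ exactly one $k$ connects $\x$ and $\y$, which gives $\binom{d}{2}^{-1}\pi(\y)/(\pi(\x)+\pi(\y))$. The diagonal entry is $\binom{d}{2}^{-1}\sum_{\x'}\pi(\x)/(\pi(\x)+\pi(\x'))$. All other entries vanish. These match Eq.~\eqref{eq:certification_walk} exactly. When $\pi(\x')=0$ the ratio is $1$, consistent with the diagonal sum in that equation running over all $\x'$ at Hamming distance two, not only those in $V$.

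The main subtlety is bookkeeping rather than anything deep. I need to confirm that zero-weight supersets, both those with a doubled site and those of wrong parity, drop out of the up-step normalization. I also need to state the walk on the support only, where $\pi>0$ makes the denominators nonzero. Both points follow directly from the definitions of $U$ and $D$ in Appendix~\ref{app:downUp_defn}.
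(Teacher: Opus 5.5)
Your proposal is correct and follows the paper's proof essentially step by step. You identify the down step with a uniform choice of the mode pair $k$, use definite parity to reduce the up step to the two completions $\x$ and its double flip, and match the resulting entries with Eq.~\eqref{eq:certification_walk}. The differences are only that you compute the diagonal entries directly rather than inferring them from stochasticity as the paper does, and you explicitly dispose of the zero-weight supersets with a doubled site label, which the paper leaves implicit.
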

\begin{proof}
Recall that the certification walk is a walk on the graph $G=(V,E)$, where $V=\{\x \mid \pi(\x)>0\}$ is the support of the distribution $\pi$ over $\{0,1\}^d$, and two vertices are connected iff they differ on exactly two modes, $|\x\oplus\y|=2$. In particular, the walk always preserves the parity of the Hamming weight.

Homogenization corresponds to replacing each mode $j\in[d]$
by a pair of sites $(j,0)$ and $(j,1)$:
\begin{align}
    X \to X^{hom} = \{(j,0)\mid x_j=0\} \cup \{(j,1)\mid x_j=1\}.
\end{align}
Thus $X^{hom} \in\binom{[2d]}{d}$ contains exactly one occupied site from each pair $\{(j,0),(j,1)\}$.
Equivalently, one may regard the original configuration as $d$ sites, each carrying one of two labels, $0$ or $1$.

Consider the $d\leftrightarrow d-2$ down-up walk with stationary distribution $\pi$. 
\begin{itemize}
    \item \textbf{Down step:} starting from $\x$, choose an unordered pair of modes $k=\{k_1,k_2\}\in\binom{[d]}{2}$ uniformly and remove their two labels $(x_{k_1},x_{k_2})$, leaving a partial configuration $\z^{(k)}$ on the remaining $d-2$ modes. 
    \item \textbf{Up step:} assign new labels $(y_{k_1},y_{k_2})$ to the two modes with probability proportional to the stationary weight of the resulting configuration $\y$. Explicitly, if we denote by $Z^{(k)}$ the subset corresponding to $\z^{(k)}$, then
    \begin{align}
    \Pr\bigl(Y|Z^{(k)} \bigr)
    =
    \frac{\pi(Y)}
    {\sum_{X' \supset Z^{(k)}} \pi(X')}.
\end{align}
\end{itemize}
\noindent Since $\pi$ is supported on a fixed-parity sector, only two of the
four possible assignments of $(y_{k_1},y_{k_2})$ can have nonzero weight: if $\y$ agrees with $\x$ on $[d]\backslash k$, then either $\y=\x$ or $|\x \oplus \y|=2$.
Thus, 
\begin{align}\label{eq:down_up_offdiag}
    \mel{\x}{P^{\vee}}{\y}
    =
    \binom{d}{2}\inv
    \frac{\pi(\y)}
    {\pi(\x)+\pi(\y)}, \quad (\x,\y)\in E.
\end{align}
The diagonal transition probability is obtained by summing the remaining probability over all choices of $k$ for which the up step returns to $\x$. Eq.~\eqref{eq:down_up_offdiag} is precisely the transition probability of the certification walk in
Eq.~\eqref{eq:certification_walk} for every pair of distinct configurations in the support of $\pi$. Since both $P$ and $P^{\vee}$ are stochastic, their diagonal entries also agree. We therefore conclude that
\begin{align}
    P^{\vee}=P,
\end{align}
which proves that the homogenized certification walk is the
$d\leftrightarrow d-2$ down-up walk with stationary distribution $\pi$.
\end{proof}

We now invoke two general results to connect sector stability to entropy contraction of multi-level down-up walks.

\begin{theorem}[Lemma 71 of the preprint \cite{alimohammadi2021fractional_arxiv}]\label{thm:FLC-hom}
    Let $\mu$ be a distribution generated by a multiaffine $\S_{\alpha}$-stable polynomial $Z_{\mu}$. Its homogenization $Z_{\mu}^{hom}$ is $\alpha/2$-fractionally log-concave.
\end{theorem}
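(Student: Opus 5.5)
The plan splits the statement into two pieces: a change of variables that transfers sector stability from $Z_\mu$ to $Z_\mu^{hom}$, followed by a general principle that sector-stable \emph{homogeneous} polynomials with nonnegative coefficients are fractionally log-concave. Recall that a polynomial $g$ in $m$ variables with nonnegative coefficients is $\beta$-fractionally log-concave if $y\mapsto \log g(y_1^{\beta},\dots,y_m^{\beta})$ is concave on $\R_{>0}^m$. The target is therefore $\beta=\alpha/2$ with $m=2n$.

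\textbf{Step 1: sector stability of the homogenization.} Since $Z_\mu$ is multiaffine, we have the identity
\begin{align}
Z_\mu^{hom}(z,w)=\prod_{j=1}^n w_j\; Z_\mu\!\left(\frac{z_1}{w_1},\dots,\frac{z_n}{w_n}\right).
\end{align}
Take all $z_j,w_j\in\S_{\alpha/2}$, i.e.\ each argument lies in $(-\alpha\pi/4,\alpha\pi/4)$. Then $\arg(z_j/w_j)\in(-\alpha\pi/2,\alpha\pi/2)$, so $z_j/w_j\in\S_\alpha$. By hypothesis the factor $Z_\mu(z/w)$ is then nonzero, and $\prod_j w_j\neq0$. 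Hence $Z_\mu^{hom}$ is a homogeneous, $\S_{\alpha/2}$-stable polynomial of degree $n$ with nonnegative coefficients. This step is routine.

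\textbf{Step 2: sector-stable homogeneous implies fractionally log-concave.} Set $\beta=\alpha/2$, $g=Z_\mu^{hom}$, and $F(y)=g(y^\beta)$ using the principal branch. Because $y\mapsto y^\beta$ maps the open right half-plane $\S_1$ into $\S_\beta$, the function $F$ is holomorphic and zero-free on $\S_1^{2n}$. It is also positive on $\R_{>0}^{2n}$ and homogeneous of degree $k=n\beta$. Concavity of $\log F$ on the positive orthant reduces to checking second derivatives along real lines. Fix $x\in\R_{>0}^{2n}$ and a real direction $v$, and consider $\varphi(s)=F(x+sv)$. This function is holomorphic and zero-free on the vertical strip $\{s : |\mathrm{Re}\,s|<\min_j x_j/|v_j|\}$, since there every coordinate has positive real part. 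Moreover $\varphi$ is real and positive on the real segment of that strip. By homogeneity, $\varphi(s)=s^{k}F(x/s+v)$ for large $|s|$ in the strip, which gives polynomial growth of order $|s|^{k}$ at $\pm i\infty$.

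The goal is $(\log\varphi)''(0)\le0$. The intended argument follows the one-variable real-stable proof, where real-rootedness of $t\mapsto f(x+tv)$ gives $(\log f)''=-\sum_i(t-r_i)^{-2}\le 0$. Here roots are replaced by a representation of $h=\log\varphi$: it is holomorphic on the strip, $\mathrm{Re}\,h$ grows like $k\log|s|$, and $\mathrm{Im}\,h$ is bounded by $k\pi/2$ because of the sector constraint. From these facts, a Poisson/Herglotz-type representation of $h'$ on the strip should force $h''$ to be nonpositive on the real axis.

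\textbf{Main obstacle.} Step 2 is the hard part. The argument must show that the homogeneity exponent and the opening angle combine to give exactly $\beta$, and not merely some weaker exponent. My first attempt will be the vertical-strip Herglotz argument above. If the constants do not close, I would instead invoke directly the theorem of Alimohammadi et al.\ that a homogeneous $\S_\beta$-stable polynomial with nonnegative coefficients is $\beta$-fractionally log-concave, which is exactly the statement of Step 2. Combined with Step 1, that theorem yields the $\alpha/2$-fractional log-concavity of $Z_\mu^{hom}$ claimed.
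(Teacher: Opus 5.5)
Your proposal is correct and follows the standard route to the cited result (the paper gives no proof of its own, only the reference to Lemma~71 of Alimohammadi et al.). The identity $Z_\mu^{hom}(z,w)=\prod_j w_j\,Z_\mu(z_1/w_1,\dots,z_n/w_n)$ turns $\S_{\alpha}$-stability into $\S_{\alpha/2}$-stability of a homogeneous polynomial with nonnegative coefficients. Their G\aa rding-type theorem, that homogeneous $\S_{\beta}$-stable polynomials with nonnegative coefficients are $\beta$-fractionally log-concave, then finishes the argument.

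Be aware that this cited theorem, not your strip/Herglotz sketch, is what carries Step~2. The properties you list on the strip $|\mathrm{Re}\,s|<a$ do not by themselves force $h''(0)\le 0$: holomorphy and zero-freeness, realness on the real segment, $\mathrm{Re}\,h=k\log|s|+O(1)$, and $|\mathrm{Im}\,h|<k\pi/2$. For example, $h(s)=\tfrac{k}{2}\log(b^2-s^2)+c\cosh(\pi s/(2a))$ with $b\gg a$ and $c=8ka^2/(\pi b)^2$ has all of these properties, yet $h''(0)=k/b^2>0$. A self-contained proof would therefore need an extra sign condition, e.g.\ that $-h'$ maps the upper half of the strip into the upper half-plane, which is what real-rootedness supplies when $\beta=1$.
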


\begin{theorem}[Theorem 5 of \cite{anari2022entropic}]\label{thm:FLC-up-down}
    Suppose that $\mu$ is a distribution on $\binom{[n]}{k}$ whose generating polynomial is $\alpha$-fractionally log-concave for $0<\alpha\le1$. Let $l \le k-\lceil 1/\alpha \rceil$, and $D_{k\to l}$ the corresponding down operator. For all probability distribution $\nu$ on the support of $\mu$,
    $$\KL{\nu D}{\mu D} \le (1-\kappa) \KL{\nu}{\mu},$$
    where $\kappa$ is a function of $k,l$, and $\alpha$. In particular, when $1/\alpha$ is an integer, 
    \begin{align}
        \kappa = \left. {k-l \choose 1/\alpha} \middle/ {k \choose 1/\alpha} \right..
    \end{align}
\end{theorem}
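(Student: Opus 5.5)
The plan is to follow the entropic-independence program of Anari et al.: first get a one-step entropy bound from fractional log-concavity, then assemble the $k\to l$ down operator by a local-to-global induction over links. Throughout, write $m=1/\alpha$, assumed to be an integer as in the displayed formula for $\kappa$.

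\textbf{Step 1 (entropic independence from fractional log-concavity).} I would first show that any $\alpha$-fractionally log-concave distribution $\mu$ on $\binom{[n]}{k}$ satisfies, for every $\nu$,
\begin{align}
  \KL{\nu D_{k\to 1}}{\mu D_{k\to 1}} \le \frac{1}{\alpha k}\,\KL{\nu}{\mu}.
\end{align}
The route is convex duality. By the Gibbs variational principle, the supremum over $\nu$ of the difference between the two sides can be rewritten as a statement about the tilted partition function $z\mapsto Z_\mu(z_1,\dots,z_n)$ on $\R_{>0}^n$. That statement reads
\begin{align}
  \log Z_\mu(z_1^{\alpha},\dots,z_n^{\alpha}) \le \alpha k \log \sum_i (\mu D_{k\to 1})_i\, z_i^{1/k}
\end{align}
for all $z\in\R_{>0}^n$, after a reparametrization $z_i\mapsto z_i^{1/k}$. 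This inequality should follow from concavity of $\log Z_\mu(z^\alpha)$, i.e.\ fractional log-concavity, through the tangent-plane bound at $z=\mathbf 1$, where the gradient is $\alpha k\,\mu D_{k\to1}$. Homogeneity of degree $k$ and Jensen's inequality then absorb the exponent.

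\textbf{Step 2 (closure under pinning).} Next I would check that $\alpha$-fractional log-concavity is inherited by every link $\mu_T$, the distribution of $S\setminus T$ conditioned on $S\supseteq T$. The generating polynomial of $\mu_T$ is $\partial^T Z_\mu$ with the variables in $T$ set to zero. Fractional log-concavity is known to be closed under differentiation and under specialization to the boundary of the orthant, so Step 1 applies to each link $\mu_T$, with its own size $k-|T|$.

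\textbf{Step 3 (local-to-global).} I would factor $D_{k\to l}$ as the composition $D_{k\to k-1}\cdots D_{l+1\to l}$ and use the chain rule for relative entropy along the resulting Markov chain of nested sets $S_k\supset S_{k-1}\supset\cdots\supset S_l$. Removing a uniformly random element of $S_j$ is dual to the up step of adding an element from a link. The chain rule expresses the entropy lost at level $j$ as an average over links $T$ with $|T|=j-1$ of $\KL{\nu_T D_{\cdot\to1}}{\mu_T D_{\cdot\to1}}$. Step 1 applied in each link bounds this by a $1/(\alpha(k-j+1))$-type fraction of the entropy remaining in the link. Solving the resulting recursion for the retained entropy fraction should produce the product
\begin{align}
  1-\kappa=1-\prod_{i=0}^{m-1}\frac{k-l-i}{k-i}.
\end{align}
This product is exactly $1-\binom{k-l}{m}\big/\binom{k}{m}$, as stated. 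The hypothesis $l\le k-m$ enters because Step 1 is only nontrivial, with constant less than one, when the link size exceeds $m$.

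\textbf{Main obstacle.} I expect the main difficulty to be this last bookkeeping. A naive per-level bound gives a lossy product, and obtaining the exact binomial ratio requires tracking the entropy in $m$-subsets. The cleanest way to do that is to pass to the distribution $\mu D_{k\to m}$ and $m$-uniform random removals, where $\alpha$-fractional log-concavity with $m=1/\alpha$ plays the role that ordinary log-concavity plays for $m=1$. So I would first settle the case $\alpha=1$, where $\kappa=(k-l)/k$, and then lift it to general $m$ by this $m$-subset device.
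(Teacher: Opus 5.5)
Your Steps 1--3 are the right proof, and they are essentially the argument of \cite{anari2022entropic}. The paper itself gives no proof of this statement; it simply imports the result. Two points need fixing.

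First, the ``main obstacle'' is a false alarm, and you do not need the $m$-subset device you propose as a replacement for Step~3. As described, it is also not an argument: you give no mechanism by which $\alpha$-fractional log-concavity would make $m$-element removals behave like the $\alpha=1$ case. The per-level recursion of your Step~3 is already exact. Put $E_j=\KL{\nu D_{k\to j}}{\mu D_{k\to j}}$, so that $E_k=\KL{\nu}{\mu}$ and $E_0=0$. Let $R_j=E_k-E_{j-1}$, which equals the average of $\KL{\nu_T}{\mu_T}$ over $T\sim\nu D_{k\to j-1}$. Your chain-rule identity, together with Step~1 applied in each link of size $k-j+1$, gives $R_j-R_{j+1}=E_j-E_{j-1}\le \frac{m}{k-j+1}R_j$. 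Every factor is positive for $j\le l\le k-m$, so
\begin{align}
R_{l+1}\ \ge\ \prod_{j=1}^{l}\Bigl(1-\frac{m}{k-j+1}\Bigr)R_1=\prod_{i=k-l+1}^{k}\frac{i-m}{i}\,R_1=\frac{\binom{k-l}{m}}{\binom{k}{m}}\,E_k .
\end{align}
This is exactly $E_l=E_k-R_{l+1}\le(1-\kappa)E_k$ with the stated $\kappa$. For non-integer $1/\alpha$ the same recursion gives $\kappa=\prod_{j=1}^{l}\bigl(1-\frac{1}{\alpha(k-j+1)}\bigr)$, which is the ``function of $k,l,\alpha$'' in the statement.

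Second, the displayed dual inequality in Step~1 is false as written. Take $z=c\mathbf 1$ with $c>1$ and $k\ge2$: the left side is $\alpha k\log c$, while the right side is $\alpha\log c$. The correct statement comes from the Gibbs variational principle with $z_i=e^{t_i}$:
\begin{align}
\log Z_\mu(z_1^\alpha,\dots,z_n^\alpha)\le \alpha k\log\sum_i(\mu D_{k\to1})_i z_i \qquad \text{for all } z\in\R_{>0}^n.
\end{align}
It follows as you indicate, and no Jensen step is needed. Let $h(z)=\log Z_\mu(z^\alpha)$ and $p=\mu D_{k\to1}$. The tangent-plane bound at $\mathbf 1$ gives $h(z)\le \alpha k(\langle p,z\rangle-1)$. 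Apply it at $z/\langle p,z\rangle$ and use homogeneity, $h(cz)=h(z)+\alpha k\log c$; this yields the inequality.
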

\noindent Specializing to the $k\leftrightarrow k-2$ down-up walk, the theorem says that if $Z_{\mu}^{hom}$ is 1/2-fractionally log-concave, then the down walk contracts entropy with $\kappa = {k \choose 2}^{-1}$. 
 
Theorem~\ref{thm:FLC-hom} implies that the homogenized $Z_{\pi}^{hom}$ of the certification walk is 1/2-fractionally log-concave.
By Theorem~\ref{thm:FLC-up-down}, the down step of the corresponding $d\leftrightarrow d-2$ down-up walks with $\pi$ as the stationary distribution contracts the relative entropy with the particular set of constants, $k=d$, $l=d-2$ and $\alpha=1/2$, so that
\begin{align}
    \kappa = \binom{d}{2}^{\!\!-1} = \frac{2}{d(d-1)}.
\end{align}
As shown in Proposition~\ref{prop:entropy_contraction_bounds_gap}, $\kappa$ lower bounds the spectral gap $\gamma(L)$ of the certification protocol.  Hence, we conclude the proof of Theorem~\ref{thm:main}.

%==========================
\section{Shorter proof of Theorem~\ref{thm:main} for Slater determinants}\label{app:sub-proof} 
%==========================

Since Slater determinants are pure Gaussian states, and total particle number can be certified from the same computational-basis measurement used to certify parity, Algorithm~\ref{algo} certifies target Slater determinants with the same sample-complexity bound as in Theorem~\ref{thm:main}. Moreover, the pair-product states in Proposition~\ref{prop:tight} saturate the bound in Theorem~\ref{thm:main} with matching constant whether they are number-conserving or not, showing the sharpness of the bound for this subclass of states as well.

For Slater determinants, however, the proof of Theorem~\ref{thm:main} can be considerably shortened by leveraging two special structures.
First, the number-conserving certification walk can be compared directly with the $n\leftrightarrow n-1$ down-up walk without the need for homogenization.
Second, the spectral gap of this down-up walk is known for the determinantal stationary distribution, Eq.~\eqref{eq:DeterminantProb}; the distribution belongs to the class of \emph{determinantal point processes} (DPPs), which have been extensively studied in stochastic processes and machine learning, see, for example, Ref.~\cite{lyons2003determinantal,kulesza2012determinantal}. 

We begin with the comparison between the two walks.

\begin{lemma}[Dirichlet-form comparison]
\label{lem:NC-dirichlet-compare} Let $P$ be the number-conserving certification walk, and $P^{\vee}$ be the $n\leftrightarrow n-1$ down-up walk with the same stationary distribution as $P$. Then for every real-valued $f$, we have that
\begin{align}
    \mathcal E_P(f,f)\ \ge\ \dfrac{n}{\binom{d}{2}}\,\mathcal E_{P^\vee}(f,f).
\end{align}
\end{lemma}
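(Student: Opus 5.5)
The plan is a direct, edge-by-edge comparison of the two Dirichlet forms using the identity \eqref{eq:dirichletDiag},
\begin{align}
    \mathcal E_Q(f,f)=\frac12\sum_{X\neq Y}\bigl(f(X)-f(Y)\bigr)^2\,\pi(X)\,Q(X,Y),
\end{align}
which holds for any walk $Q$ reversible with respect to $\pi$. Both $P$ and $P^{\vee}$ are reversible with respect to the determinantal distribution $\pi$ of Eq.~\eqref{eq:DeterminantProb}. Every summand above is nonnegative, and diagonal entries never contribute. It therefore suffices to prove the pointwise inequality of edge weights
\begin{align}
    \pi(X)P(X,Y)\ \ge\ \frac{n}{\binom d2}\,\pi(X)P^{\vee}(X,Y)\qquad\text{for all } X\neq Y \text{ in the support.}
\end{align}

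First I identify the off-diagonal supports of the two walks. Since $\pi$ lives on $\binom{[d]}{n}$, a one-step move of $P^{\vee}$ removes one element $i\in X$ and adds some $j\notin X\setminus\{i\}$. Hence $P^{\vee}(X,Y)\neq0$ with $X\neq Y$ forces $Y=X\setminus\{i\}\cup\{j\}$, i.e.\ $|X\oplus Y|=2$. These are exactly the edges $E$ of the certification walk, restricted to the Johnson graph $J(d,n)$. For such a pair, the removed element $i=X\setminus Y$ is unique and so is the intermediate set $Z=X\cap Y\in\binom{[d]}{n-1}$. This gives the explicit formula
\begin{align}
    P^{\vee}(X,Y)=\frac1n\,\frac{\pi(Y)}{W(Z)},\qquad W(Z)\coloneqq\sum_{X'\supset Z,\ |X'|=n}\pi(X').
\end{align}
On the other side, Eq.~\eqref{eq:certification_walk} gives $P(X,Y)=\binom d2^{-1}\pi(Y)/(\pi(X)+\pi(Y))$, again from the single pair $k=X\oplus Y$.

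The key (and essentially only) estimate is $W(Z)\ge\pi(X)+\pi(Y)$. This holds because $X$ and $Y$ are two distinct $n$-sets containing $Z$, and all weights are nonnegative. It yields
\begin{align}
    \pi(X)P(X,Y)=\binom d2^{\!-1}\frac{\pi(X)\pi(Y)}{\pi(X)+\pi(Y)}\ \ge\ \binom d2^{\!-1}\frac{\pi(X)\pi(Y)}{W(Z)}=\frac{n}{\binom d2}\,\pi(X)P^{\vee}(X,Y).
\end{align}
Summing against $(f(X)-f(Y))^2$ over all edges then gives the lemma.

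The only points needing care are bookkeeping rather than analysis. Vertices outside the support $V$ should be dropped, which is harmless since they carry zero weight. I also need to check that $P$ and $P^{\vee}$ share the same state space and the same stationary distribution, so that both forms are taken with respect to the same $\pi$. I expect the main (mild) obstacle to be this matching of the normalisations $1/n$ and $\binom d2^{-1}$ together with the uniqueness of the intermediate set $Z$. There is no genuine inequality beyond $W(Z)\ge\pi(X)+\pi(Y)$; the substantive spectral information is deferred to the known gap of the $n\leftrightarrow n-1$ down-up walk for DPPs, which will then be combined with Theorem~\ref{thm:chain_comparison} with $\mu=\mu'$.
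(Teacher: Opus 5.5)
Your proposal is correct and follows the paper's proof essentially verbatim. Like the paper, you compare the stationary edge weights $\pi(X)P(X,Y)$ and $\pi(X)P^{\vee}(X,Y)$ pair by pair through the Dirichlet-form identity, use the same two transition formulas through the unique intermediate set $Z=X\cap Y$, and rely on the single inequality $\sum_{X'\supset Z}\pi(X')\ge\pi(X)+\pi(Y)$. Your remarks on the support and on the uniqueness of $Z$ only make explicit what the paper leaves implicit.
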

\begin{proof}
Utilizing the expression for the Dirichlet form in Eq.~\eqref{eq:dirichletDiag},
\begin{align}
\mathcal E_P (f,f) =  \frac{1}{2} \sum_{x,y} (f(x)-f(y))^2 \pi(x) P(x,y),
\end{align}
it suffices to compare the stationary edge weights $\pi(\x)P(\x,\y)$ and $\pi(\x)P^{\vee}(\x,\y)$ for each off-diagonal pair $\x \neq \y$.

For the number-conserving certification walk, a transition between distinct configurations is possible when $|\x\oplus\y|=2$. In terms of the
corresponding subsets, this means $Y=(X\setminus\{j\})\cup\{k\}$ for $j\in X$, $k\notin X$.
For the $n\leftrightarrow n-1$ down-up walk, the same transition is
obtained by first removing $j$ from $X$, giving
$Z=X\setminus\{j\}=X\cap Y$,
and then adding $k$ to return to an $n$-element set.
The two walks differ only in the normalization of this two-step transition. For the down-up walk, the down step selects $j$ with probability $1/n$, while the up step selects an $n$-element set containing $Z$ with probability proportional to its stationary weight. Hence
\begin{align}
    P^{\vee}(X,Y) &= \frac1n \frac{\pi(Y)}{\sum_{X' \supset Z} \pi(X')}
\end{align}
In contrast, the certification walk has
\begin{align}
    P(X,Y) &= \binom{d}{2}\inv \frac{\pi(Y)}{\pi(X)+\pi(Y)}.
\end{align}
Since $\sum_{X'\supset Z}\pi(X')
    \geq \pi(X)+\pi(Y),$ we obtain
\begin{align} 
\pi(X)P(X,Y) &=
    \binom{d}{2}\inv \frac{\pi(X)\pi(Y)}{\pi(X)+\pi(Y)} \nonumber \\
    &\ge \frac{n}{\binom{d}{2}} \cdot\frac1n\frac{\pi(X)\pi(Y)}{\sum_{X' \supset Z} \pi(X')} \\
    &= \frac{n}{\binom{d}{2}}\,\pi(X)P^\vee(X,Y).
\end{align}
Summing over all pairs then gives the claimed Dirichlet-form comparison.
\end{proof}

Since $P$ and $P^{\vee}$ here share the same stationary distribution, Lemma~\ref{lem:NC-dirichlet-compare} together with the standard comparison theorem, Theorem~\ref{thm:chain_comparison}, imply the spectral-gap lower bound,
\begin{align}\label{eq:NC-gap-bound}
    \gamma(P) \ge \frac{n}{\binom{d}{2}} \, \gamma(P^{\vee}),
\end{align}
for the certification walk.

Now we formally define a DPP.
A probability distribution over $X \subseteq {[d] \choose k}$ is a DPP if there exists a positive semidefinite matrix $L$ such that $\pi(X) \propto \det(L_{X,X})$ .
Here, $L$ may in general be a complex Hermitian positive definite matrix. For fixed cardinality, this is often referred to as a \emph{$k$-DPP}.
For a Slater determinant, Eq.~\eqref{eq:DeterminantProb} reads
\begin{align}
    \pi(\x) = \abs{\av{\x|\psi}}^2 = \det(K_{X,X}),
\end{align}
where $K$ is a rank-$n$ projection operator and hence positive semidefinite. $\pi$ is therefore an $n$-DPP with kernel $K$.

We now invoke a general chain of implications from the literature that establishes a spectral-gap bound for $n\leftrightarrow n-1$ down-up walks whose stationary distributions are DPPs.

\begin{theorem}
\label{thm:SLC}
The distribution $\pi(X)=\det (K_{X,X})$, $X\in\binom{[d]}{n}$,
is homogeneous strongly log-concave.
\end{theorem}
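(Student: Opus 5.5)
We will prove Theorem~\ref{thm:SLC} by checking the definition of homogeneous strong log-concavity directly on the generating polynomial
\begin{align}
  g_\pi(z_1,\dots,z_d)=\sum_{X\in\binom{[d]}{n}}\det(K_{X,X})\,z^X .
\end{align}
Homogeneity of degree $n$ is immediate. Strong log-concavity asks that, for every $m\ge 0$ and every choice of indices, the polynomial $\partial_{j_1}\cdots\partial_{j_m} g_\pi$ be either identically zero or log-concave on the positive orthant $\R_{>0}^d$.

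The route we will take is the standard one through real stability: homogeneous real-stable polynomials with nonnegative coefficients are strongly log-concave, by the Brändén--Huh theory of Lorentzian polynomials, or equivalently by the results of Anari--Oveis Gharan--Vinzant. So the work reduces to showing that $g_\pi$ is real stable.

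For real stability we first write $g_\pi$ as a determinant. Factor $K=VV^\dagger$ with $V\in\C^{d\times n}$ having orthonormal columns; this is possible because $K$ is a rank-$n$ projection. Then Cauchy--Binet gives
\begin{align}
  \det\bigl(V^\dagger \,\mathrm{diag}(z)\,V\bigr)
  =\sum_{X\in\binom{[d]}{n}} \det(V_{X,[n]}^\dagger)\det(V_{X,[n]})\,z^X
  =\sum_{X}\det(K_{X,X})\,z^X ,
\end{align}
so that $g_\pi(z)=\det\bigl(\sum_j z_j\, v_j v_j^\dagger\bigr)$, where $v_j^\dagger$ is the $j$th row of $V$.

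Each $v_jv_j^\dagger$ is Hermitian positive semidefinite, and $\sum_j v_jv_j^\dagger=V^\dagger V=\id_n$ is positive definite. By the Borcea--Brändén criterion, a polynomial of the form $\det(\sum_j z_j A_j)$ with every $A_j\succeq 0$ is therefore real stable. As a self-contained check we can argue directly: if every $\mathrm{Im}\, z_j>0$, then $\mathrm{Im}\sum_j z_jA_j=\sum_j (\mathrm{Im}\, z_j)A_j\succeq(\min_j\mathrm{Im}\, z_j)\,\id_n\succ 0$. A matrix whose Hermitian imaginary part is positive definite is nonsingular, since $u^\dagger M u$ has positive imaginary part for every $u\neq 0$.

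With real stability in hand, we verify the derivative condition. Real stability is preserved under $\partial_j$, which follows from Gauss--Lucas in each variable. Nonnegativity of the coefficients and homogeneity are preserved as well. Hence every nonzero derivative $\partial_{j_1}\cdots\partial_{j_m} g_\pi$ is again homogeneous, real stable, and has nonnegative coefficients.

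It then remains to show that a homogeneous real-stable polynomial $p$ with nonnegative coefficients is log-concave on $\R_{>0}^d$. This is the step we expect to be the main obstacle. We will take $x\in\R_{>0}^d$ and any direction $u\in\R^d$, and consider $t\mapsto p(x+tu)$. By real stability, together with a limiting argument in which $x$ is slightly perturbed into the upper half-plane direction, this univariate polynomial has only real roots. A real-rooted univariate polynomial has a log-concave modulus, because it is a product of linear factors, so each restriction of $p$ to a line is log-concave near $t=0$. Since $x$ and $u$ are arbitrary, $\log p$ is concave on the positive orthant wherever $p>0$.

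The care needed here is only to ensure that $p(x)>0$ on the orthant, which holds because the coefficients are nonnegative and $p\not\equiv 0$. Alternatively, we may cite Anari--Oveis Gharan--Vinzant (Log-concave polynomials II), which states exactly that real-stable polynomials with nonnegative coefficients are strongly log-concave, and take this step from the literature.
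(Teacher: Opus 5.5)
Your proof is correct in substance. It has the same skeleton as the paper's proof (real stability of the generating polynomial, then stability implies strong log-concavity), but you prove both links yourself where the paper only cites them. The paper argues as follows. DPPs are strongly Rayleigh (Borcea--Br\"and\'en--Liggett), so $Z_\pi$ is real stable. A homogeneous stable polynomial with nonnegative coefficients is Lorentzian (Br\"and\'en--Huh, Prop.~2.2), and homogeneous Lorentzian polynomials are strongly log-concave (Thm.~2.30).

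You instead prove stability directly from $g_\pi(z)=\det\bigl(\sum_j z_jv_jv_j^\dagger\bigr)$ with $\sum_j v_jv_j^\dagger=\id_n$. This is exactly the mechanism behind the strongly Rayleigh property for projection kernels. You then replace the Lorentzian machinery by the classical hyperbolicity (G\aa rding/Gurvits) argument: $\partial_j$ preserves real stability, and a homogeneous real-stable polynomial with nonnegative coefficients is real-rooted along every real line through the positive orthant, hence log-concave there. Your route is self-contained and shows where the projection structure of $K$ enters and how zero-freeness becomes log-concavity. The paper's route is shorter and places the statement directly in the Lorentzian framework behind the cited down-up walk bound.

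One justification needs repair: the real-rootedness of $t\mapsto p(x+tu)$ for $x\in\R^d_{>0}$ and arbitrary $u\in\R^d$. It does not follow from real stability plus a perturbation of $x$. For non-real $t$ the imaginary part $(\mathrm{Im}\,t)\,u$ need not lie in the positive orthant, so stability says nothing. For real-stable polynomials that are not homogeneous the conclusion can in fact fail: $z_1z_2-1$ is real stable, yet $x=(1/2,1/2)$, $u=(1,-1)$ gives $-t^2-3/4$.

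What makes the step work is homogeneity. If $p$ is homogeneous of degree $m$, then $p(x+tu)=t^m p(u+t^{-1}x)$ for $t\neq 0$. For non-real $s=t^{-1}$, the point $u+sx$ has imaginary part $(\mathrm{Im}\,s)\,x\in\pm\R^d_{>0}$, so $p(u+sx)\neq 0$ by real stability and conjugation symmetry. Hence the roots $s_i$ of $s\mapsto p(u+sx)$ are all real, and $p(x+tu)=p(x)\prod_i(1-ts_i)$. No limiting argument is needed. Since every nonzero derivative of $g_\pi$ is homogeneous, the rest of your log-concavity argument then goes through as written.
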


\begin{proof}
This follows from the fact that DPPs are strongly Rayleigh \cite{borcea2009negative}. The generating
polynomial $Z_{\pi}$ is therefore stable. Moreover, since $Z_{\pi}$ is homogeneous with nonnegative coefficients, the polynomial is Lorentzian
\cite[Proposition~2.2]{branden2020lorentzian}. Homogeneous Lorentzian polynomials are strongly log-concave \cite[Theorem~2.30]{branden2020lorentzian}.
\end{proof}

\begin{theorem}\label{thm:alov}
If $\mu$ is a homogeneous strongly log-concave distribution on $\binom{[d]}{n}$, then the corresponding down-up walk has spectral gap $\gamma(P^\vee)\geq 1/n$.
\end{theorem}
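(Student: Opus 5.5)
The plan is to follow the entropic-independence route already laid out in Appendix~\ref{app:downUp}. Theorem~\ref{thm:alov} is the $k\leftrightarrow k-1$ special case of the same machinery used for the main theorem.

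\textbf{Step 1: log-concavity implies fractional log-concavity.} A homogeneous strongly log-concave (equivalently, Lorentzian) distribution $\mu$ on $\binom{[d]}{n}$ is $\alpha$-fractionally log-concave with $\alpha=1$. By definition, $1$-fractional log-concavity of $Z_\mu$ means that $\log Z_\mu(z_1,\dots,z_d)$ is concave on the positive orthant. Strong log-concavity asserts this for $Z_\mu$ and for all of its partial derivatives, so it is in fact stronger than what we need.

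\textbf{Step 2: apply Theorem~\ref{thm:FLC-up-down}.} We use it with $\alpha=1$, so that $1/\alpha=1$ is an integer, and $k=n$, $l=n-1$. The hypothesis $l\le k-\lceil 1/\alpha\rceil = n-1$ holds with equality. The theorem then gives entropy contraction for the down operator $D_{n\to n-1}$ with
\begin{align}
\kappa=\binom{n-1}{1}\Big/\binom{n}{1}=1-\frac1n ,
\end{align}
that is, $\KL{\nu D}{\mu D}\le \tfrac1n\,\KL{\nu}{\mu}$.

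\textbf{Step 3: convert contraction to a spectral gap.} Proposition~\ref{prop:entropy_contraction_bounds_gap} then yields $\gamma(P^\vee)\ge 1-1/n$, which is even stronger than the claimed $1/n$ whenever $n\ge2$. For $n=1$ the bound $1/n=1$ can be checked directly. In that case the down step lands on the empty set, and the up step samples exactly from $\mu$. Hence $P^\vee$ is the rank-one projector onto the constants and its gap is $1$.

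\textbf{Potential obstacle.} This strength is suspicious: the known sharp result of Anari, Liu, Oveis Gharan and Vinzant for strongly log-concave measures gives a gap of order $1/n$. The issue is the convention for $\kappa$. In Theorem~\ref{thm:FLC-up-down} the "$l$" should be read as the size of the level we go down to. The main theorem's specialization to $k\leftrightarrow k-2$, which produced $\kappa=\binom{k}{2}^{-1}$, corresponds to $\binom{k-l}{1/\alpha}/\binom{k}{1/\alpha}$ with $k-l=2$ and $1/\alpha=2$. By the same convention, the $n\leftrightarrow n-1$ walk with $\alpha=1$ gives $\kappa=\binom{1}{1}/\binom{n}{1}=1/n$.

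With that reading, Proposition~\ref{prop:entropy_contraction_bounds_gap} gives exactly $\gamma(P^\vee)\ge 1/n$, as claimed. So the key steps are:
\begin{itemize}
\item strong log-concavity $\Rightarrow$ $1$-fractional log-concavity;
\item entropy contraction with $\kappa=1/n$ from Theorem~\ref{thm:FLC-up-down};
\item Proposition~\ref{prop:entropy_contraction_bounds_gap}.
\end{itemize}

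The only delicate point is matching the index conventions in Theorem~\ref{thm:FLC-up-down} consistently with their use in the main proof. Alternatively, one can cite the variance version (Anari--Liu--Oveis Gharan--Vinzant, Theorem~1.2 in \emph{Log-concave polynomials II}) directly. That version gives $\Var_{\mu D}[Uf]\le(1-1/n)\Var_\mu[f]$, and this feeds into the last lines of the proof of Proposition~\ref{prop:entropy_contraction_bounds_gap}.
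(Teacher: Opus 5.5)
Your final argument is correct, but it takes a different route from the paper. The paper gives no derivation of this statement: it simply cites Theorem~1.1 of Ref.~\cite{anari2019logII}, the direct spectral-gap result of Anari--Liu--Oveis Gharan--Vinzant. Your closing ``alternatively'' sentence amounts to exactly that.

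Your main route instead specializes the machinery of Appendix~\ref{app:downUp} to $\alpha=1$:
\begin{itemize}
\item strong log-concavity gives $1$-fractional log-concavity;
\item Theorem~\ref{thm:FLC-up-down} with $k=n$ and target level $l=n-1$ gives $\kappa=\binom{1}{1}/\binom{n}{1}=1/n$;
\item Proposition~\ref{prop:entropy_contraction_bounds_gap} linearizes this to $\gamma(P^\vee)\ge 1/n$.
\end{itemize}
The case $n=1$ must be handled separately, because the proposition assumes $\kappa<1$. Your direct check there (the walk is the rank-one projector onto constants, so the gap is $1$) is valid.

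Your route buys uniformity: Appendices~\ref{app:main-proof} and \ref{app:sub-proof} then rest on one and the same chain of cited results. It also delivers an entropy-contraction statement, which is stronger than a spectral gap. The cost is that you invoke the later and heavier entropic-independence theorem to recover what is historically its simpler spectral precursor. The paper's citation is more direct and keeps Appendix~\ref{app:sub-proof} independent of the fractional-log-concavity machinery.

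Separately, clean up Steps~2--3. You had already set $k=n$ and read $l=n-1$ as the target level, so the formula gives $\binom{k-l}{1}/\binom{k}{1}=1/n$ at once. The value $1-1/n$ was an arithmetic slip, not a convention issue. The intermediate claim $\gamma(P^\vee)\ge 1-1/n$ is false in general: the uniform distribution on $\binom{[d]}{n}$ has gap $d/(n(d-n+1))$, which tends to $1/n$ as $d\to\infty$. That claim should be deleted rather than left standing beside the repair.
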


\noindent Theorem~\ref{thm:alov} is Theorem 1.1 of Ref.~\cite{anari2019logII}.
Taken together, Eq.~\eqref{eq:NC-gap-bound}, Theorem~\ref{thm:SLC}, and Theorem~\ref{thm:alov} imply a spectral-gap bound for Slater determinants, 
\begin{align}
    \gamma(P) \ge \frac{n}{\binom{d}{2}} \, \gamma(P^{\vee}) \ge \frac{n}{\binom{d}{2}} \cdot \frac1n = \frac{2}{d(d-1)},
\end{align}
matching the general bound in Theorem~\ref{thm:main}.

%==========================
\section{Numerical methods}\label{app:numerics}
%==========================

Every point in Fig.~\ref{fig:gap} is obtained by exact computation: the
Born distribution $\pi$ of the target is evaluated on its full support, the certification walk \eqref{eq:certification_walk}
is assembled as an explicit matrix on that support, and its spectral gap
is extracted from the spectrum of that matrix. No sampling or Monte Carlo
estimation enters at any stage, so the only errors are those of
floating-point arithmetic. The cost of exact enumeration grows
exponentially in $d$,
since the support has up to $2^{d-1}$ elements; we take $d\le14$, where
that is $8192$.

\subsection{Born distributions}

For a target specified by its covariance matrix $\Gamma$ we generate
$\pi$ by the chain rule for conditional probabilities, measuring modes
$1,2,\dots,d$ in turn. Mode $k$ yields outcome $x_k$ with conditional
probability $p=(1+s\,\Gamma_{k,d+k})/2$, $s=(-1)^{x_k}$, after which the
remaining $d-1$ modes are again Gaussian with (see, e.g. \cite{bravyi2012disorder})
\begin{align}\label{eq:numerics-update}
\Gamma'_{jl}=\Gamma_{jl}
-\frac{\Gamma_{jm_{1}}\Gamma_{lm_{2}}-\Gamma_{jm_{2}}\Gamma_{lm_{1}}}
       {s+\Gamma_{m_{1}m_{2}}},
\end{align}
where $m_{1}=k$, $m_{2}=d+k$, and $j,l$ run over the remaining $2d-2$
Majorana indices. Recursing over
both outcomes enumerates the support
$V$ together with the exact probabilities $\pi(\x)$ in $O(\abs{V}d^{2})$
time. This is preferable to evaluating the Pfaffian formula
\eqref{eq:PfaffianProb} independently at each of the $2^{d}$ bitstrings,
and the two were checked against each other (Sec.~\ref{app:numerics-checks}).

A branch is discarded when its \emph{conditional} probability falls below
$10^{-12}$. Pruning on the conditional rather than the cumulative
probability retains genuinely small $\pi(\x)$ that arise as a product of
many moderate factors, while removing branches that are nonzero only at
the level of machine precision, in particular the leakage into the wrong
parity sector. A final filter projects onto the dominant parity sector and
verifies that the discarded probability is below $10^{-10}$; the retained
distribution is checked to be normalized to within $10^{-9}$.

For the spin chains of Sec.~\ref{app:numerics-families} the ground state is
computed as a $2^{d}$-dimensional state vector by exact diagonalization,
and $\pi$ is read off from its amplitudes with a threshold of $10^{-15}$.
Since these
states are Gaussian, this provides an independent route to $\pi$ that does
not use the covariance formalism at all.

\subsection{Spectral gap of the certification walk}

The transition matrix $P$ of Eq.~\eqref{eq:certification_walk} is
reversible with respect to $\pi$, so we work with the symmetric kernel
$A = S^{\,\hspace{-1pt}\frac{1}{2}}  P S^{\,\text{-}\hspace{-1pt}\frac{1}{2}}$, $S=\mathrm{diag}(\pi)$, with off-diagonal entries
\begin{align}\label{eq:symmetrized-kernel}
A_{\x\y}=\binom{d}{2}^{\!\!-1}\frac{\sqrt{\pi(\x)\pi(\y)}}{\pi(\x)+\pi(\y)},
\qquad \abs{\x\oplus\y}=2 ,
\end{align}
and $A_{\x\x}=\mel{\x}{P}{\x}$, so that $A\sqrt{\pi}=\sqrt{\pi}$. Since
$L$ and $P$ are isospectral (Appendix~\ref{app:verify-op-is-markov}),
$\gamma(L)=1-\lambda_{2}(A)$.
The similarity transformation leaves the spectrum unchanged while making it
real by construction, so $\lambda_{2}$ can be extracted with a symmetric
eigensolver, and by Lanczos rather than Arnoldi iteration in the sparse
regime. Accuracy is the main gain: eigenvalues of a symmetric matrix are
perfectly conditioned, whereas for the non-symmetric $P$ the error in
$\lambda_{2}$ would be amplified by the eigenvector condition number, of
order $\sqrt{\pi_{\max}/\pi_{\min}}\approx2\times10^{3}$ at $d=14$.

For $\abs{V}\le1500$ the full spectrum of $A$ is computed by dense
diagonalization, and above that cutoff its two largest eigenvalues by
Lanczos iteration started from the Perron vector $\sqrt{\pi}$. Every run
asserts $\abs{\lambda_{1}-1}<10^{-8}$.

\subsection{State families and model Hamiltonians}
\label{app:numerics-families}

All families use $d\in\{4,6,8,10,12,14\}$; the $d=2m$ product families
require even $d$, and half filling requires even $d$ as well.

\paragraph*{Haar-random Gaussian states.}
$\Gamma=OJO^{\T}$ with $O$ drawn from the Haar measure on $\mathrm{O}(2d)$
(SciPy \texttt{ortho\_group}) and $J$ as in Eq.~\eqref{eq:symplectic_form}.
Since $\Pf(OJO^{\T})=\det(O)\,\Pf (J)$ by \eqref{eq:pfaffian-basis-change}
and the sign of $\Pf(\Gamma)$ is the
fermionic parity, the half of the samples drawn from the $\det(O)=-1$
component of $\mathrm{O}(2d)$ are odd-parity states: both parity sectors
are sampled with equal weight. The support is the whole sector,
$\abs{V}=2^{d-1}$.

\paragraph*{Haar-random Slater determinants.}
The orbital matrix $U\in\C^{d\times n}$ at half filling $n=d/2$ is the
first $n$ columns of a Haar-random unitary in $\U(d)$ (SciPy
\texttt{unitary\_group}), and the Born distribution is
Eq.~\eqref{eq:DeterminantProb} with the 1-RDM $K=UU\dgg$.
The support is the fixed-particle-number sector,
$\abs{V}=\binom{d}{d/2}$.

\paragraph*{XX chain.}
The number-conserving hopping model with open boundaries,
\begin{align}\label{eq:xx-chain}
H_{\mathrm{XX}}=-\sum_{j=1}^{d-1}\bigl(a\dgg_{j}a_{j+1}+\mathrm{h.c.}\bigr),
\end{align}
whose ground state at filling $n=d/2$ is the Slater determinant built from
the $n$ lowest eigenvectors of the $d\times d$ hopping matrix.

\paragraph*{XY and transverse-field Ising chains.}
The open spin chain
\begin{align}\label{eq:xy-chain}
H_{\mathrm{XY}}=-\sum_{j=1}^{d-1}\Bigl[
 \tfrac{1+\gamma_{xy}}{2}X_{j}X_{j+1}
+\tfrac{1-\gamma_{xy}}{2}Y_{j}Y_{j+1}\Bigr]
-g\sum_{j=1}^{d}Z_{j},
\end{align}
which the Jordan--Wigner transformation maps to a quadratic Majorana
Hamiltonian, so its eigenstates are Gaussian. Figure~\ref{fig:gap} shows
the critical Ising point $(\gamma_{xy},g)=(1,1)$ and the anisotropic case
$(\gamma_{xy},g)=(1/2,1)$. Ground states are obtained by Lanczos
iteration in the full $2^{d}$-dimensional space and are nondegenerate for
the plotted parameters (the many-body gap $E_{1}-E_{0}$ stays above $0.11$
up to $d=14$); each is verified to have definite fermionic parity before
its Born distribution is used.

\paragraph*{Worst-case families.}
The two families of Proposition~\ref{prop:tight} factorize over the $m=d/2$
mode pairs, so $\pi$ is assembled as a product over pairs rather than
through the recursion above: $\bigotimes_{r}(\cos\theta\ket{00}
+\sin\theta\ket{11})$ at $\theta\in\{\pi/4,\pi/6,\pi/12\}$, and
$\bigotimes_{r}(\ket{01}+\ket{10})/\sqrt2$. Here $\abs{V}=2^{d/2}$.

\subsection{Sampling}

Each random ensemble contributes $60$ independent instances at every $d$,
$720$ in total, with no instance discarded. Including the
deterministic chain and worst-case families, Fig.~\ref{fig:gap} summarizes
$774$ exactly computed gaps. Table~\ref{tab:gap-summary} lists the
resulting medians and minima.

\begin{table}[t]
\caption{Spectral gap $\gamma(L)$ over $60$ random instances per $d$ and
per ensemble, compared with the bound \eqref{eq:gapbound}. Every instance
lies above the bound.}
\label{tab:gap-summary}
\footnotesize
\begin{tabular}{@{}lccccc@{}}
\toprule
 & \multicolumn{2}{c}{random fGS} & \multicolumn{2}{c}{random Slater} & \\
\cmidrule(lr){2-3}\cmidrule(lr){4-5}
$d$ & median & min & median & min & $2/d(d-1)$ \\
\midrule
 4 & 0.3015 & 0.2384 & 0.2333 & 0.1795 & 0.1667 \\
 6 & 0.1655 & 0.1219 & 0.1215 & 0.0873 & 0.0667 \\
 8 & 0.1251 & 0.1010 & 0.0780 & 0.0557 & 0.0357 \\
10 & 0.0970 & 0.0792 & 0.0587 & 0.0415 & 0.0222 \\
12 & 0.0836 & 0.0717 & 0.0458 & 0.0323 & 0.0152 \\
14 & 0.0717 & 0.0649 & 0.0395 & 0.0322 & 0.0110 \\
\bottomrule
\end{tabular}
\end{table}

Least-squares fits of $\gamma\propto d^{-\alpha}$ over $d\ge6$ give
$\alpha=0.99$ for the median of the Haar-random Gaussian states and
$\alpha=1.34$ for the median of the random Slater determinants, against
$\alpha=1.82$ for the critical Ising chain, $1.88$ for the XY chain,
$1.79$ for the XX chain, and the exact $\gamma=2/d(d-1)$ of the worst-case
families.

\subsection{Consistency checks}\label{app:numerics-checks}

The implementation is cross-validated at small $d$: the Born distributions
produced by the covariance recursion agree with exact diagonalization of
the corresponding Jordan--Wigner Hamiltonian and with the Pfaffian formula
\eqref{eq:PfaffianProb} evaluated bitstring by bitstring; building the
verification operator $\Omega$ of Algorithm~\ref{algo} explicitly confirms
$\Omega\ket{\psi}=\ket{\psi}$ and $\gamma(\Omega)=\gamma(L)/2$; and the
worst-case families reproduce the closed-form spectrum of
Proposition~\ref{prop:tight}. All agree to $10^{-8}$ or better. Over all
$774$ instances the smallest value of
$\gamma(L)\big/\bigl[2/d(d-1)\bigr]$ is $1$ to within $10^{-13}$, so no
instance violates Theorem~\ref{thm:main}. The computation is implemented
in Python with NumPy and SciPy.

\end{document}

\typeout{get arXiv to do 4 passes: Label(s) may have changed. Rerun}